\definecolor{lightgray}{gray}{0.90}
\newcommand{\name}[1]{\textsc{#1}}
\newcommand{\maxcut}{\name{Max-Cut}}
\newcommand{\maxcutatlb}{\name{Max-Cut Above Tight Lower Bound}}
\newcommand{\mcatlb}{\name{Max-Cut ATLB}}
\newcommand{\qcol}{\name{Max $q$-Colorable Subgraph}} 
\newcommand{\maxdag}{\name{Oriented Max Acyclic Digraph}}
\newcommand{\lamex}{$\lambda$-extendible}
\newcommand{\stronglamex}{strongly $\lambda$-extendible}
\newcommand{\pt}{Poljak-Turz\'{i}k}
\newcommand{\PandT}{Poljak and Turz{\'{i}}k}
\newcommand{\AbovePT}{\name{Above Poljak-Turz\'{\i}k ($\Pi$)}}
\newcommand{\simAbovePT}{\name{Structured Above Poljak-Turz\'{\i}k ($\Pi$)}}
\newcommand{\APT}{APT($\Pi$)}
\newcommand{\YES}{\name{YES}}
\newcommand{\NO}{\name{NO}}
\newcommand{\FPT}{\textsf{FPT}}
\newcommand{\EEbound}{Edwards-Erd\H{o}s bound}
\newcommand{\G}{\mathcal{G}}
\newcommand*{\c@thmcor}{\c@theorem}
\newcommand*{\p@thmcor}{\p@theorem}
\newcommand*{\c@thmdef}{\c@theorem}
\newcommand*{\p@thmdef}{\p@theorem}
\newcommand*{\c@thmlem}{\c@theorem}
\newcommand*{\p@thmlem}{\p@theorem}
\newcommand*{\c@thmobs}{\c@theorem}
\newcommand*{\p@thmobs}{\p@theorem}
\theoremstyle{plain}
\newtheorem{cor}[thmcor]{Corollary}
\newtheorem{lem}[thmlem]{Lemma}
\theoremstyle{definition}
\newtheorem{defn}[thmdef]{Definition}
\newtheorem{obs}[thmobs]{Observation}
\newcommand{\tG}{\tilde{G}}
\newcommand{\tS}{\tilde{S}}
\newcommand{\tk}{\tilde{k}}
\newcommand{\tH}{\tilde{H}}
\newcommand{\oH}{\overline{H}}
\newcommand{\ok}{\overline{k}}
\newcommand{\inappendix}{\textup{\textbf{[$\star$]}}}
\newcommand{\apphead}[1]{\noindent\textcolor{darkgray}{$\blacktriangleright$}\nobreakspace\textsf{\textbf{{\autoref{#1}.}}}}
\newcommand{\appbody}[1]{\textit{#1}}
\newcommand{\tab}{\textsl{Tab}}
\newenvironment{decnamedefn}[3]{
\par\addvspace{0.4\baselineskip}\fbox{%
\begin{minipage}[t]{0.9\linewidth}%
\begin{tabular}{p{15mm}p{107mm}}
    \multicolumn{2}{l}{\name{#1}} \\
        \textsl{Input:} & {#2} \\ %\hspace{1.2cm} \= \parbox[t]{10.7cm}{#2} \\
        \textsl{Question:} & {#3} \\%            \> \parbox[t]{10.7cm}{#3} 
   \end{tabular}
\end{minipage}}\par\addvspace{0.4\baselineskip}
}
\newenvironment{parnamedefn}[4]{
\par\addvspace{0.4\baselineskip}\fbox{%
\begin{minipage}[t]{0.9\linewidth}%
\begin{tabular}{p{15mm}p{107mm}}
    \multicolumn{2}{l}{\name{#1}} \\
        \textsl{Input:} & {#2} \\ %\hspace{1.2cm} \= \parbox[t]{10.7cm}{#2} \\
        \textsl{Parameter:} & {#3} \\ %\hspace{1.2cm} \= \parbox[t]{10.7cm}{#2} \\
        \textsl{Question:} & {#4} \\%            \> \parbox[t]{10.7cm}{#3} 
   \end{tabular}
\end{minipage}}\par\addvspace{0.4\baselineskip}
}
\newtheorem{rrule}{Rule}
\title{Beyond Max-Cut: $\lambda$-Extendible Properties
  Parameterized Above the \pt{} Bound}
\author[1]{Matthias Mnich}
\author[2]{Geevarghese Philip\footnote{Supported by the
    Indo-German Max Planck Center for Computer Science(IMPECS).}} 
\author[3]{Saket Saurabh\footnote{Part of this work was done while
    visiting MPII supported by IMPECS.}}
\author[4]{Ond\v{r}ej Such\'{y}\footnote{Part of this work was
    done while with the Saarland University, supported by the DFG
    Cluster of Excellence %on Multimodal Computing and Interaction
    MMCI and the DFG project DARE (GU 1023/1-2), and while
    visiting IMSC Chennai supported by IMPECS.}}
\affil[1]{Cluster of Excellence,
  Saarbr\"{u}cken, Germany.
  \texttt{m.mnich@mmci.uni-saarland.de}}
\affil[2]{MPII, Saarbr\"{u}cken, Germany.
  \texttt{gphilip@mpi-inf.mpg.de}}
\affil[3]{The Institute of Mathematical Sciences,
  Chennai, India.
  \texttt{saket@imsc.res.in}}
\affil[4]{Technische Universit\"{a}t Berlin,
  Berlin, Germany.
  \texttt{ondrej.suchy@tu-berlin.de}}
\titlerunning{Beyond Max-Cut: Parameterizing above the \pt{}
  Bound} %optional, in case that the title is too long; the
\authorrunning{Matthias Mnich, Geevarghese Philip, Saket Saurabh
  and Ond\v{r}ej Such\'{y}}
\keywords{Algorithms and data structures; fixed-parameter algorithms; bipartite graphs; acyclic graphs.}% mandatory: Please provide 1-5 keywords
\begin{document}

\maketitle

\begin{abstract}
  \PandT{} (Discrete Math. 1986) introduced the notion of \lamex{}
  properties of graphs as a generalization of the property of
  being bipartite. They showed that for any $0<\lambda<1$ and
  \lamex{} property $\Pi$, any connected graph $G$ on $n$ vertices
  and $m$ edges contains a spanning subgraph $H\in\Pi$ with at
  least $\lambda{}m+\frac{1-\lambda}{2}(n-1)$ edges. The
  property of being bipartite is \lamex{} for $\lambda=1/2$, and
  thus the \pt{} bound generalizes the well-known
  Edwards-Erd\H{o}s bound for \maxcut{}.

  We define a variant, namely \emph{strong}
  $\lambda$-extendibility, to which the \pt{} bound applies. For a
  \stronglamex{} graph property $\Pi$, we define the parameterized
  \AbovePT{} problem as follows: Given a connected graph \(G\) on
  $n$ vertices and $m$ edges and an integer parameter $k$, does
  there exist a spanning subgraph $H$ of $G$ such that $H\in\Pi$
  and $H$ has at least $\lambda{}m+\frac{1-\lambda}{2}(n-1)+k$
  edges? The parameter is $k$, the surplus over the number of
  edges guaranteed by the \pt{} bound.  

  We consider properties $\Pi$ for which the \AbovePT{} problem is
  fixed-parameter tractable (FPT) on graphs which are $O(k)$
  vertices away from being a graph in which each block is a
  clique.  We show that for all such properties, \AbovePT{} is FPT
  for all $0<\lambda<1$.  Our results hold for properties of
  oriented graphs and graphs with edge labels.
  
  Our results generalize the recent result of Crowston et
  al. (ICALP 2012) on \maxcut{} parameterized above the
  \EEbound{}, and yield \FPT{} algorithms for several graph
  problems parameterized above lower bounds.  For instance, we get
  that the above-guarantee \qcol{} problem is \FPT{}. Our results
  also imply that the parameterized above-guarantee \maxdag{}
  problem % that
  % the parameterized above-guarantee problem of finding a maximum
  % acyclic subgraph of an oriented graph 
  is \FPT{}, thus solving an open question of Raman and Saurabh
  (Theor. Comput. Sci. 2006). 
\end{abstract}

% \paragraph{Keywords.}
% Algorithms and data structures, fixed-parameter tractable.

\section{Introduction}
\label{sec:introduction}

A number of interesting graph problems can be phrased as follows:
Given a graph $G$ as input, find a subgraph $H$ of $G$ with the
largest number of edges such that $H$ satisfies a specified
property $\Pi$. Prominent among these is the \maxcut{} problem,
which asks for a \emph{bipartite} subgraph with the maximum number
of edges. A \emph{cut} of a graph $G$ is a partition of the vertex
set of \(G\) into two parts, and the \emph{size} of the cut is the
number of edges which \emph{cross the cut}; that is, those which
have their end points in distinct parts of the partition.
\begin{decnamedefn}%
  {\maxcut{}}%
  {A graph $G$ and an integer $k$.}%
  {Does $G$ have a cut of size at least $k$?}
\end{decnamedefn}

The \maxcut{} problem is among Karp's original list of 21
\textsf{NP}-complete problems~\cite{Karp1972}, and it has been extensively
investigated from the point of view of various algorithmic
paradigms. Thus, for example, Goemans and Williamson
showed~\cite{GoemansWilliamson1995} that the problem can be
approximated in polynomial-time within a multiplicative factor of
roughly $0.878$, and Khot et al. showed that this is the best
possible assuming the Unique Games
Conjecture~\cite{KhotKindlerMosselODonnell2007}.

Our focus in this work is on the \emph{parameterized} complexity
of a generalization of the \maxcut{} problem. The central idea in
the parameterized complexity
analysis~\cite{DowneyFellows1999,FlumGroheBook} of \textsf{NP}-hard
problems is to associate a \emph{parameter} \(k\) with each input
instance of size \(n\), and then to ask whether the resulting
\emph{parameterized problem} can be solved in time \(f(k)\cdot
n^{c}\) where \(c\) is a constant and \(f\) is some computable
function. Parameterized problems which can be solved within such
time bounds are said to be fixed-parameter tractable (\FPT{}).

The \emph{standard parameterization} of the \maxcut{} problem sets
the parameter to be the size \(k\) of the cut being sought. This
turns out to be not very interesting for the following reason: Let
$m$ be the number of edges in the input graph \(G\). By an early
result of Erd\H{o}s~\cite{Erdos1965}, we know that every graph
with \(m\) edges contains a cut of size at least $m/2$. Therefore,
if $k\le m/2$ then we can immediately answer \YES{}. In the
remaining case $k>m/2$, and there are less than $2k$ edges in the
input graph. It follows from this bound on the size of the input
that any algorithm---even a brute-force method---which solves the
problem runs in \FPT{} time on this instance.

The best lower bound known on the size of a largest cut for
connected loop-less graphs on $n$ vertices and $m$ edges is
$\frac{m}{2}+\frac{n-1}{4}$, as proved by
Edwards~\cite{Edwards1973,Edwards1975}. This is called the
\emph{\EEbound{}}, and it is the best possible in the sense that
it is tight for an infinite family of graphs, for example, the
class of cliques of odd order $n$. A more interesting
parameterization of \maxcut{} is, therefore, the following:

\begin{parnamedefn}%
{\maxcutatlb{} (\mcatlb{})}%
{A connected graph $G$, and an integer $k$.}%
{$k$}%
{Does $G$ have a cut of size at least $\frac{m}{2}+\frac{n-1}{4}+k$?}%
\end{parnamedefn}

In the work which introduced the notion of ``above-guarantee''
parameterization, Mahajan and Raman~\cite{MahajanRaman1999}
showed % ---\emph{inter alia},
% and using the \EEbound{}---
that the problem of asking for a cut of
size at least $\frac{m}{2}+k$ is \FPT{} parameterized by $k$, and
stated the fixed-parameter tractability of \mcatlb{} as an open
problem. This question was resolved quite recently by Crowston et
al.~\cite{CrowstonEtAl2012}, who showed that the problem is in
fact \FPT{}.

We generalize the result of Crowston et al. by extending it to
apply to a special case of the so-called \emph{\lamex{}
  properties}. Roughly stated\footnote{See \autoref{sec:prelims}
  and \autoref{sec:definitions} for the definitions of various
  terms used in this section.\label{fn:seeprelims}}, for a fixed
\(0 < \lambda < 1\) a graph property $\Pi$ is said to be \lamex{}
if: Given a graph graph $G=(V,E)\in\Pi$, an ``extra'' edge $uv$
not in $G$, and any set $F$ of ``extra'' edges each of which has
one end point in $\{u,v\}$ and the other in $V$, there exists a
graph $H\in\Pi$ which contains (i) all of $G$, (ii) the edge $uv$,
and (iii) at least a $\lambda$ fraction of the edges in $F$.
The % precise
% notion---which involves a couple of ``regularity'' conditions as
% well as the option of assigning positive weights to the edges---
notion was introduced by \PandT{} who
showed~\cite{PoljakTurzik1986} that for any \lamex{} property
$\Pi$ and edge-weighting function $c:E\to\mathbb{R}^{+}$, any
connected graph $G=(V,E)$ contains a spanning subgraph
$H=(V,F)\in\Pi$ such that $c(F)\ge\lambda\cdot
c(E)+\frac{1-\lambda}{2}c(T)$. Here $c(X)$ denotes the total
weight of all the edges in $X$, and $T$ is the set of edges in a
minimum-weight spanning tree of $G$. It is not difficult to see
that the property of being bipartite is \lamex{} for $\lambda =
1/2$, and so---once we assign unit weights to all edges---the
\PandT{} result implies the \EEbound{}.  Other examples of
\lamex{} properties---with different values of $\lambda$---include
$q$-colorability and acyclicity in oriented graphs.

In this work we study the natural above-guarantee parameterized
problem for \lamex{} properties $\Pi$, which is: given a connected
graph $G=(V,E)$ and an integer $k$ as input, does $G$ contain a
spanning subgraph $H=(V,F)\in\Pi$ such that $c(F)=\lambda\cdot
c(E)+\frac{1-\lambda}{2}c(T)+k$? To derive a generic \FPT{}
algorithm for this class of problems, we use the ``reduction''
rules of Crowston et al.  To make these rules work, however, we
need to make a couple of concessions. Firstly, we slightly modify
the notion of lambda extendibility; we define a (potentially)
stronger notion which we name \emph{strong
  $\lambda$-extendibility}. Every \stronglamex{} property is also
\lamex{} by definition, and so the \pt{} bound applies to
\stronglamex{} properties as well. Observe that for each way of
assigning edge-weights, the \PandT{} result yields a (potentially)
different lower bound on the weight of the subgraph.  Following
the spirit of the question posed by Mahajan and Raman and solved
by Crowston et al., we choose from among these the lower bound
implied by the unit-edge-weighted case. This is our second
simplification, and for this ``unweighted'' case the \PandT{}
result becomes: for any \stronglamex{} property $\Pi$, any
connected graph $G=(V,E)$ contains a spanning subgraph
$H=(V,F)\in\Pi$ such that $|F|=\lambda\cdot
|E|+\frac{1-\lambda}{2}(|V|-1)$.

The central problem which we discuss in this work is thus the
following; here $0<\lambda<1$, and $\Pi$ is an arbitrary---but
fixed---\stronglamex{} property:

\begin{parnamedefn}%
  {\AbovePT{} (\APT{})}%
  {A connected graph $G=(V,E)$ and an integer $k$.}%
  {$k$}%
  {Is there a spanning subgraph $H=(V,F)\in\Pi$ of $G$ such that\newline  $|F| \geq \lambda |E| + \frac{1-\lambda}{2}(|V|-1)+k$?}%
\end{parnamedefn}

\subsection{Our Results and their
  Implications} \label{sec:mainresults} We show that that the
\AbovePT{} problem is \FPT{} for every \stronglamex{} property
$\Pi$ for which \APT{} is \FPT{} on a class of ``almost-forests of
cliques''. Informally, %\textsuperscript{\ref{fn:seeprelims}}, 
this is a class of graphs which are a small number ($O(k)$) of vertices
away from being a graph in which each block is a clique.  This
requirement is satisfied by the properties underlying a number of
interesting problems, including \maxcut{}, \qcol{}, and \maxdag{}.

The following is the main result of this paper.

\begin{theorem}\label{thm:simple_graphs_theorem}\label{thm:main}
  The \AbovePT{} problem is fixed-parameter tractable for a \lamex{} property $\Pi$ of graphs if
\begin{itemize}
 \item $\Pi$ is \stronglamex{} and
\item \AbovePT{} is \FPT{} on almost-forests of cliques.
\end{itemize}
   This also holds for such properties of oriented
  and/or labelled graphs.
\end{theorem}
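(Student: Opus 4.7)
The plan is to use the strong $\lambda$-extendibility of $\Pi$ to design a family of polynomial-time reduction rules that, when exhaustively applied to a connected instance $(G,k)$, either directly certify a \YES{} answer or reduce $(G,k)$ to an equivalent instance $(G',k')$ with $k' \le k$ in which $G'$ is an almost-forest of cliques in the sense of \autoref{sec:mainresults}. Once we are in that regime, we invoke the second hypothesis of the theorem as a black box to decide the reduced instance in \FPT{} time. Thus the theorem reduces to proving a structural dichotomy: any sufficiently reduced \APT{} instance is either a trivial \YES{}-instance or has the desired almost-forest-of-cliques shape.

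To build the reduction rules I would follow the template of Crowston et al., but phrased purely in terms of $\Pi$. The guiding principle is that whenever $G$ contains a local substructure where the \pt{} lower bound is strictly slack, namely a small attachment $S$ to the rest of $G$ through a bounded interface such that any $\Pi$-subgraph of $G[S\cup \text{interface}]$ can be extended to a $\Pi$-subgraph of $G$ picking up more than $\lambda |E(S)| + \tfrac{1-\lambda}{2}(|V(S)|-1)$ edges inside, we either detach $S$, decrementing $k$ by the surplus, or, if the cumulative surplus is already at least $k$, conclude \YES{}. The hypothesis that $\Pi$ is \emph{strongly} $\lambda$-extendible is exactly what makes this local surgery safe: it lets us glue together a $\Pi$-subgraph chosen on $S$ with any $\Pi$-subgraph chosen on $G-S$ along the interface, regardless of the boundary choice. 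Typical cases to handle include pendant vertices, low-degree vertices with interface of size one, and cut-vertices whose block is not a clique; in each case a short case analysis comparing the best local $\Pi$-subgraph against the \pt{} bound on the subinstance produces a valid surplus.

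After exhaustive reduction, the main structural claim is that either the total harvested surplus is already at least $k$ (so we answer \YES{}), or every block of the remaining graph, apart from an explicitly identifiable set $W$ of $O(k)$ vertices, is a clique; equivalently, $G - W$ is a forest of cliques. This is the analogue of the key lemma of Crowston et al., and proving it generically is where I expect the main obstacle. The subtlety is that in the weighted bipartite setting used for \maxcut{}, slackness of the \EEbound{} is controlled by explicit combinatorial counts on cuts, whereas here it must be inferred abstractly from strong $\lambda$-extendibility together with the \pt{} inequality applied block by block. My plan is to assign to each vertex an \emph{excess} equal to its contribution to the gap between the \pt{} bound and the true optimum, show via the reduction rules that vertices in non-clique blocks carry a strictly positive excess bounded below by a constant depending only on $\lambda$, and conclude that at most $O(k)$ such vertices survive; collecting them into $W$ yields the required almost-forest structure, at which point the assumed \FPT{} algorithm finishes the job. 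The extension to oriented and edge-labelled graphs is obtained by interpreting each step of the argument with edges replaced by oriented/labelled edges and $\Pi$ acting on those objects, which the definitions of strong $\lambda$-extendibility and \APT{} already accommodate.
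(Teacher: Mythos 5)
Your overall template (Crowston-style rules, a Win/Win dichotomy, then the assumed algorithm on almost-forests of cliques) is the paper's template, but there is a genuine gap in how you close the loop: you claim the rules produce an \emph{equivalent} instance $(G',k')$, with $k$ decremented ``by the surplus,'' and you then plan to run the black-box algorithm on that reduced instance. With only strong $\lambda$-extendibility in hand you cannot compute the exact surplus harvested from a detached piece --- determining the optimal number of edges of a $\Pi$-subgraph of the detached part is in general as hard as the original problem --- so any polynomial-time rule can only decrement $k$ by a provable \emph{lower bound} on the gain. That yields a one-directional implication (if the reduced instance is a \YES{} instance, so is the original), not equivalence; the paper is explicit that the converse may fail, and calls such rules merely ``safe.'' Consequently, when the exhaustive reduction does not certify \YES{}, deciding the reduced instance with the black box does not decide $(G,k)$, and your final step does not go through as stated.

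The paper's way around this is the part your plan is missing. The rules are used only to accumulate a vertex set $S$ (and to certify \YES{} when the parameter drops to $\le 0$, via safeness chained back from a trivially-\YES{} end state); if instead the residual parameter stays positive, the per-rule accounting gives $|S|\le 6k/(1-\lambda)$, and --- crucially, by the structural lemma inherited from Crowston et al.\ (\autoref{lem:for_of_cliq}) --- the \emph{original} graph $G$ minus $S$ is a forest of cliques, even though some vertices were deleted rather than placed in $S$ during the reduction. One then feeds the \emph{original} instance $(G,k)$ together with the modulator $S$ to the assumed algorithm for \simAbovePT{} (\autoref{def:simplicity}); no equivalence between original and reduced instances is ever needed (\autoref{lem:yes_or_structure}). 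In your write-up the set $W$ you extract is a modulator for the reduced graph, and you have no argument that it (or anything of size $O(k)$) modulates the original graph, nor that the reduced instance is equivalent; fixing this requires either the paper's ``safe rules plus modulator-for-the-original-graph'' argument or an exact-surplus computation that is not available for general $\Pi$. Your gluing intuition (strong $\lambda$-subgraph extension lets one combine $\Pi$-subgraphs across an interface keeping a $\lambda$-fraction of the crossing edges) is sound and corresponds to the paper's joining lemmas, but the safeness of the rule that handles the two-vertex separator case additionally needs a delicate structural analysis (and small gadget computations on $K_{2,1}$ and $K_4$ minus an edge) that your ``short case analysis'' does not yet supply.
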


We prove \autoref{thm:main} using the classical ``Win/Win''
approach of parameterized complexity. To wit: given an instance
$(G,k)$ of a \stronglamex{} property $\Pi$, in
polynomial time we either (i) show that $(G,k)$ is a yes instance,
or (ii) find a vertex subset $S$ of $G$ of size at most
$6k/(1-\lambda)$ such deleting $S$ from $G$ leaves a ``forest of
cliques''. To prove this we use the ``reduction'' rules used by
Crowston et al~\cite{CrowstonEtAl2012} in the context of
\maxcut{}.

Our main technical contribution is the proof that these rules are
sufficient to show that \emph{every \NO{} instance} of \APT{} is
at a vertex-deletion distance of $O(k)$ from a forest of
cliques. This proof requires several new ideas: a result which
holds for \emph{all} \stronglamex{} properties $\Pi$ is a
significant step forward from \maxcut{}. Our main result unifies
and generalizes known results and implies new ones. Among these
are \maxcut{}, finding a $q$-colorable subgraph of the maximum
size, and finding a maximum-size acyclic subdigraph in an oriented
graph.  Using our theorem we also get a linear \emph{vertex}
kernel for maximum acyclic subdigraph, complementing the quadratic
\emph{arc} kernel by Gutin et al.~\cite{GutinEtAl2011}.

% It turns out that this result implies FPT algorithms for a number of interesting
% problems, some of which have been left open for quite some time.

%  In particular -- the maxcut
% result, the earlier result of Gutin and others on digraphs where
% they used heavy lifting, the Mahajan-Raman open question,
% partition/coloring problems (which?).

\paragraph*{Related Work}
The notion of parameterizing above (or below) some kind of
``guaranteed'' values---lower and upper bounds, respectively---was
introduced by Mahajan and Raman~\cite{MahajanRaman1999}. It has
proven to be a fertile area of research, and \maxcut{} is now just
one of a host of interesting problems for which we now have \FPT{}
results for such questions~\cite{RamanSaurabh2006,MahajanRamanSikdar2009,GutinEtAl2011,AlonGutinKimSzeiderYeo2011,CrowstonFellowsGutinJonesRosamondThomasseYeo2011,CrowstonEtAl2012,CrowstonGutinJonesRamanSaurabh2012,BollobasScott2002}.

% The \PandT{} bound is tight for the class of oriented digraphs, as is observed by considering the ``windmill graphs'':
% let $t\in\mathbb N$, and let $G$ be the oriented digraph that consists of $t$ edge-disjoint directed
% triangles that all intersect in exactly one common vertex. Then $G$ has
% exactly $m = 3t$ arcs, and $n = 2t + 1$ vertices. Any maximum acyclic subgraph
% $H$ of $G$ contains exactly two arcs from each of the $t$ triangles; thus, the
% size of $H$ is exactly $2t = (3t)/2 + (2t)/4 = m/2 + (n -
% 1)/4$.

%This shows that there is an infinite family of connected oriented graphs for which the lower bound on the size of a maximum acyclic subdigraph is tight.

\subsection{Preliminaries}\label{sec:prelims}
We use ``graph'' to denote simple graphs without self-loops,
directions, or labels, and use standard graph terminology used by
Diestel~\cite{Diestel} for the terms which we do not explicitly
define.  We use $V(G)$ and $E(G)$ to denote the vertex and edge
sets of graph~$G$, respectively. For $S\subseteq V(G)$, we use (i)
$G[S]$ to denote the subgraph of $G$ induced by the set $S$, (ii)
$G\setminus S$ to denote $G[V(G)\setminus S]$, (iii) $\delta(S)$
to denote the set of edges in $G$ which have exactly one end-point
in $S$, and (iv) $e_{G}(S)$ to denote $|E(G[S])|$; we omit the
subscript $G$ if it is clear from the context.  A \emph{clique} in
a graph $G$ is a set of vertices $C$ such that between any pair of
vertices in $C$ there is an edge in
$E(G)$. %(the edge can be oriented or labeled).
A \emph{block} of graph $G$ is a maximal 2-connected subgraph of
$G$ and a graph $G$ is a \emph{forest of cliques}, if the vertices
of each of its blocks form a clique.  Thus a graph is a forest of
cliques if and only if the vertex set of any cycle in the graph
forms a clique. A \emph{leaf clique} of a forest of cliques is a
block of the graph, which corresponds to a leaf in its block
forest.  In other words, it is a block which contains at most one
cut vertex of the graph.

For $F\subseteq E(G)$, (i) we use $G\setminus F$ to denote the
graph $(V(G),E(G)\setminus F)$, and (ii) for a weight function
$c:E(G)\to \mathbb{R}^{+}$, we use $c(F)$ to denote the sum of the
weights of all the edges in $F$. A \emph{graph property} is a
collection of graphs. For $i,j\in\mathbb{N}$ we use $K_{i}$ to
denote the complete graph on $i$ vertices, and $K_{i,j}$ to denote
the complete bipartite graph in which the two parts of vertices
are of sizes $i,j$.

Our results also apply to graphs with oriented edges, and those
with edge labels. Subgraphs of an oriented or labelled
graph $G$ inherit the orientation or labelling---as is the
case---of $G$ in the natural manner: each surviving edge
keeps the same orientation/labelling as it had in~$G$. For a graph
$G$ of any kind, we use $G_{S}$ to denote the simple graph
obtained by removing all orientations and labels from $G$; we say
that $G$ is connected (or contains a clique, and so forth) if $G_{S}$ is
connected (or contains a clique, and so forth).

\section{Definitions}\label{sec:definitions}
The following notion is a variation on the concept of
$\lambda$-extendibility defined by
\PandT{}~\cite{PoljakTurzik1986}.

\begin{definition}[Strong $\lambda$-extendibility] % for oriented (and/or labelled) graphs.]
  Let $\G$ be the class of (possibly oriented and/or labelled) graphs, and
  let $0<\lambda<1$. A property $\Pi\subseteq\G$ is \emph{strongly
    $\lambda$-extendible} if it satisfies the following:
  \begin{description}
  \item[Inclusiveness]
    $\{G\in\G\,\vert\,G_{S}\in\{K_{1},K_{2}\}\}\subseteq\Pi$
  \item[Block additivity] $G\in\G$ belongs to $\Pi$ if and only if
    each block of $G$ belongs to $\Pi$.
  \item[Strong $\lambda$-subgraph extension] Let $G\in\G$ and
    $S\subseteq V(G)$ be such that $G[S] \in \Pi$ and $G \setminus
    S \in \Pi$. For any weight function $c: E(G) \to \mathbb
    R^{+}$ there exists an $F \subseteq \delta(S)$ with $c(F) \geq
    \lambda \cdot c(\delta(S))$, such that $G \setminus (\delta(S)
    \setminus F) \in \Pi$.
  \end{description}
\end{definition}

The strong $\lambda$-subgraph extension requirement can be
rephrased as follows: Let $V(G)=X\uplus Y$ be a cut of graph $G$
such that $G[X],G[Y]\in\Pi$, and let $F$ be the set of edges which
cross the cut. For \emph{any} weight function
$c:F\to\mathbb{R}^{+}$, there exists a subset $F'\subseteq F$ such
that (i) $c(F')\leq (1-\lambda)\cdot c(F)$, and (ii) $(G\setminus
F')\in\Pi$. Informally, one can pick a $\lambda$-fraction of the
cut and delete the rest to obtain a graph which belongs to
$\Pi$. 

We recover \PandT{}'s definition of $\lambda$-extendibility from
the above definition by replacing strong $\lambda$-subgraph
extension with the following property:
\begin{description}
\item[$\lambda$-edge extension] Let $G\in\G$ and $S\subseteq
  V(G)$ be such that $G_S[S]$ \emph{is isomorphic to $K_{2}$} and $G
  \setminus S \in \Pi$. For any weight function $c: E(G) \to
  \mathbb R^{+}$ there exists an $F \subseteq \delta(S)$ with
  $c(F) \geq \lambda \cdot c(\delta(S))$, such that $G \setminus
  (\delta(S) \setminus F) \in \Pi$.
\end{description}

Observe from the definitions that any graph property which is
\stronglamex{} is also \lamex{}. It follows that \PandT{}'s result
for \lamex{} properties applies also to \stronglamex{} properties.

% As \PandT{} observed for $\lambda$-extendibility, with this
% definition \autoref{thm:PandT}---and hence
% \autoref{cor:unweighted}---holds for oriented (and/or labelled)
% graphs.

\begin{theorem}[\pt{} bound]\textup{\textbf{~\cite{PoljakTurzik1986}}}\label{thm:PandT-labelled}\label{thm:PandT}
  Let $\G$ be a class of (possibly oriented and/or labelled)
  graphs. Let $0<\lambda<1$, and let $\Pi\subseteq\G$ be a
  \stronglamex{} property. For any connected graph $G\in\G$ and
  weight function $c:E(G)\to\mathbb{R}^{+}$, there exists a
  spanning subgraph $H\in\Pi$ of $G$ such that
  $c(E(H))\ge\lambda\cdot c(E(G))+\frac{1-\lambda}{2}c(T)$, where
  $T$ is the set of edges in a minimum-weight spanning tree of
  $G_{S}$.
\end{theorem}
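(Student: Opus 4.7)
The plan is to reduce the statement to the original bound of \PandT{}~\cite{PoljakTurzik1986} by verifying that every \stronglamex{} property is \lamex{} in the sense of their definition. The notions of \lamex{} and \stronglamex{} share the inclusiveness and block additivity axioms verbatim, so the only point to verify is that strong $\lambda$-subgraph extension implies the $\lambda$-edge extension used in \PandT{}'s original definition.

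This implication is a one-line consequence of inclusiveness. Suppose $G\in\G$ and $S\subseteq V(G)$ satisfy the hypothesis of $\lambda$-edge extension, namely $G_{S}[S]\cong K_{2}$ and $G\setminus S\in\Pi$. Since $G[S]\in\G$ has underlying simple graph $K_{2}$, inclusiveness forces $G[S]\in\Pi$, so the hypothesis of strong $\lambda$-subgraph extension is met. Applied with the same weight function $c$, that axiom produces an $F\subseteq\delta(S)$ with $c(F)\geq\lambda\cdot c(\delta(S))$ such that $G\setminus(\delta(S)\setminus F)\in\Pi$, which is exactly the conclusion required for $\lambda$-edge extension. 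Thus $\Pi$ is \lamex{}.

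Invoking the weighted Poljak-Turz\'{i}k theorem for \lamex{} properties then yields the desired spanning subgraph $H\in\Pi$ with $c(E(H))\geq\lambda\cdot c(E(G))+\frac{1-\lambda}{2}c(T)$. \PandT{}'s inductive proof of that theorem depends only on the three axioms of $\lambda$-extendibility---cut-vertex decomposition via block additivity, incorporation of a pendant $K_{2}$ via inclusiveness, and $\lambda$-edge extension across a two-vertex cut inside a $2$-connected block---and never inspects edge orientations or edge labels, so the argument transfers verbatim to any class $\G$ of oriented and/or labelled graphs. I do not anticipate any real technical obstacle: the heavy lifting is already done by \PandT{}, and the only potential pitfall---making sure that strong $\lambda$-extendibility indeed implies $\lambda$-extendibility on the richer graph class $\G$---is resolved by inclusiveness in a single step, precisely because the axioms of \stronglamex{}-ity are stated at the level of $\G$ rather than at the level of underlying simple graphs.
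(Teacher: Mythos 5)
Your proposal is correct and matches the paper's treatment: the paper likewise observes that strong $\lambda$-extendibility implies $\lambda$-extendibility (your inclusiveness argument just spells out this observation) and then invokes \PandT{}'s theorem for \lamex{} properties, including its validity for oriented/labelled graphs.
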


When all edges are assigned weight $1$, we get:

\begin{cor}\label{cor:unweighted-labelled}\label{cor:unweighted}
  Let $\G,\lambda,\Pi$ be as in \autoref{thm:PandT-labelled}. Any
  connected graph $G\in\G$ on $n$ vertices and $m$ edges has a
  spanning subgraph $H\in\Pi$ with at least $\lambda m +
  \frac{1-\lambda}{2}(n-1)$ edges.
\end{cor}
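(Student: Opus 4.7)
The corollary is a direct specialization of \autoref{thm:PandT-labelled} to the unit-weight case, so the plan is essentially just to plug in the constant weight function $c \equiv 1$ and simplify the two terms on the right-hand side.

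More precisely, I would start from the connected graph $G \in \G$ on $n$ vertices and $m$ edges, and assign weight $c(e) = 1$ to every edge $e \in E(G)$. Then $c$ is a valid function $E(G) \to \mathbb{R}^{+}$, so \autoref{thm:PandT-labelled} applies and yields a spanning subgraph $H \in \Pi$ with
\[
c(E(H)) \;\geq\; \lambda \cdot c(E(G)) + \tfrac{1-\lambda}{2}\, c(T),
\]
where $T$ is the edge set of a minimum-weight spanning tree of $G_{S}$. Since $G$ (and hence $G_{S}$) is connected on $n$ vertices, every spanning tree of $G_{S}$ has exactly $n-1$ edges, so under unit weights $c(T) = n - 1$ regardless of which spanning tree is chosen as the minimum. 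Similarly $c(E(G)) = m$ and $c(E(H)) = |E(H)|$, so the inequality rewrites as $|E(H)| \geq \lambda m + \tfrac{1-\lambda}{2}(n-1)$, which is precisely the claim.

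There is no real obstacle here: the only things to check are that the unit weight function is indeed a positive real weight function (so that \autoref{thm:PandT-labelled} is applicable) and that $G_{S}$ being connected guarantees existence of a spanning tree with $n-1$ edges, both of which are immediate from the hypotheses and the definition of $G_{S}$.
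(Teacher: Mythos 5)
Your proposal is correct and matches the paper's own derivation: the corollary is stated immediately after \autoref{thm:PandT-labelled} with the remark ``When all edges are assigned weight $1$'', i.e., exactly the unit-weight specialization with $c(E(G))=m$ and $c(T)=n-1$ for a spanning tree of the connected graph $G_{S}$. Nothing further is needed.
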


Our results apply to properties which satisfy the additional
requirement of being \FPT{} on almost-forests of cliques.
\begin{defn}[\FPT{} on almost-forests of cliques]\label{def:simplicity}
  Let $0<\lambda<1$, and let $\Pi$ be a \stronglamex{} property
  (of graphs with or without orientations/labels). The
  \simAbovePT{} problem is a variant of the \AbovePT{} problem in
  which, along with the graph $G$ and $k\in\mathbb{N}$, the input
  contains a set $S\subseteq V(G)$ such that $|S|=O(k)$ and
  $G\setminus S$ is a forest of cliques. We say that the
  property $\Pi$ is \emph{\FPT{} on almost-forests of cliques} if the \simAbovePT{} problem is
  \FPT{}.
\end{defn}

In other words, a $\lambda$-extendible property $\Pi$ is \FPT{} on almost-forests of cliques, 
if for any constant $q$ there is an algorithm
that, given a connected graph $G, k$ and a set $S \subseteq
V(G)$ of size at most $q \cdot k$ such that $G\setminus S$ is a
forest of cliques, correctly decides whether $(G,k)$ is a
yes-instance of APT($\Pi$) in $O(f(k) \cdot n^{O(1)})$ time, for
some computable function $f$.

\section{Fixed-Parameter Algorithms for \AbovePT{}}
\label{sec:fixedparameteralgorithm}
We now prove Theorem~\ref{thm:simple_graphs_theorem} using the
approach which Crowston et al. used for
\maxcut{}~\cite{CrowstonEtAl2012}. The crux of their approach is a
polynomial-time procedure which takes the input $(G,k)$ of
\maxcut{} and finds a subset $S\subseteq V(G)$ such that (i)
$G\setminus S$ is a forest of cliques, and (ii) if
$(G,k)$ is a \NO{} instance, then $|S|\leq 3k$. Thus if $|S|>3k$,
then one can immediately answer \YES{}; otherwise one solves the
problem in \FPT{} time using the fact that \maxcut{} is \FPT{} on almost-forests of cliques (\autoref{def:simplicity}).

The nontrivial part of our work consists of proving that the
procedure for \maxcut{} applies also to the much more general
family of \stronglamex{} problems, where the bound on the size of $S$ depends on $\lambda$. To do this, we show that each
of the four rules used for \maxcut{} is safe to apply for any
\stronglamex{} property. From this we get

\begin{lem}\label{lem:yes_or_structure} 
  Let $0<\lambda<1$, and let $\Pi$ be a \stronglamex{} graph
  property. Given a connected graph $G$ with $n$ vertices and $m$
  edges and an integer $k$, in polynomial time we can do one of
  the following:
  \begin{enumerate}
  \item Decide that there is a spanning subgraph $H\in\Pi$ of $G$
    with at least $\lambda m +\frac{1-\lambda}{2}(n-1)+k$
    edges, or;
  \item Find a set $S$ of at most $\frac{6}{1-\lambda}k$ vertices
    in $G$ such that $G \setminus S$ is a forest of cliques.
  \end{enumerate}
  This also holds for \stronglamex{} properties of oriented and/or
  labelled graphs.
\end{lem}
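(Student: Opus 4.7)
The plan is to adapt the four reduction rules of Crowston et al.\ from the \maxcut{} setting to the general \stronglamex{} setting, and show that each rule is \emph{safe}: when applied to $(G,k)$, it either immediately certifies a \YES{} answer, or produces a smaller/modified instance together with at most a constant number of new vertices added to an accumulating set $S$, while preserving the invariant that the original instance is a \YES{} instance whenever the reduced one is. The rules are intended to fire on locally ``clique-like'' leaf-block substructures: roughly, (R1) handle small components or bridges; (R2) simplify a leaf block that is already a clique attached at a single cut vertex; (R3) simplify a clique leaf block attached in a slightly more general way; (R4) handle a vertex that causes a ``defect'' in the forest-of-cliques structure of a near-leaf block. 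When none of the rules apply, the residual graph (after deleting $S$) must have the property that every block is a clique, i.e.\ it is a forest of cliques.

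The crucial new ingredient is to verify safety of the rules for an \emph{arbitrary} \stronglamex{} property $\Pi$ rather than just bipartiteness. Here is how the strong $\lambda$-subgraph-extension property enters: when a rule identifies a separator $S' \subseteq V(G)$ such that $G[S']$ and one side of the cut induced by $S'$ are nicely structured, we compare an optimum $\Pi$-subgraph of the whole graph to the lower bound guaranteed by \autoref{thm:PandT} applied separately to the two sides. Strong $\lambda$-extendibility then says that any $\Pi$-subgraphs of the two sides can be glued together by retaining a $\lambda$-fraction of the crossing edges; inclusiveness and block additivity let us reassemble the whole solution. Comparing the edges retained this way against the global \pt{} bound shows that firing the rule either increases the surplus above the bound by at least $(1-\lambda)/2$ per new vertex placed in $S$, or yields a surplus already exceeding~$k$ (certifying \YES{}).

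Next I would carry out the potential-function bookkeeping. Define, for a (modified) instance $(G',k')$, the quantity $\mu(G') = \lambda|E(G')| + \tfrac{1-\lambda}{2}(|V(G')|-1)$; think of $k$ as the desired surplus above $\mu$. Each rule either (i) removes vertices/edges in such a way that the required surplus decreases by at most the corresponding drop in $\mu$ (so the rule is equivalence-preserving and strictly shrinks the instance), or (ii) identifies a chunk that can be guaranteed to contribute more than its $\mu$-share to any $\Pi$-subgraph, thereby letting us decrease $k'$ by a fixed amount while adding $O(1)$ vertices to $S$. A careful accounting—the same linear accounting as in Crowston et al., but with the constant dependent on $\lambda$—shows that after the process terminates, either the running surplus has dropped to zero or below (so the original instance was \YES{}), or $|S| \leq \frac{6}{1-\lambda}k$. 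The polynomial running time follows because each rule strictly reduces $|V(G)|+|E(G)|$ and can be checked in polynomial time.

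The main obstacle I expect is the safety analysis of the rules that involve a leaf block which is a clique attached at a single cut vertex (or two cut vertices). For \maxcut{} one uses the explicit combinatorial observation that a clique of odd/even order contributes exactly $\lfloor c^2/4 \rfloor$ cut edges, matched precisely by the Edwards-Erd\H{o}s bound; no such explicit formula is available for a general \stronglamex{} property. The substitute must come solely from \autoref{thm:PandT} applied to the leaf block together with strong $\lambda$-subgraph extension across the cut vertex (or vertices). Making the inequalities balance so that the rule ``gains'' surplus $\tfrac{1-\lambda}{2}$ per vertex added to $S$—and doing so uniformly in $\lambda$ so that the $\frac{6}{1-\lambda}$ factor emerges—is where the technical effort lies, and this is precisely what the paper advertises as its main technical contribution.
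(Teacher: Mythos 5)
Your overall strategy is the paper's own: run the four Crowston et al.\ rules, accumulate a set $S$, argue each rule is safe (\NO{}-preserving), and do linear bookkeeping so that a \NO{} instance yields $|S|\leq\frac{6}{1-\lambda}k$. But as written the proposal has a genuine gap: it never actually establishes safety of the rules, which is the entire technical content of the lemma, and you yourself flag the clique-leaf-block rules as the place "where the technical effort lies" without supplying the argument. Appealing to \autoref{thm:PandT} on the two sides of a separator plus one application of strong $\lambda$-subgraph extension is not enough. For Rule~2 one must show that every deleted clique-plus-$v$ piece contributes $pt+\lambda'$ on its own; the paper does this by an even/odd case split inside the clique, seeding with $K_{2,1}$ (which beats its \pt{} bound by $2\lambda'$, \autoref{cor:cherry_gain}) or $K_2$, and then absorbing the rest of the clique edge-by-edge through a perfect matching via \autoref{lem:join_two_big} and \autoref{lem:enlarge}; the \pt{} bound applied to the clique as a whole gives no surplus at all (it is tight on odd cliques), so your "compare against the bound on each side" step would fail exactly there. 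For Rule~4 the situation is worse: when the deleted clique $C$ is even, the natural seed is $K_4$ minus an edge, and \autoref{lem:almost_K_4_gain_gen} shows its surplus over $pt+\lambda'$ can be \emph{negative} (e.g.\ $1-3\lambda<0$ for $\lambda>1/3$ in the worst branch), so the rule is not safe by local accounting alone; the paper needs the structural \autoref{lem:rul_4_neighborhood} (no other rule applies, hence there is a single cut vertex $z$ adjacent to both $x$ and $y$) to harvest an extra $2\lambda$ from the edges to $z$ and only then recover $pt+\lambda'$ for all $0<\lambda<1$ before merging with \autoref{obs:merge}. None of this is visible in your sketch.

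Two smaller inaccuracies also matter for the stated bound. First, your claim that each rule "increases the surplus above the bound by at least $(1-\lambda)/2$ per new vertex placed in $S$" is false for Rule~3, which adds three vertices but lowers $k$ only by $\lambda'=(1-\lambda)/2$; it is precisely this $3$-to-$1$ ratio that forces the constant $\frac{6}{1-\lambda}$ rather than $\frac{2}{1-\lambda}$, so the accounting has to be done per rule, not per vertex. Second, the \YES{} certification at the end is not that "the surplus has dropped to zero"; it is that when the rules terminate the residual graph $G^{\star}\setminus S$ has at most one vertex, so it is trivially in $\Pi$ with $pt=0$, and if the residual parameter $k^{\star}\leq 0$ the reduced instance is \YES{}, which propagates back to $(G,k)$ only because every rule has been proved safe. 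So the structure of your plan matches the paper, but the proof of the lemma is essentially the safety analysis, and that part is missing.
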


We give an algorithmic proof of \autoref{lem:yes_or_structure}.
Let $(G,k)$ be an instance of \AbovePT{}. The algorithm initially
sets $\tG:=G$, $\tS:=\emptyset$, $\tk:=k$, and then applies a series
of rules to the tuple $(\tG,\tS,\tk)$.  Each application of a rule
to $(\tG,\tS,\tk)$ produces a tuple $(G',S',k')$ such that (i) if
$\tG\setminus \tS$ is connected then so is $G'\setminus S'$, and
(ii) if $(\tG\setminus \tS,\tk)$ is a \NO{} instance of \APT{}
then so is $(G'\setminus S',k')$; the converse may not hold. The
algorithm then sets $\tG:=G',\tS:=S',\tk:=k'$, and repeats the
process, till none of the rules applies. These rules---but for
minor changes---and the general idea of ``preserving a \NO{}
instance'' are due to Crowston et
al.~\cite{CrowstonEtAl2012}. %Observe that the rules operate exclusively on simple graphs. - if we operated on simple graph we couldn't speak about yes and no instances

We now state the four rules and show that they suffice to
prove \autoref{lem:yes_or_structure}. We assume throughout that
$\lambda$ and $\Pi$ are as in \autoref{lem:yes_or_structure}.  For
brevity we assume that the empty graph is in $\Pi$, and we let
$\lambda'=\frac{1}{2}(1-\lambda)$ so that $\lambda+2\lambda'=1$.
\begin{rrule}\label{rul:block_clique}
  Let $\tG\setminus \tS$ be connected. If $v\in (V(\tG)\setminus
  \tS)$ and $X\subseteq (V(\tG)\setminus (\tS \cup \{v\}))$ are
  such that (i) $\tG[X]$ is a connected component of $\tG
  \setminus (\tS \cup \{v\})$, and (ii) $X \cup \{v\}$ is a
  clique in $\tG$, then delete $X$ from $\tG$ to get $G'$; set
  $S'=\tS,k'=\tk$.
\end{rrule}

\begin{rrule}\label{rul:block_not_clique}
  Let $\tG\setminus \tS$ be connected. Suppose
  Rule~\ref{rul:block_clique} does not apply, and let
  $X_{1},\dots,X_{\ell}$ be the connected components of
  $\tG\setminus(\tS\cup\{v\})$ for some $v \in V(\tG) \setminus
  \tS$. If at least one of the $X_{i}$s is a clique, and at most
  one of them is \emph{not} a clique, then
  \begin{itemize}
  \item Delete all the $X_{i}$s which are cliques---let these be
    $d$ in number---to get $G'$, and
  \item Set $S':=\tS\cup\{v\}$ and $k':=\tk-d\lambda'$.
  \end{itemize}
\end{rrule}

\begin{rrule}\label{rul:2_path}
  Let $\tG\setminus \tS$ be connected. If $a,b,c\in
  V(\tG)\setminus \tS$ are such that $\{a,b\},\{b,c\}\in E(\tG)$,
  $\{a,c\}\notin E(\tG)$, and $\tG\setminus(\tS\cup\{a,b,c\})$ is
  connected, then
  \begin{itemize}
  \item Set $S':=\tS\cup\{a,b,c\}$ and $k':=\tk-\lambda'$.
  \end{itemize}
\end{rrule}

\begin{rrule}\label{rul:discon_2_sep}
  Let $\tG\setminus \tS$ be connected. Suppose Rule~\ref{rul:2_path}
  does not apply, and let $x,y \in V(\tG) \setminus \tS$ be such that
  \begin{enumerate}
  \item $\{x,y\} \notin E(\tG)$;
  \item Let $C_{1},\dots,C_{\ell}$ be the connected components of
    $\tG \setminus (\tS \cup \{x,y\})$. There is at least one
    $C_{i}$ such that both $V(C_{i})\cup\{x\}$ and
    $V(C_{i})\cup\{y\}$ are cliques in $\tG \setminus\tS$, and
    there is at most one $C_{i}$ for which this does \emph{not}
    hold.
  \end{enumerate}
  Then
  \begin{itemize}
  \item Delete all the $C_{i}$s which satisfy condition (2) to get
    $G'$, and, 
  \item Set $S':=\tS\cup\{x,y\}$, $k':=\tk-\lambda'$. 
  \end{itemize}
\end{rrule}

Let $(G^{\star},S,k^{\star})$ be the tuple which we get by
applying these rules exhaustively to the input tuple
$(G,\emptyset,k)$.  To prove \autoref{lem:yes_or_structure}, it is
sufficient to prove the following claims: (i) the rules can be
exhaustively applied in polynomial time; (ii) $G\setminus S$ is a
forest of cliques; (iii) the rules transform
\NO{}-instances to \NO{}-instances; and, (iv) if $(G,k)$ is a
\NO{} instance, then $|S|\leq q(\lambda)k$. We now proceed to
prove these over several lemmata. Our rules are identical to those
of Crowston et al. in how the rules modify the graph; the only
difference is in how we change the parameter $k$. The first two
claims thus follow directly from their work.
%Note that they use
% ``tree of cliques'' to denote block graphs, and ``clique forest''
% to denote a collection of block graphs.

\begin{lem}\label{lem:rules_poly_time}\inappendix{}\footnote{Proofs
    of results marked with a $\star$ have been moved to
    \autoref{sec:appendix}.}  Rules 1 to 4 can be exhaustively
  applied to an instance $(G,k)$ of \AbovePT{} in polynomial
  time. The resulting tuple $(G^{\star},S,k^{\star})$ has
  $|V(G^{\star})\setminus S|\leq 1$.
\end{lem}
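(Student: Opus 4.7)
The plan is to mirror the argument of Crowston et al.\ essentially verbatim. Since our Rules 1--4 modify $\tG$ and $\tS$ in exactly the same way as theirs, and differ only in how they update $\tk$---which is irrelevant both for termination and for the structural invariant $|V(G^\star)\setminus S|\leq 1$---both halves of the lemma reduce to their analysis. I would still lay out the reasoning explicitly, in two stages.

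For the polynomial running time, I would track the potential $\phi(\tG,\tS) := |V(\tG)\setminus\tS|$. A rule-by-rule inspection shows each application strictly decreases $\phi$: Rule 1 deletes a nonempty $X\subseteq V(\tG)\setminus\tS$, so $\phi$ drops by $|X|\geq 1$; Rule 2 deletes at least one nonempty clique component and moves $v$ into $\tS$, so $\phi$ drops by at least $2$; Rule 3 moves the three vertices $a,b,c$ into $\tS$, so $\phi$ drops by $3$; Rule 4 deletes at least one nonempty component $C_i$ and moves $x,y$ into $\tS$, so $\phi$ drops by at least $3$. As $\phi$ starts at $n$ and stays nonnegative, at most $n$ rule applications occur. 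The applicability of each rule can be tested in polynomial time by enumerating the candidate vertex $v$, triple $(a,b,c)$, or pair $(x,y)$, computing the relevant connected components of $\tG\setminus(\tS\cup\{\cdot\})$, and verifying the clique and connectivity conditions directly.

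For the structural claim, I would proceed by contradiction. Suppose $|V(G^\star)\setminus S|\geq 2$ and set $H:=G^\star\setminus S$; the running invariant guarantees $H$ is connected. The goal is to exhibit a rule that still fires. The argument is most naturally organised on the block-tree of $H$: if $H$ has a leaf block $B$ that is a clique, then its unique cut vertex $v$ in $H$ (or any vertex of $B$ if $H$ itself is a single block) exhibits $V(B)\setminus\{v\}$ as a component of $H\setminus\{v\}$ with $V(B)$ a clique, so Rule 1 fires; more generally, if some $v$ splits $H$ into components of which at least one is a clique and at most one is not a clique, Rule 2 fires. If every leaf block of $H$ fails to be a clique, one locates inside such a block either an induced $P_3$ whose removal preserves connectivity of $H$ (triggering Rule 3) or a non-adjacent pair $x,y$ satisfying the hypothesis of Rule 4.

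The main obstacle is making this case analysis airtight in all corner cases (single-block graphs, graphs in which every block is a triangle, small leaf blocks sharing a cut vertex, and so on); this is precisely the content of the corresponding structural lemma of Crowston et al. Because the graph-side conditions of our rules are syntactically identical to theirs, their argument applies to our setting without modification, yielding the desired bound $|V(G^\star)\setminus S|\leq 1$.
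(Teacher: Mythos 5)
Your proposal is correct and follows essentially the same route as the paper: both argue termination in polynomially many steps via the strict decrease of $|V(\tG)\setminus\tS|$ with a polynomial-time applicability check per rule, and both reduce the structural claim $|V(G^{\star})\setminus S|\leq 1$ to Crowston et al.'s lemma that some rule applies to any connected graph on at least two vertices, noting that the rules here differ from theirs only in how $\tk$ is updated, which does not affect applicability.
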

% \begin{proof}
%   Let $(\tG,\tS,\tk),(G',S',k')$ be as in the description
%   above. It is not difficult to verify that (i) each rule can be
%   applied once in polynomial time, (ii) for each application of a
%   rule, if $\tG\setminus\tS$ is connected then so also is
%   $G'\setminus S'$, and (iii) each rule strictly reduces the size
%   of the graph $\tG\setminus\tS$---either by deleting vertices
%   from $\tG$, or by adding vertices to $\tS$. Crowston et al. have
%   shown~\cite[Lemma~7]{CrowstonEtAl2012} that if $\tG\setminus\tS$
%   is a connected graph with at least two vertices, then at least
%   one of these four rules apply to the tuple
%   $(\tG,\tS,\tk)$. Since none of the conditions for applying a
%   reduction rule depends on the value of $\tk$, and since the only
%   difference between our set of rules and theirs is the way in
%   which $\tk$ is modified, their result implies this lemma.
% \end{proof}

\begin{lem}\textup{\textbf{~\cite[Lemma~8]{CrowstonEtAl2012}}}\label{lem:for_of_cliq}
  Let $(G^{\star},S,k^{\star})$ be the tuple obtained by applying
  Rules~1 to~4 exhaustively to an instance $(G,k)$ of
  \AbovePT{}. Then $G\setminus S$ is a forest of cliques.
\end{lem}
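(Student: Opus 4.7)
The plan is to prove the lemma by induction on the number of rule applications, maintaining the invariant that every block $B$ of $G\setminus\tS$ whose vertex set meets $D:=V(G)\setminus V(\tG)$ satisfies that $V(B)$ is a clique in $G$. This will suffice: by \autoref{lem:rules_poly_time}, at termination $|V(G^\star)\setminus S|\leq 1$, so $V(G)\setminus S$ equals $D$ together with at most one surviving vertex; every non-trivial block of $G\setminus S$ must therefore contain a vertex of $D$ and hence, by the invariant, be a clique, yielding that $G\setminus S$ is a forest of cliques.

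The base case is immediate, since $D=\emptyset$ initially. For the inductive step, I would first dispatch the easy sub-case, common to all four rules, in which $B$ is a block of $G\setminus\tS_{\mathrm{new}}$ whose vertex set meets the \emph{old} $D$. Since $G\setminus\tS_{\mathrm{new}}$ is an induced subgraph of $G\setminus\tS$, the $2$-connected subgraph $B$ is contained in some block $B'$ of $G\setminus\tS$; this $B'$ also meets $D$, so by induction $V(B')$ is a clique in $G$, and so is $V(B)\subseteq V(B')$. This already handles Rule~\ref{rul:2_path} entirely, since that rule changes only $\tS$ and not $D$.

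The harder sub-case concerns Rules~\ref{rul:block_clique}, \ref{rul:block_not_clique}, and~\ref{rul:discon_2_sep}, together with blocks $B$ whose vertex sets meet only the \emph{newly} deleted cliques $X_1,\dots,X_\ell$ (respectively $C_1,\dots,C_\ell$ for Rule~\ref{rul:discon_2_sep}). The key observation is that $\tG$ is the induced subgraph of $G$ on $V(G)\setminus D$, so any $G$-edge between two surviving vertices is already an edge of $\tG$; consequently, no $G$-edge joins vertices of two distinct connected components of $\tG\setminus\tS$ minus the cut set stipulated by the rule ($\{v\}$ for Rules~\ref{rul:block_clique} and~\ref{rul:block_not_clique}, and $\{x,y\}$ for Rule~\ref{rul:discon_2_sep}). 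Combined with the $2$-connectivity of $B$, this forces $V(B)$ to lie inside a single newly deleted clique component, together possibly with the anchor vertex $v$ for Rule~\ref{rul:block_clique} (which is not moved to $\tS$); in each case the resulting set is a clique of $G$, closing the induction.

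The main obstacle will be carefully tracking the interplay between newly deleted cliques and earlier deleted ones through hinges such as the anchor $v$ of Rule~\ref{rul:block_clique}, which remains alive and may later reappear as an interior vertex of a subsequently deleted clique or as the lone surviving vertex of $V(G^\star)\setminus S$ at termination. Reassuringly, this is precisely the local picture of a forest of cliques---two cliques sharing a single cut vertex---so the rules build up the required block structure on $D$ monotonically, and the invariant is preserved throughout the execution.
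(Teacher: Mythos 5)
Your proof is correct, but it takes a genuinely different route from the paper: the paper offers no argument of its own for this lemma---it imports it verbatim from Crowston et al.\ (their Lemma~8), observing that the four rules modify the graph exactly as theirs do and differ only in how $\tk$ is updated, so the structural conclusion transfers unchanged. You instead prove it from scratch by induction on rule applications, with the invariant that every block of $G\setminus\tS$ meeting $D=V(G)\setminus V(\tG)$ is a clique of $G$. The argument does go through: a block meeting the old $D$ is a $2$-connected (or single-edge) subgraph of the induced subgraph $G\setminus\tS_{\mathrm{new}}\subseteq G\setminus\tS_{\mathrm{old}}$ and hence lies inside a block of $G\setminus\tS_{\mathrm{old}}$ meeting $D$, a clique by induction; a block meeting only newly deleted vertices avoids the old $D$ entirely, hence lives inside $\tG_{\mathrm{old}}$ (an induced subgraph of $G$), where the deleted sets are connected components of $\tG\setminus(\tS\cup\{v\})$, respectively $\tG\setminus(\tS\cup\{x,y\})$, so $2$-connectivity confines the block to a single deleted component, plus possibly the anchor $v$ in Rule~\ref{rul:block_clique}; and $X\cup\{v\}$, the $X_i$, and the $C_i$ are cliques by the rules' preconditions (for Rule~\ref{rul:discon_2_sep} because $C_i\cup\{x\}$ is a clique), while Rule~\ref{rul:2_path} deletes nothing. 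Note that the ``hinge'' worry in your last paragraph is already dissolved by your own case split: any block containing both old and newly deleted vertices meets the old $D$ and is handled by the induction hypothesis, so no further bookkeeping is needed. What your route buys is a self-contained verification of the central structural fact; what the paper's route buys is brevity, reusing the published analysis of Crowston et al.\ exactly as it does for \autoref{lem:rules_poly_time} and \autoref{lem:crowston-structural-lemma}.
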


The correctness of the remaining two claims is a consequence of
the $\lambda$-extendibility of property $\Pi$, and we make
critical use of this fact in building the rest of our proof. This
is the one place where this work is significantly different from
the work of Crowston et al.; they could take advantage of the
special characteristics of \emph{one specific} property, namely
bipartitedness, to prove the analogous claims for \maxcut{}.

We say that a rule is \emph{safe} if it preserves \NO{} instances.
\begin{definition}\label{def:safe_rule}
  Let $(\tG,\tS,\tk)$ be an arbitrary tuple to which one of the
  rules~1,~2,~3, or~4 applies, and let $(G',S',k')$ be the
  resulting tuple.  We say that the rule is \emph{safe} if,
  whenever $(G'\setminus S',k')$ is a \YES{} instance of
  \AbovePT{}, then so is $(\tG\setminus\tS,\tk)$.
\end{definition}

We now prove that each of the four rules is safe. For a graph $G$
we use $val(G)$ to denote the maximum number of edges in a
subgraph $H\in\Pi$ of $G$, and $pt(G)$ to denote the \pt{} bound
for $G$. Thus if $G$ is connected and has $n$ vertices and $m$
edges then $pt(G)=\lambda m + \lambda'(n-1)$, and
\autoref{cor:unweighted} can be written as $val(G)\geq pt(G)$. For
each rule we assume that $G'\setminus S'$ has a spanning subgraph
$H'\in\Pi$ with at least $pt(G'\setminus S')+k'$ edges, and show
that $\tG\setminus\tS$ has a spanning subgraph $\tH\in\Pi$ with at
least $pt(\tG\setminus\tS)+\tk$ edges.

We first derive a couple of lemmas which describe how
contributions from subgraphs of a graph $G$ add up to yield lower
bounds on $val(G)$.

\begin{lem}\label{obs:merge}\inappendix{}
  Let $v$ be a cut vertex of a connected graph $G$, and let
  $\mathcal{C}=C_{1},C_{2},\ldots C_{r};r\geq 2$ be sets of vertices of $G$
such that for every $i \neq j$ we have $C_i \cap C_j =\{v\}$, there is no edge between $C_i \setminus \{v\}$ and $C_j \setminus \{v\}$ and $\bigcup_{1 \leq i\leq r} C_i=V(G)$.
 For $1\leq i\leq r$, let $H_{i}\in\Pi$ be a subgraph of $G[C_i]$
  with $pt(G[C_{i}])+k_{i}$ edges, and let
  $H=(V(G),\bigcup_{i=1}^{r}E(H_{i}))$. Then $H$ is a subgraph of
  $G$, $H\in\Pi$, and $|E(H)|\geq pt(G)+ \sum_{i=1}^r k_i$.
\end{lem}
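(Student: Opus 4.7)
The plan is to verify the three conclusions in turn, with almost all of the work concentrated in the edge-count identity and in one application of block additivity. That $H$ is a subgraph of $G$ is immediate from the construction: $V(H)=V(G)$ by definition, and each $E(H_i) \subseteq E(G[C_i]) \subseteq E(G)$, so the union lies in $E(G)$.

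For the edge count, I would exploit the fact that the collection $\{C_i\}$ induces an edge-partition of $G$. Since $C_i\cap C_j=\{v\}$ and no edge runs between $C_i\setminus\{v\}$ and $C_j\setminus\{v\}$ for $i\ne j$, every edge of $G$ lies in exactly one $G[C_i]$, giving $\sum_{i} |E(G[C_i])| = |E(G)|$. On the vertex side, $v$ appears in all $r$ of the $C_i$ while every other vertex appears in exactly one, so $\sum_{i} (|C_i|-1) = |V(G)|-1$. Plugging these into $pt(G[C_i]) = \lambda\,|E(G[C_i])| + \lambda'(|C_i|-1)$ and summing yields $\sum_i pt(G[C_i]) = pt(G)$. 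The edge sets $E(H_i)$ are pairwise disjoint (they sit inside the disjoint edge sets $E(G[C_i])$), so $|E(H)| = \sum_i |E(H_i)| \ge \sum_i\bigl(pt(G[C_i]) + k_i\bigr) = pt(G) + \sum_i k_i$, as required.

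For membership $H\in\Pi$ I would invoke the block-additivity axiom. In $H$, just as in $G$, the vertex $v$ separates the sets $C_i\setminus\{v\}$ from one another, since $H$ contains no edge between them. Therefore any cycle—and more generally any 2-connected subgraph—of $H$ must lie entirely inside a single $G[C_i]$, and by maximality each block of $H$ is then a block of the corresponding $H_i$; conversely each block of an $H_i$ is a block of $H$. Since $H_i\in\Pi$, block additivity applied to each $H_i$ says every such block belongs to $\Pi$, and applying block additivity once more, this time to $H$, gives $H\in\Pi$.

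I do not anticipate a real obstacle: the most delicate point is simply the observation that blocks of $H$ cannot cross $v$, which is forced by the hypothesis that there are no edges between the private parts of different $C_i$. The edge-count identity $\sum_i pt(G[C_i])=pt(G)$ is the real content, and it reduces to two elementary counting statements about the decomposition.
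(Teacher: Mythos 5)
Your proposal is correct and follows essentially the same route as the paper's proof: the same observation that every edge of $G$ lies in exactly one $G[C_i]$, the same summation showing $\sum_i pt(G[C_i])=pt(G)$, and the same use of block additivity after noting that every block of $H$ sits inside a single $H_i$ (plus possibly isolated vertices, which are $K_1\in\Pi$). No gaps worth flagging.
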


% \begin{proof}
% Since there are no edges between $C_i \setminus \{v\}$ and $C_j \setminus \{v\}$ for $i \neq j$, and $\bigcup_{1 \leq i\leq r}C_i=V(G)$, every edge of $G$ is in exactly one $G[C_i]$.
% Therefore, $H$ is a subgraph of $G$. Also as $v$ is a cut vertex in $G$, it is a cut vertex in $H$ and the blocks of $H$ are exactly the blocks of $H_{i}$'s. Since each $H_{i}$ is in $\Pi$ it follows from
%   the block additivity property of $\Pi$ that $H\in\Pi$.

%   Since $pt(G[C_{i}])=\lambda e_G(C_{i})+\lambda'(|C_{i}|-1)$, we
%   get
%   \begin{align*}
%     |E(H)|& =\sum_{i=1}^{r}|E(H_{i})|=\sum_{i=1}^{r}(pt(G[C_{i}])+k_{i})=\lambda\sum_{i=1}^{r}e_G(C_{i})+\lambda'\sum_{i=1}^{r}|C_{i}-1|+\sum_{i=1}^{r}k_{i}\\
%     &
%     =\lambda|E(G)|+\lambda'(|V(G)|-1)+\sum_{i=1}^{r}k_{i}=pt(G)+
%     \sum_{i=1}^r k_i.
%   \end{align*}
% \end{proof}

\begin{lem}\label{lem:join_two_big}\inappendix{}
  Let $G$ be a graph, and let $S\subseteq V(G)$ be such that there
  exists a subgraph $H_{S}\in\Pi$ of $G[S]$ with at least
  $pt(G[S])+\lambda'+k_S$ edges, and a subgraph $\oH\in\Pi$ of
  $G\setminus S$ with at least $pt(G \setminus S)+\lambda'+\ok$
  edges.  Then there is a subgraph $H\in\Pi$ of $G$ with at least
  $pt(G)+\lambda'+k_S+\ok$ edges.
\end{lem}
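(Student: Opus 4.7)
The plan is to combine $H_S$ and $\oH$ across the cut $(S,V(G)\setminus S)$ by invoking strong $\lambda$-subgraph extension to recover at least a $\lambda$-fraction of the crossing edges while staying inside $\Pi$.

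First, I would introduce the auxiliary graph $G'$ on vertex set $V(G)$ whose edge set is $E(H_S)\cup E(\oH)\cup\delta_G(S)$. Since every edge of $\delta_G(S)$ has exactly one endpoint in $S$, we have $G'[S]=H_S\in\Pi$ and $G'\setminus S=\oH\in\Pi$, so the hypotheses of strong $\lambda$-subgraph extension are met for $G'$. Applying that axiom with the unit weight function on $E(G')$ yields a set $F\subseteq\delta_{G'}(S)=\delta_G(S)$ with $|F|\ge\lambda\,|\delta_G(S)|$ such that the graph $H:=G'\setminus(\delta_G(S)\setminus F)$ is in $\Pi$. Every edge of $H$ lies in $E(G[S])\cup E(G\setminus S)\cup\delta_G(S)\subseteq E(G)$, so $H$ is a subgraph of $G$.

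It then remains to bound $|E(H)|$ from below. Writing $c:=|\delta_G(S)|$, $n:=|V(G)|$, $m:=|E(G)|$, $n_S:=|S|$, $n_{\bar S}:=n-n_S$, and $m_S,m_{\bar S}$ for the edge counts of $G[S]$ and $G\setminus S$ (so that $m=m_S+m_{\bar S}+c$), the assumptions and the bound on $|F|$ give
\[
|E(H)| \;=\; |E(H_S)|+|E(\oH)|+|F| \;\ge\; pt(G[S])+pt(G\setminus S)+2\lambda'+k_S+\ok+\lambda c.
\]
Substituting $pt(G[S])=\lambda m_S+\lambda'(n_S-1)$ and $pt(G\setminus S)=\lambda m_{\bar S}+\lambda'(n_{\bar S}-1)$, the $\lambda$-terms combine with $\lambda c$ to give $\lambda m$, while the $\lambda'$-terms $\lambda'(n_S-1)+\lambda'(n_{\bar S}-1)+2\lambda'$ telescope to $\lambda'(n-1)+\lambda'$. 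Assembling these yields $|E(H)|\ge pt(G)+\lambda'+k_S+\ok$, which is exactly the desired inequality.

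The only real subtlety, and therefore the main (mild) obstacle, is choosing the right auxiliary graph on which to invoke strong $\lambda$-subgraph extension. One cannot apply the axiom directly to $G$, since $G[S]$ and $G\setminus S$ need not themselves lie in $\Pi$. Replacing the interiors of the two sides of the cut by $H_S$ and $\oH$, while keeping \emph{all} of $\delta_G(S)$, is precisely what makes the hypotheses of the axiom apply; it is also what guarantees that the recovered cut has weight at least $\lambda c$ rather than a $\lambda$-fraction of some smaller set of crossing edges, which is what drives the telescoping arithmetic in the last step.
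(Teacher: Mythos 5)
Your proof is correct and follows essentially the same route as the paper's: both construct the auxiliary graph $G'=(V(G),E(H_S)\cup E(\oH)\cup\delta(S))$, apply strong $\lambda$-subgraph extension with unit weights to recover a $\lambda$-fraction of $\delta(S)$ while keeping all of $E(H_S)\cup E(\oH)$, and conclude by the same telescoping of the $pt$ terms. No gaps.
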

% \begin{proof}
%   Let $F=\delta(S)$, and consider the subgraph
%   $G'=(V(G),E(H_{S})\cup E(\oH)\cup F)$. Observe that
%   $G'[S]=H_{S}\in\Pi$, and $G'\setminus S=\oH\in\Pi$. Thus the
%   strong $\lambda$-subgraph extension property of $\Pi$ applies to
%   the pair $(G',S)$, and for the weight function which assigns
%   unit weights to all edges in $G'$, we get that the graph $G'$
%   has a spanning subgraph $H\in\Pi$ which contains all the edges
%   in $E(H_{S})\cup E(\oH)$ and at least a $\lambda$-fraction of
%   the edges in $F$. Thus
%   \begin{align*}
%     |E(H)|& \geq |E(H_{S})|+|E(\oH)|+\lambda|F|\\
%     & \geq pt(G[S])+\lambda'+k_S+pt(G\setminus S)+\lambda'+\ok+\lambda |F|\\
%     & =\lambda(|E(G[S])|+|E(G\setminus S)|+|F|)+\lambda' (|S|+|V(G)\setminus S|)+k_S+\ok\\
%     & =\lambda|E(G)|+\lambda'|V(G)|+k_S+\ok=pt(G)+\lambda'+k_S+\ok
%   \end{align*}
% \end{proof}

This lemma has a useful special case which we state as a corollary:

\begin{cor}\label{lem:enlarge}%\inappendix{}
  Let $G$ be a graph, and let $S\subseteq V(G)$ be such that there
  exists a subgraph $H_{S}\in\Pi$ of $G[S]$ with at least
  $pt(G[S])+\lambda'+k_S$ edges, and the subgraph $G\setminus S$
  has a perfect matching. Then there is a subgraph $H\in\Pi$ of
  $G$ with at least $pt(G)+\lambda'+k_S$ edges.
\end{cor}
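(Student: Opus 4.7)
The plan is to deduce Corollary~\ref{lem:enlarge} directly from Lemma~\ref{lem:join_two_big}. If I can build a spanning subgraph $\oH \in \Pi$ of $G \setminus S$ with at least $pt(G\setminus S) + \lambda'$ edges---i.e.\ with surplus $\ok = 0$ in the notation of Lemma~\ref{lem:join_two_big}---then feeding $H_S$ (with surplus $k_S$) and $\oH$ (with surplus $0$) into Lemma~\ref{lem:join_two_big} immediately produces a subgraph $H\in\Pi$ of $G$ with at least $pt(G) + \lambda' + k_S + 0$ edges, which is exactly the conclusion of the corollary. The whole task therefore reduces to the auxiliary statement: every graph $G'$ that admits a perfect matching has a spanning subgraph $\oH\in\Pi$ with $|E(\oH)| \ge pt(G') + \lambda'$.

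I would prove this auxiliary statement by induction on the size of the perfect matching $M$. In the base case $|M|=1$ the graph $G'$ is just $K_2$, which is in $\Pi$ by inclusiveness, and its single edge hits the target $pt(K_2)+\lambda' = \lambda + 2\lambda' = 1$. For the inductive step, fix any matching edge $\{u,v\}\in M$; then $M\setminus\{\{u,v\}\}$ is a perfect matching of $G'\setminus\{u,v\}$, so by the inductive hypothesis there is a spanning $\oH'\in\Pi$ of $G'\setminus\{u,v\}$ with at least $pt(G'\setminus\{u,v\})+\lambda'$ edges (surplus $0$). The induced subgraph $G'[\{u,v\}]$ is $K_2\in\Pi$ with $1 = pt(K_2)+\lambda'$ edges, again surplus $0$. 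Applying Lemma~\ref{lem:join_two_big} to $G'$ with the vertex subset $V(G')\setminus\{u,v\}$---using $\oH'$ on one side and $K_2$ on the other, each with surplus $0$---gives a spanning $\oH\in\Pi$ of $G'$ with at least $pt(G')+\lambda' + 0 + 0$ edges, closing the induction.

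The key conceptual observation is that a perfect matching is exactly the structure one needs to iterate Lemma~\ref{lem:join_two_big}: each matching edge is a $K_2$ that already meets the ``$pt+\lambda'$'' target with surplus zero, so the single $\lambda'$ of slack over $pt(G')$ survives each join step rather than being consumed by the $+\lambda'$ overhead appearing in the lemma. The only bookkeeping subtlety I anticipate is that the graphs $G'\setminus\{u,v\}$ produced in the recursion need not be connected, so either the induction must be applied componentwise, or $pt$ must be read as $\lambda|E| + \lambda'(|V|-c)$ with $c$ the number of connected components---a harmless extension that is in any case already implicit in Lemma~\ref{lem:join_two_big}. Granting this, the final step (assembling $H_S$ with the constructed $\oH$) is a one-line appeal to Lemma~\ref{lem:join_two_big}.
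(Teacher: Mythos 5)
Your proof is correct and follows essentially the same route as the paper: the paper also proves the corollary by repeated application of Lemma~\ref{lem:join_two_big}, treating each matching edge as a copy of $K_{2}$ with $pt(K_{2})+\lambda'$ edges (it grows outward from $H_{S}$ one matching edge at a time, whereas you first assemble $G\setminus S$ and join $H_{S}$ last---an immaterial reordering). Your worry about disconnected intermediate graphs is moot, since $pt(\cdot)$ is used throughout simply as the formula $\lambda|E|+\lambda'(|V|-1)$ and Lemma~\ref{lem:join_two_big} nowhere assumes connectivity, so no componentwise argument or redefinition of $pt$ is needed.
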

\begin{proof}
  Recall that the graph $K_{2}$ is in $\Pi$ by definition, and
  observe that $pt(K_{2})=\lambda+\lambda'$. Thus $K_{2}$ has
  $pt(K_{2})+\lambda'$ edges. The corollary now follows by
  repeated application of Lemma~\ref{lem:join_two_big}, each time
  considering a new edge of the matching as the graph $\oH$.
\end{proof}

The safeness of Rule~1 is now a consequence of the block
additivity property.

\begin{lemma}\label{lem:rule1_safe}
  Rule~1 is safe.
\end{lemma}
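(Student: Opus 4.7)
The plan is to apply Lemma~\ref{obs:merge} at the cut vertex $v$: combine a spanning $\Pi$-subgraph of $G'\setminus S' = (\tG\setminus\tS)\setminus X$ with a spanning $\Pi$-subgraph of the clique $\tG[X\cup\{v\}]$ to obtain a spanning $\Pi$-subgraph of $\tG\setminus\tS$ whose edge count beats $pt(\tG\setminus\tS)$ by at least $\tk$.

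Concretely, set $G_0 := \tG\setminus\tS$ and $K := \tG[X\cup\{v\}]$. Since Rule~1 sets $S'=\tS$ and $k'=\tk$ and only deletes the vertex set $X$, we have $G'\setminus S' = G_0\setminus X$; by hypothesis this graph has a spanning subgraph $H' \in \Pi$ with $|E(H')| \geq pt(G_0\setminus X) + \tk$. By Corollary~\ref{cor:unweighted} applied to the clique $K$ (which is connected and in $\G$), there is a spanning subgraph $H_K \in \Pi$ of $K$ with $|E(H_K)| \geq pt(K)$, i.e., with surplus $\geq 0$ over its own \pt{} bound.

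The key observation is that $v$ is a cut vertex of $G_0$ whenever $X \neq V(G_0)\setminus\{v\}$, and that the vertex sets $C_1 := V(G_0\setminus X)$ and $C_2 := V(K) = X \cup \{v\}$ satisfy the hypotheses of Lemma~\ref{obs:merge}: $C_1 \cap C_2 = \{v\}$, $C_1 \cup C_2 = V(G_0)$, and there are no edges between $C_1\setminus\{v\}$ and $C_2\setminus\{v\} = X$, because condition~(i) of Rule~1 states that $G_0[X]$ is a connected component of $G_0\setminus\{v\}$. Applying Lemma~\ref{obs:merge} to $H'$ (on $G_0[C_1] = G_0\setminus X$, with surplus $\geq \tk$) and $H_K$ (on $G_0[C_2] = K$, with surplus $\geq 0$), we obtain a spanning subgraph $\tH \in \Pi$ of $G_0$ with $|E(\tH)| \geq pt(G_0) + \tk$, which witnesses that $(\tG\setminus\tS,\tk)$ is a \YES{} instance of \AbovePT{}.

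I do not expect a serious obstacle; the only point that needs explicit care is the degenerate case $V(G_0) = X \cup \{v\}$, where $v$ is not a cut vertex of $G_0$ and Lemma~\ref{obs:merge} does not directly apply. In this case $G_0\setminus X$ is the single vertex $\{v\}$ with $pt(\{v\})=0$, so the hypothesis $|E(H')| \geq pt(\{v\}) + \tk$ forces $\tk \leq 0$, and Corollary~\ref{cor:unweighted} applied to $G_0 = K$ alone already yields a spanning $\Pi$-subgraph of $G_0$ with at least $pt(G_0) \geq pt(G_0) + \tk$ edges. Overall, the safety of Rule~1 is essentially a clean consequence of block-additivity of $\Pi$ (encoded in Lemma~\ref{obs:merge}) together with the fact, guaranteed by Corollary~\ref{cor:unweighted}, that every clique admits a spanning $\Pi$-subgraph meeting its own \pt{} bound.
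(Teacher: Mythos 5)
Your proof is correct and follows essentially the same route as the paper: decompose $\tG\setminus\tS$ at the cut vertex $v$ into $X\cup\{v\}$ and $V(G')\setminus S'$, apply Corollary~\ref{cor:unweighted} to the clique side, and merge via Lemma~\ref{obs:merge}. Your explicit treatment of the degenerate case $V(G_0)=X\cup\{v\}$ (where $v$ is not a cut vertex) is a small extra care the paper's proof skips over, but otherwise the arguments coincide.
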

\begin{proof}
Let $C_1=X \cup \{v\}$ and $C_2= V(\tG) \setminus (\tS \cup X)= V(G') \setminus S'$.
Observe that (i) $v$ is a cut vertex of $\tG\setminus\tS$, $C_1 \cap C_2 =\{v\}$,
there are no edges between $C_1 \setminus \{v\}$ and $C_2 \setminus \{v\}$ by assumptions of the rule, and $C_1 \cup C_2 =V(\tG) \setminus \tS$.
Also by assumption, there is a spanning subgraph $H_1\in\Pi$ of $G'\setminus S'= \tG[C_2]$ such
  that $|E(H_1)|\geq pt(G'\setminus S')+k'$. By \autoref{cor:unweighted-labelled} there is a subgraph $H_2\in\Pi$ of $\tG[C_2]$ with
$|E(H_2)|\geq pt(\tG[C_2])$. Hence \autoref{obs:merge} applies and $\tG\setminus \tS$ has a spanning subgraph $H \in \Pi$
with $E(H) \geq pt(\tG \setminus \tS)+k'=pt(\tG \setminus \tS)+\tk$.
%
%
% Observe that (i) $v$
%   is a cut vertex of $\tG\setminus\tS$, (ii) $(\tG\setminus\tS)[X]$
%   is a connected component of $(\tG\setminus\tS)\setminus\{v\}$,
%   and (iii) $(\tG\setminus\tS)\setminus X$ is connected and shares
%   exactly one vertex---namely $v$---and no edges with
%   $(\tG\setminus\tS)[X\cup\{v\}]$. Therefore, by block
%   additivity---and similarly as in the proof of
%   \autoref{obs:merge}---we get
%   \begin{equation}\label{eqn:A}
%   val(\tG\setminus\tS)=val((\tG\setminus\tS)\setminus
%   X)+val(\tG[X\cup\{v\}])
%   \end{equation}
%
%   Now observe that the graph $G'\setminus S'$ is isomorphic to the
%   graph $(\tG\setminus\tS)\setminus X$. By assumption we have
%   that $val(G'\setminus S')\geq pt(G'\setminus S')+k'$, and from
%   \autoref{cor:unweighted} we get that
%   $val((\tG\setminus\tS)[X\cup\{v\}])\geq
%   pt((\tG\setminus\tS)[X\cup\{v\}])$. Substituting in \autoref{eqn:A}
%   we get
%   \begin{align*}
%     val(\tG\setminus\tS)& \geq pt((\tG\setminus\tS)\setminus
%     X)+k'+pt((\tG\setminus\tS)[X\cup\{v\}])\\
%     & =\lambda(|E((\tG\setminus\tS)\setminus
%     X)|+|E((\tG\setminus\tS)[X\cup\{v\}])|)+\\
%     & =\lambda'(|V((\tG\setminus\tS)\setminus
%     X)|+|V((\tG\setminus\tS)[X\cup\{v\}])|-2) + k'\\
%     &
%     =\lambda|E(\tG\setminus\tS)|+\lambda'(|V(\tG\setminus\tS)|-1)+k'\\
%     & =pt(\tG\setminus\tS) + \tk,
%   \end{align*}
%   where the last equality uses the fact that $k'=\tk$.
\end{proof}

We now prove some useful facts about certain simple graphs, in the
context of \stronglamex{} properties. Observe that every block of
a forest is one of $\{K_{1},K_{2}\}$, which are both in
$\Pi$. From this and the block additivity property of $\Pi$ we get

\begin{obs}\label{obs:forest}
  % If $\Pi$ is a strongly $\lambda$-extendible property, then
  Every forest (with every orientation and labeling) is in $\Pi$.
\end{obs}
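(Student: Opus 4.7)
The plan is to read off the observation directly from the two structural axioms of strong $\lambda$-extendibility: inclusiveness and block additivity. The only thing requiring a short argument is the claim that, in a forest, every block has underlying simple graph in $\{K_1,K_2\}$.

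First I would recall that a block of a graph is a maximal $2$-connected subgraph, together with the usual convention that isolated vertices and bridges count as blocks. In a forest there is no cycle, so no subgraph on three or more vertices can be $2$-connected (any $2$-connected graph on $\geq 3$ vertices contains a cycle). Hence for every forest $F \in \G$ and every block $B$ of $F$, the underlying simple graph $B_S$ is either $K_1$ or $K_2$. Since orientations and labels attached to edges play no role in the cycle/connectivity structure, this holds for \emph{every} choice of orientation and labelling on $F$.

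Next, by the inclusiveness axiom, every $B \in \G$ with $B_S \in \{K_1, K_2\}$ lies in $\Pi$, so in particular each block of $F$ lies in $\Pi$. Finally, block additivity gives $F \in \Pi$. There is no real obstacle here; the only point worth stating carefully is that the definition of $\G$ and of the block decomposition both respect arbitrary orientations and labellings, so the conclusion is uniform in these decorations.
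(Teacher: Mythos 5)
Your proof is correct and follows the same route as the paper: every block of a forest is (an orientation/labelling of) $K_1$ or $K_2$, hence in $\Pi$ by inclusiveness, and block additivity then puts the whole forest in $\Pi$. You simply spell out the paper's one-line argument in slightly more detail; no issues.
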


The graph $K_{2,1}$ is a useful special case.

\begin{obs}\label{cor:cherry_gain}
  % Let $\Pi$ be a strongly $\lambda$-extendible property.
  The graph $K_{2,1}$---also with any kind of orientation or
  labelling---is in $\Pi$, and it has
  $pt(K_{2,1})+\lambda'+\lambda'$ edges.
\end{obs}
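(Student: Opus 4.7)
The plan is to derive both halves of the statement from results already established in the excerpt, so there is little actual work: the observation is essentially a bookkeeping statement that records a numeric identity for later use.

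First, for membership in $\Pi$: note that $K_{2,1}$ is a tree on three vertices (a path of length two, equivalently a star with two leaves). In particular it is a forest, so \autoref{obs:forest} immediately gives $K_{2,1}\in\Pi$, and this holds under any orientation or edge labelling, since \autoref{obs:forest} is stated for forests in $\G$ with arbitrary orientation/labelling. No further argument is needed here.

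Second, for the edge count: $K_{2,1}$ is connected with $n=3$ vertices and $m=2$ edges, so by definition
\[
pt(K_{2,1})=\lambda m + \lambda'(n-1)=2\lambda+2\lambda'.
\]
Since $\lambda+2\lambda'=1$ by the convention fixed before \autoref{rul:block_clique}, we have $1-\lambda=2\lambda'$, and so
\[
|E(K_{2,1})|-pt(K_{2,1})=2-2\lambda-2\lambda'=2(1-\lambda)-2\lambda'=4\lambda'-2\lambda'=2\lambda',
\]
i.e.\ $|E(K_{2,1})|=pt(K_{2,1})+\lambda'+\lambda'$, as claimed.

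There is no real obstacle: membership in $\Pi$ is automatic from block additivity together with $\{K_1,K_2\}\subseteq\Pi$ (already packaged as \autoref{obs:forest}), and the numeric claim is a one-line substitution using the identity $\lambda+2\lambda'=1$. The only reason to phrase the surplus as $\lambda'+\lambda'$ (rather than $2\lambda'$) is so that the observation plugs directly into later invocations of \autoref{lem:join_two_big} and \autoref{lem:enlarge}, where each summand $\lambda'$ will be absorbed by one application of the ``join'' step.
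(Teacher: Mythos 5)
Your proof is correct and follows exactly the route the paper intends: membership in $\Pi$ via \autoref{obs:forest} (since $K_{2,1}$ is a tree, under any orientation or labelling), and the surplus $\lambda'+\lambda'$ by the direct computation $2-(2\lambda+2\lambda')=2\lambda'$ using $\lambda+2\lambda'=1$. Nothing is missing.
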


The graph obtained by removing one edge from $K_{4}$ is another
useful object, since it always has more edges than its \pt{}
bound.

\begin{lem}\label{lem:almost_K_4_gain_gen}\inappendix{}
  Let $G$ be a graph formed from the graph $K_4$---also with any
  kind of orientation or labelling---by removing one edge. Then
  (i) $val(G)\geq 3$, (ii) $val(G)\geq 4$ if $\lambda>1/3$, and
  (iii) $val(G)=5$ if $\lambda>1/2$. As a consequence,
  \begin{equation}%\label{eqn:C} -label defined in appendix
    val(G)\geq pt(G)+\lambda' +\begin{cases}
      (1-3\lambda)& \text{if }\lambda\leq 1/3,\\
      (2-3\lambda)& \text{if }1/3<\lambda\leq 1/2,\text{ and,}\\
      (3-3\lambda)& \text{if }\lambda>1/2.
    \end{cases}
 \end{equation}
\end{lem}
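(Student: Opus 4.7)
The three bounds (i)--(iii) imply the displayed inequality by direct arithmetic: since $|V(G)|=4$ and $|E(G)|=5$, one has $pt(G)+\lambda'=5\lambda+4\lambda'=3\lambda+2$, and adding the corrections $(1-3\lambda), (2-3\lambda), (3-3\lambda)$ yields exactly $3,4,5$ in the respective regimes. So it suffices to establish (i), (ii), (iii).

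Label $V(G)=\{a,b,c,d\}$ so that the missing edge is $cd$; thus $a,b$ are the two degree-$3$ vertices and $c,d$ the two degree-$2$ vertices, with $G[\{b,c,d\}]$ being the path $c$-$b$-$d$ and $G[\{a,b,d\}]$ being a triangle. For (i) and (ii) I apply strong $\lambda$-subgraph extension to $G$ with $S=\{a\}$ under unit weights. By inclusiveness $G[S]=K_1\in\Pi$, and the complement is a forest and hence in $\Pi$ by \autoref{obs:forest}. The cut $\delta(S)=\{ab,ac,ad\}$ has size $3$, so the extension supplies $F\subseteq\delta(S)$ with $|F|\geq 3\lambda$ and $G\setminus(\delta(S)\setminus F)\in\Pi$; as $|F|$ is an integer this gives $|F|\geq\lceil 3\lambda\rceil$. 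The resulting subgraph has $|E(G\setminus S)|+|F|=2+|F|$ edges, so $val(G)\geq 2+\lceil 3\lambda\rceil$: this yields $val(G)\geq 3$ for every $\lambda>0$ (proving (i)) and $val(G)\geq 4$ when $\lambda>1/3$ (proving (ii)).

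For (iii), with $\lambda>1/2$, I first upgrade: every (oriented/labelled) triangle lies in $\Pi$ in this regime. Indeed, applying strong extension to $K_3$ with $S$ a single vertex, both $G[S]=K_1$ and $G\setminus S=K_2$ are in $\Pi$ by inclusiveness, and the cut has size $2$, so $|F|\geq 2\lambda>1$ forces $F=\delta(S)$, whence $K_3\in\Pi$. I now apply strong extension to $G$ itself with $S=\{c\}$: the complement $G[\{a,b,d\}]$ is the triangle just handled, and $\delta(S)=\{ac,bc\}$ again has size $2$, so the same rounding forces $F=\delta(S)$ and gives $G\in\Pi$, i.e.\ $val(G)=5$.

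The main subtlety to be aware of: the naive uniform attempt of cutting through the degree-$3$ vertex $a$ only gives $|F|\geq\lceil 3\lambda\rceil$, which reaches $3$ only once $\lambda>2/3$ and so cannot cover the interval $1/2<\lambda\leq 2/3$ relevant for (iii). The resolution is the two-step passage through the triangle: first promote $K_3$ into $\Pi$ (exploiting the rounding in a size-$2$ cut, which requires only $\lambda>1/2$), then cut $G$ through the degree-$2$ vertex $c$ so that the complement becomes a triangle and the cut itself shrinks to size $2$, at which point the same threshold $\lambda>1/2$ retains every cut edge.
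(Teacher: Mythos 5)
Your proof is correct and follows essentially the same route as the paper: unit-weight applications of the strong $\lambda$-subgraph extension property across small cuts, with integrality of the edge count, and the same two-step passage for $\lambda>1/2$ (first forcing a triangle into $\Pi$ via a size-$2$ cut, then cutting through a degree-$2$ vertex of $G$). The only cosmetic difference is in (i)--(ii), where you cut around a degree-$3$ vertex with the path as complement while the paper uses a spanning-tree argument for (i) and a $K_2$/$K_2$ split for (ii); both yield the same $2+3\lambda$ bound.
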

% \begin{proof}
%   A spanning tree of $G$ has three edges, and so claim (i) follows
%   from \autoref{obs:forest}.

%   Let $V(G)= \{x,y,u,v\}$, and let $\{x,y\}\notin E(G)$. Consider
%   the vertex subset $S=\{x,v\}$, for which $G[S]=K_{2} \in \Pi$,
%   $G \setminus S=G[\{y,u\}]=K_{2}\in\Pi$, and
%   $|\delta(S)|=3$. Applying the strong $\lambda$-subgraph extension
%   property---for unit edge weights---on the set $S$ we get that
%   there exists a subgraph $H'\in\Pi$ of $G$ which has at least
%   $2+3\lambda$ edges.  Since $\lambda>1/3\implies 2+3\lambda>3$,
%   we get claim (ii).

%   Now consider the subgraph $G'=G[\{x,u,v\}]$, and its vertex
%   subset $S'=\{x\}$. We apply the strong $\lambda$-subgraph
%   extension property---again for unit edge weights---to the pair
%   $(G',S')$. Since $G'[S']=K_{1}\in\Pi$, $G'\setminus S'=K_{2}\in
%   \Pi$ and $|\delta(S')|=2$, there exists a subgraph $H''\in\Pi$
%   of $G'$ which has at least $1 + 2\lambda$ edges. For
%   $\lambda>1/2$ this is at least $3$ edges, and so in this case
%   $H''=G'$ and $G[\{x,u,v\}] \in \Pi$. Hence we can use the strong
%   $\lambda$-subgraph extension property for $G$ and $S=\{y\}$ to
%   get a subgraph of $G$ with at least $3+2\lambda$ edges. For
%   $\lambda > 1/2$ this means all the five edges of $G$, proving
%   claim (iii).

%   The second part of the lemma follows from these claims since
%   $2\lambda+4\lambda'=2$.
% \end{proof}

The above lemmata help us prove that Rules~2 and~3 are safe.

\begin{lem}\label{lem:rule-2-safe}\inappendix{}
  Rule~2 is safe.
\end{lem}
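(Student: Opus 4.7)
Assume the tuple $(G' \setminus S', k')$ resulting from Rule~2 is a YES instance, and let $H' \in \Pi$ be a spanning subgraph of $G' \setminus S'$ with $|E(H')| \geq pt(G' \setminus S') + k'$. Write $X_0$ for the non-clique component of $\tG \setminus (\tS \cup \{v\})$ if it exists (else $X_0 = \emptyset$), and $X_{i_1}, \ldots, X_{i_d}$ for the clique components. The goal is to construct a spanning $\tH \in \Pi$ of $\tG \setminus \tS$ with $|E(\tH)| \geq pt(\tG \setminus \tS) + \tk = pt(\tG \setminus \tS) + k' + d\lambda'$.

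The plan is to build $\tH$ block-additively at the cut vertex $v$, invoking the merge lemma (\autoref{obs:merge}) to combine a subgraph $H_j^*$ on each $G_j := \tG[X_{i_j} \cup \{v\}]$ with a subgraph $H_0^*$ on $\tG[X_0 \cup \{v\}]$ (when $X_0 \neq \emptyset$). Failure of Rule~1 gives, for each clique $X_{i_j}$, a non-neighbor $z_j \in X_{i_j}$ of $v$, while connectivity of $\tG \setminus \tS$ supplies a neighbor $y_j \in X_{i_j}$; together they form a cherry $v$--$y_j$--$z_j$ whenever $n_j \geq 2$. I construct $H_j^* \in \Pi$ on $G_j$ with $|E(H_j^*)| \geq pt(G_j) + \lambda'$ by cases: for $n_j = 1$, take $H_j^* = K_2$ with excess exactly $\lambda'$; for $n_j = 2$, take the cherry $K_{1,2}$ which by \autoref{cor:cherry_gain} has excess $2\lambda'$; and for $n_j \geq 3$, apply \autoref{lem:join_two_big} with the cherry as $H_S$ (contributing $k_S = \lambda'$) and a \autoref{cor:unweighted-labelled}-subgraph of the residual clique $K_{n_j - 2}$ as $\oH$ (contributing $\ok = -\lambda'$), yielding $|E(H_j^*)| \geq pt(G_j) + \lambda'$. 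When $X_0 \neq \emptyset$, I obtain $H_0^*$ by applying strong $\lambda$-subgraph extension with $S = \{v\}$ to extend $H'$ by a set $F$ of at least $\lambda e_0$ edges from $v$ to $X_0$.

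Summing via the merge lemma, the per-block excesses yield $|E(\tH)| \geq pt(\tG \setminus \tS) + d\lambda'$ in Case $X_0 = \emptyset$, which meets the target since $k' \leq 0$. In Case $X_0 \neq \emptyset$, the extension of $H'$ across $v$ incurs a loss of $\lambda'$ relative to $pt(\tG[X_0 \cup \{v\}])$, so the direct count gives only $|E(\tH)| \geq pt(\tG \setminus \tS) + k' + (d-1)\lambda'$, leaving a residual shortfall of exactly $\lambda'$. The main obstacle will be closing this gap. I plan to do so by upgrading one clique's contribution by an extra $\lambda'$, using whichever tool the configuration affords: if some $n_j = 2$ the cherry already provides $+2\lambda'$; if some $n_j \geq 3$ has $n_j - 2$ even, \autoref{lem:enlarge} applied with a perfect matching in $K_{n_j - 2}$ boosts the excess by $\lambda'$; if some clique has $e_j \geq 2$, a second neighbor $y_j'$ of $v$ in $X_{i_j}$ yields an induced $K_4 - e$ on $\{v, y_j, y_j', z_j\}$ (missing edge $vz_j$), on which \autoref{lem:almost_K_4_gain_gen} supplies the needed extra edges. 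The residual configuration where every clique is a single vertex attached to $v$ by a single edge is handled by taking $\tH$ to include all pendant edges $vy_j$ (each block $K_2 \in \Pi$) and exploiting the integer rounding $|F| \geq \lceil \lambda e_0 \rceil$ in the strong extension together with \autoref{obs:forest}.
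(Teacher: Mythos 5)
Your overall route is the paper's route: decompose $\tG\setminus\tS$ at the cut vertex $v$ into the pieces $X_i\cup\{v\}$, certify an excess of $\lambda'$ on each deleted-clique piece via a cherry (plus a matching, or a Poljak--Turz\'{\i}k subgraph of the leftover clique, combined through \autoref{lem:join_two_big} or \autoref{lem:enlarge}), and merge everything with \autoref{obs:merge}; those per-clique bounds are correct, as is your observation that in the case with no non-clique component the hypothesis forces $k'\le 0$. The divergence is in the treatment of the remaining component together with $v$: the paper's proof takes that piece to carry excess $k'$ directly (it identifies it with the reduced instance and never re-attaches $v$), whereas you re-attach $v$ to $H'$ by strong $\lambda$-subgraph extension, concede a loss of $\lambda'$, and promise to recover it elsewhere. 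It is this recovery plan that does not go through.

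Concretely, after your first two sub-cases (some clique of size $2$, some clique of even size) the residual configuration is: every deleted clique is odd of size at least $3$. Your remaining tools do not cover it. The $K_4$-minus-an-edge argument needs a clique in which $v$ has two neighbours, which need not exist (each odd clique may meet $N(v)$ in a single vertex); and even when such a clique exists, \autoref{lem:almost_K_4_gain_gen} does not yield the extra $\lambda'$ for every $\lambda$: its guarantee $\lambda'+(1-3\lambda)$ is strictly less than $2\lambda'$ for all $\lambda\in(1/5,1/3]$. Your final fallback addresses a configuration that cannot occur --- a deleted clique of size $1$ would either make Rule~\ref{rul:block_clique} applicable or contradict connectivity, so every deleted clique has at least two vertices --- and the integrality gain $|F|\ge\lceil\lambda e_0\rceil$ is worthless when $\lambda e_0$ is an integer, so it cannot close a $\lambda'$ gap for general $\lambda$. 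Moreover, no per-piece improvement is available in the residual case: already for $\lambda=1/2$ a triangle attached to $v$ by a single edge has excess exactly $\lambda'$, so the bookkeeping you set up genuinely stalls at $pt(\tG\setminus\tS)+\tk-\lambda'$. To finish, you must either handle the remaining component jointly with $v$ as the paper's proof does, or supply a new argument for that combined piece; as written, the proof is incomplete.
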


Following the notation of Rule~3, observe that for the vertex
subset $T=\{a,b,c\}\subseteq V(\tG\setminus\tS)$ we have---from
\autoref{cor:cherry_gain}---that $\tG[T]\in\Pi$ and $val(T)\geq
pt(T)+\lambda'+\lambda'$. Since $G'\setminus
S'=(\tG\setminus\tS)\setminus T$, if $val(G'\setminus S')\geq
pt(G'\setminus S')+k'$ then applying \autoref{lem:join_two_big} we
get that $val(\tG\setminus\tS)\geq
pt(\tG\setminus\tS)+\lambda'+k'=pt(\tG\setminus\tS)+\tk$. Hence we
get
\begin{lemma}\label{lem:rul-3-safe}
  Rule~3 is safe.
\end{lemma}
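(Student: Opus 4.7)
The plan is to apply \autoref{lem:join_two_big} to the tuple $(\tG\setminus\tS, T)$ with $T=\{a,b,c\}$, using that the induced subgraph on $T$ is a cherry and that the complement already has enough edges by hypothesis. The whole argument essentially repackages the paragraph preceding the lemma.

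First, I would set up the decomposition. By the preconditions of Rule~3, the edges $ab$ and $bc$ are in $\tG$ while $ac$ is not, so $\tG[T]_{S}$ is isomorphic to $K_{2,1}$. By \autoref{cor:cherry_gain}, $\tG[T]\in\Pi$ and has exactly $pt(T)+\lambda'+\lambda'$ edges, so we may take this graph as the ``$H_S$'' of \autoref{lem:join_two_big} with surplus $k_S=\lambda'$. On the other side, $G'\setminus S'=(\tG\setminus\tS)\setminus T$ is connected (this uses the last precondition of Rule~3) and by assumption it has a spanning subgraph $\oH\in\Pi$ with at least $pt(G'\setminus S')+k'$ edges; writing this as $pt(G'\setminus S')+\lambda'+(k'-\lambda')$ fits the form required by \autoref{lem:join_two_big} with $\ok=k'-\lambda'$ (the lemma does not demand $\ok\ge 0$).

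Second, I would invoke \autoref{lem:join_two_big} on $G=\tG\setminus\tS$ with the vertex set $S=T$. The lemma yields a spanning subgraph $\tH\in\Pi$ of $\tG\setminus\tS$ with at least
\[
pt(\tG\setminus\tS)+\lambda'+k_S+\ok \;=\; pt(\tG\setminus\tS)+\lambda'+\lambda'+(k'-\lambda')\;=\;pt(\tG\setminus\tS)+(k'+\lambda')
\]
edges. Since Rule~3 sets $k'=\tk-\lambda'$, this is exactly $pt(\tG\setminus\tS)+\tk$, establishing that $(\tG\setminus\tS,\tk)$ is also a \YES{} instance. By \autoref{def:safe_rule}, Rule~3 is safe.

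I do not foresee a real obstacle: the structural work is done once we recognise that $T$ induces $K_{2,1}$, and every other step is a mechanical application of previously established tools. The only small care required is tracking the ``extra'' $\lambda'$ surpluses so that the arithmetic of \autoref{lem:join_two_big} absorbs the penalty $k'=\tk-\lambda'$ charged by the rule. Correctness of the rule does not depend on any special feature of $\Pi$ beyond strong $\lambda$-extendibility, which is exactly what \autoref{lem:join_two_big} encapsulates.
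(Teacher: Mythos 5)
Your proof is correct and follows essentially the same route as the paper: the paper's argument (given in the paragraph preceding the lemma) likewise takes $T=\{a,b,c\}$, uses \autoref{cor:cherry_gain} to get the $K_{2,1}$ surplus of $2\lambda'$, and applies \autoref{lem:join_two_big} to obtain $pt(\tG\setminus\tS)+\lambda'+k'=pt(\tG\setminus\tS)+\tk$. Your explicit bookkeeping with $k_S=\lambda'$ and $\ok=k'-\lambda'$ is just a spelled-out version of the same computation.
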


To show that Rule~4 is safe, we need a number of preliminary
results. We first observe that---while the rule is stated in a
general form---the rule only ever applies when it can delete
exactly one component.

\begin{obs}\label{obs:appl_rul_4}\inappendix{}
  Whenever Rule~\ref{rul:discon_2_sep} applies, there is exactly
  one component to be deleted, and this component has at least 2
  vertices.
\end{obs}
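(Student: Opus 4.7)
The plan is to argue by contradiction, under the standard convention that Rule~4 is invoked only when Rules~1--3 do not apply. Write $C_1,\dots,C_\ell$ for the components of $\tG\setminus(\tS\cup\{x,y\})$ and call $C_i$ \emph{good} if both $V(C_i)\cup\{x\}$ and $V(C_i)\cup\{y\}$ are cliques in $\tG\setminus\tS$; the preconditions of Rule~4 guarantee at least one good $C_i$ and at most one non-good (``bad'') $C_i$, and the invariant that $\tG\setminus\tS$ is connected forces every $C_i$ to be adjacent to $x$ or $y$. I aim to establish two statements: (a)~no good $C_i$ is a singleton, and (b)~there is at most one good $C_i$; together with the guaranteed existence of a good component, these yield the observation.

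For (a) I would suppose some good $C=\{v\}$ exists, which gives $xv,yv\in E(\tG)$ and (by Rule~4's condition~1) $xy\notin E(\tG)$. If $\ell=1$, then $X=\{x\}$ is a connected component of $\tG\setminus(\tS\cup\{v\})$ and $X\cup\{v\}$ is a clique, so Rule~1 fires on $v$. If $\ell=2$, the triple $(x,v,y)$ is a valid Rule~3 triple and $\tG\setminus(\tS\cup\{x,v,y\})=\tG[C_2]$ is connected. If $\ell\ge 3$, at most one of $C_2,\dots,C_\ell$ is bad, so some other $C_j$ is good; for any $u\in C_j$ both $(v,x,u)$ and $(v,y,u)$ are valid Rule~3 triples, and I would choose the one whose surviving vertex of $\{x,y\}$ is adjacent to the bad component (if any). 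For (b) I would take two good $C_i\ne C_j$, note that by (a) both have $\ge 2$ vertices (so $C_i\setminus\{u\}$ and $C_j\setminus\{v\}$ are non-empty cliques for any $u\in C_i,v\in C_j$), and run the analogous argument on the triples $(u,x,v)$ and $(u,y,v)$.

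The main obstacle is the bad component that Rule~4 permits: if it is adjacent to only one of $x,y$, then removing that vertex in a Rule~3 triple would disconnect it from the rest of the remainder graph, so the remainder would be disconnected and Rule~3 would \emph{not} fire. The remedy, which I plan to exploit throughout, is the $x$/$y$-symmetry of the triples I choose: every candidate triple comes in an $x$-version and a $y$-version, and I pick whichever keeps in place the vertex of $\{x,y\}$ to which the bad component is actually adjacent. Beyond this, the arguments reduce to verifying that the surviving vertex of $\{x,y\}$, combined with the clique structure on the good components, connects everything that remains after the triple is deleted.
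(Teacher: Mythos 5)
Your proposal is correct, and at its core it uses the same mechanism as the paper's proof: contradict Rule~4's precondition by exhibiting a three-vertex path whose endpoints are non-adjacent and whose deletion leaves a connected graph, so that Rule~3 would apply. The differences are in the execution. The paper first excludes two deletable components by taking $u,v$ in distinct deletable components, arguing that $\tG\setminus(\tS\cup\{x\})$ must be connected, and then using the triple $(u,x,v)$; afterwards it kills the singleton case with the triple $(x,v,y)$. You instead keep both the $x$-version and the $y$-version of each candidate triple and select the one whose surviving vertex of $\{x,y\}$ is adjacent to the (at most one) bad component; this symmetric choice is an explicit and robust way of handling exactly the obstruction that the paper treats via the connectivity of $\tG\setminus(\tS\cup\{x\})$, and your case analysis ($\ell=2$, $\ell\ge 3$, and the two-good-components argument) checks out. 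One caveat: your $\ell=1$ case invokes Rule~1, and you assume from the outset that Rules~1--3 all fail before Rule~4 is applied. The observation as stated presupposes only Rule~4's own precondition (Rule~3 inapplicable), and the paper's proof uses nothing beyond that; in the one configuration where you appeal to Rule~1 (namely $\tG\setminus\tS$ being the path on $\{x,v,y\}$ with centre $v$), the paper instead uses the triple $(x,v,y)$ and implicitly counts the empty remainder as connected. So, read literally, you prove a marginally weaker statement, but the priority convention you assume is the one the paper itself relies on in its later lemmas (for instance \autoref{lem:rul_4_neighborhood}), so nothing breaks downstream; just be aware that either your convention or the ``empty graph is connected'' convention is needed to close that corner case.
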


% \begin{proof}
%   Suppose the rule can be applied, and there are at least 2
%   components to be deleted. Pick two vertices $u$ and $v$ in two
%   such distinct components. If the graph $\tG\setminus (\tS \cup
%   \{x\})$ is disconnected, then there is a component $C_x$ of this
%   graph which is not connected to $y$. Similarly, if $\tG\setminus
%   (\tS \cup \{y\})$ is disconnected, then there is a component
%   $C_y$ of this graph which is not connected to $x$. But if either
%   of these happens, then since neither $C_x \cup \{y\}$ nor $C_y
%   \cup \{x\}$ is a clique, the rule does not apply---a
%   contradiction. Hence we get that the graph $\tG\setminus (\tS
%   \cup \{x\})$ is connected. But then so is the graph $\tG
%   \setminus (\tS \cup \{u,x,v\})$, and as we have $\{u,x\},
%   \{x,v\} \in E(\tG)$, $\{u,v\} \notin E(\tG)$,
%   Rule~\ref{rul:2_path} applies---a contradiction. So there is
%   exactly one component to be deleted. Now if the only component
%   to be deleted has only one vertex $v$, then $\tG \setminus (\tS
%   \cup \{x,v,y\})$ is connected, we have $\{x,v\}, \{v,y\} \in
%   E(\tG)$, $\{x,y\} \notin E(\tG)$, and so Rule~\ref{rul:2_path}
%   applies, a contradiction.
% \end{proof}

Our next few lemmas help us further restrict the structure of the
subgraph to which Rule~4 applies. We start with a result culled
from Crowston et al.'s analysis of the four rules.

\begin{lem}\textup{\textbf{~\cite{CrowstonEtAl2012}}}\label{lem:crowston-structural-lemma}\inappendix{}
  If none of Rules~1,~2, and~3 applies to
  $(\tG,\tS,\tk)$, and Rule~4 \emph{does} apply, then one can find
  \begin{itemize}
  \item A vertex $r\in V(\tG\setminus\tS)$ and a set $X\subseteq
    V(\tG\setminus\tS)$ such that $X$ is a connected component of
    $\tG\setminus(\tS\cup\{r\})$, and the graph 
    $(\tG\setminus\tS)[X\cup\{r\}]$ is 2-connected;
    \item Vertices $x,y\in X$ such that $\{x,y\}\notin E(\tG)$ and
      \begin{itemize}
      \item $(\tG\setminus\tS)\setminus\{x,y\}$ has exactly two
        components $G',C$,
      \item $r\in G'$; $C\cup\{x\},C\cup\{y\}$ are cliques,
        and each of $x,y$ is adjacent to some vertex in $G'$
      \end{itemize}
  \end{itemize}
\end{lem}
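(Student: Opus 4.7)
My plan is to use \autoref{obs:appl_rul_4} to extract the structural witness from Rule~4 and then to exploit the non-applicability of Rules~1--3 to force this witness to sit inside a leaf block of the block-tree of $\tG\setminus\tS$.

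First I would set up notation. By \autoref{obs:appl_rul_4}, the applicability of Rule~4 furnishes a non-adjacent pair $x,y\in V(\tG\setminus\tS)$ and a component $C$ of $(\tG\setminus\tS)\setminus\{x,y\}$ with $|C|\geq 2$ such that $C\cup\{x\}$ and $C\cup\{y\}$ are cliques, while the unique other component, which I call $G'$, fails this property. I would then argue that \emph{both} $x$ and $y$ have at least one neighbor in $G'$: otherwise, say $y$ has no neighbor in $G'$, and then removing $x$ from $\tG\setminus\tS$ would isolate the clique $C\cup\{y\}$, so Rule~1 would apply at the cut-vertex $x$ to the clique component $C\cup\{y\}$, contradicting the hypothesis.

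Next I would locate the block $B$ of $\tG\setminus\tS$ containing $C\cup\{x,y\}$. Picking two distinct $c_1,c_2\in C$ yields the $4$-cycle $x\,c_1\,y\,c_2\,x$, so $x,y,c_1,c_2$ all belong to a common block; combined with the clique structure of $C\cup\{x\}$ and $C\cup\{y\}$ this gives $C\cup\{x,y\}\subseteq V(B)$. Moreover, because both $x$ and $y$ have neighbors in the connected subgraph induced by $G'$, there is also an $x$--$y$ path through $G'$, so $B$ extends into $G'$ as well. I would then establish that $B$ is in fact a leaf block of $\tG\setminus\tS$: any additional cut-vertex $r'$ of $B$ would carry a pendant subtree of the block tree attached at $r'$, and a careful case analysis on the possible shape of that pendant subtree---guided by the fact that none of Rules~1--3 is triggered---would yield a contradiction in every case.

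Having secured the leaf block $B$, I set $r$ to be its unique cut-vertex (or any vertex of $V(B)\cap G'$ if $B=\tG\setminus\tS$) and $X:=V(B)\setminus\{r\}$. Then $X\cup\{r\}=V(B)$ is $2$-connected and $X$ is a connected component of $\tG\setminus(\tS\cup\{r\})$. Since neither $x$ nor $y$ is a cut-vertex of $\tG\setminus\tS$ (removing one leaves the other as a connector between $C$ and $G'$), we have $r\neq x,y$, hence $x,y\in X$; and $r\in V(B)\cap G'\subseteq G'$ as required. The main obstacle is precisely the leaf-block claim for $B$: ruling out extra cut-vertices requires the delicate case analysis on pendant subtrees using Rules~1--3, and this is the technical heart of the lemma.
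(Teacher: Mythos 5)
There is a genuine gap. The entire content of this lemma is the existence of the vertex $r$ and the $2$-connected piece $X\cup\{r\}$ containing the Rule~4 configuration, and your proposal does not prove it: you reduce it to the claim that the block $B$ containing $C\cup\{x,y\}$ is a leaf block of $\tG\setminus\tS$, and then write that "a careful case analysis on the possible shape of that pendant subtree \dots would yield a contradiction in every case." That case analysis \emph{is} the lemma; the paper obtains it by citing the case analysis in the proof of Lemma~7 of Crowston et al.\ (their case 1.(b).iii.C, with $a=x$, $c=y$), and leaving it as a promised contradiction is not a proof. Moreover, your plan aims at a stronger statement than what is claimed: you fix an \emph{arbitrary} application of Rule~4 (as produced by \autoref{obs:appl_rul_4}) and try to show its block is a leaf block, whereas the lemma only says one can \emph{find} an application with this structure -- and indeed the later lemmas (e.g.\ \autoref{obs:rul_4_in_leaf}) explicitly choose $x,y,C$ "as in" this lemma. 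You give no argument that your stronger version holds, nor a mechanism for switching to a better-placed application if it fails, so even the plan's overall shape is not secured. You also do not argue that $(\tG\setminus\tS)\setminus\{x,y\}$ really has a second component $G'$ at all (if $C$ were the only component, Rule~2 with $v=x$ would apply; this needs to be said).

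A smaller but concrete error: to show that both $x$ and $y$ have a neighbour in $G'$ you invoke Rule~1 at $x$ with component $C\cup\{y\}$. Rule~1 requires $X\cup\{v\}=C\cup\{x,y\}$ to be a clique, which it is not, since $\{x,y\}\notin E(\tG)$; so Rule~1 does not apply there. The correct argument (and the one the paper gives) is that if, say, $y$ had no neighbour in $G'$, then the components of $(\tG\setminus\tS)\setminus\{x\}$ would be the clique $C\cup\{y\}$ and at most one non-clique component, so Rule~2 would apply, contradicting the hypothesis.
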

% \begin{proof}
%   Crowston et al. show~\cite[Lemma~7]{CrowstonEtAl2012} that to
%   any connected graph with at least one edge, at least one of
%   Rules~1--4 applies. Our lemma corresponds directly to case
%   1.(b).iii.C of their case analysis, by setting $a=x,c=y$. For
%   the last point, observe that if one of $x,y$ is \emph{not}
%   adjacent to any vertex in $G'$, then Rule~2 would apply.
% \end{proof}

From this we get the following.

\begin{lem}\label{obs:rul_4_in_leaf}\inappendix{}
  Suppose Rules~1,~2, and~3 do not apply, and Rule~4 applies. Then
  we can apply Rule~\ref{rul:discon_2_sep} in such a way that if
  $x,y$ are the vertices to be added to $\tS$ and $C$ the clique
  to be deleted, then $N(x) \cup N(y) \setminus (C \cup \tS)$
  contains at most one vertex $z$ such that $\tG \setminus (\tS
  \cup \{z\})$ is disconnected.
\end{lem}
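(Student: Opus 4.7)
The plan is to invoke the structural Lemma~\ref{lem:crowston-structural-lemma} to produce a specific application of Rule~\ref{rul:discon_2_sep} for which the desired property can be verified by tracing where the neighbors of $x$ and $y$ can lie, and by exploiting the 2-connectedness of the block identified by that lemma to rule out cut vertices inside it.

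First, since Rules~1,~2,~3 do not apply but Rule~4 does, Lemma~\ref{lem:crowston-structural-lemma} supplies a vertex $r\in V(\tG\setminus\tS)$ and a set $X\subseteq V(\tG\setminus\tS)$ such that $X$ is a connected component of $\tG\setminus(\tS\cup\{r\})$, the subgraph $(\tG\setminus\tS)[X\cup\{r\}]$ is 2-connected, together with vertices $x,y\in X$ with $\{x,y\}\notin E(\tG)$ and a clique $C$ such that $(\tG\setminus\tS)\setminus\{x,y\}$ has exactly two components $G'$ and $C$ with $r\in G'$ and $C\cup\{x\},C\cup\{y\}$ cliques. These $x,y$ (and the associated $C$) satisfy the hypotheses of Rule~\ref{rul:discon_2_sep}, and this is the application we shall carry out.

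Next I would localize $N(x)\cup N(y)$. Because $x,y\in X$ and $X$ is a component of $\tG\setminus(\tS\cup\{r\})$, any neighbor of $x$ or $y$ in $V(\tG)\setminus\tS$ must lie in $X\cup\{r\}$. Removing $C$ and $\tS$ and noting $\{x,y\}\notin E(\tG)$ yields the inclusion
\[
N(x)\cup N(y)\setminus(C\cup\tS)\ \subseteq\ \bigl(X\setminus(C\cup\{x,y\})\bigr)\cup\{r\}.
\]
So every candidate cut vertex $z$ appearing in the left-hand set lies either in $X$ or is $r$ itself.

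The main step is to show that no $u\in X$ can be a cut vertex of $\tG\setminus\tS$. Since $(\tG\setminus\tS)[X\cup\{r\}]$ is 2-connected, the graph $(\tG\setminus\tS)[(X\cup\{r\})\setminus\{u\}]$ remains connected. Any vertex of $\tG\setminus\tS$ that lies outside $X\cup\{r\}$ belongs to a component of $\tG\setminus(\tS\cup\{r\})$ distinct from $X$, and is therefore connected to the rest of $\tG\setminus\tS$ through $r$ via a path avoiding $u$. Consequently $\tG\setminus(\tS\cup\{u\})$ is connected, so $u$ is not a cut vertex of $\tG\setminus\tS$. Combining with the inclusion above, the only element of $N(x)\cup N(y)\setminus(C\cup\tS)$ whose removal from $\tG\setminus\tS$ could possibly disconnect the graph is $r$; in particular there is at most one such $z$, which is exactly what the lemma asserts.

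The potential obstacle is the argument that no $u\in X$ is a cut vertex of $\tG\setminus\tS$, because the 2-connectedness from Lemma~\ref{lem:crowston-structural-lemma} is only internal to the block $X\cup\{r\}$ and one must also handle what happens to the rest of $\tG\setminus\tS$; however, as sketched above, this is immediate from the fact that $X$ is a component of $\tG\setminus(\tS\cup\{r\})$, so everything outside $X\cup\{r\}$ reaches $X$ only through $r$, which remains present after deleting $u$. No further case analysis is needed.
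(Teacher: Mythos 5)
Your proof is correct and follows essentially the same route as the paper: invoke Lemma~\ref{lem:crowston-structural-lemma} to fix the application of Rule~4, observe $N(x)\cup N(y)\subseteq X\cup\{r\}$, and use the 2-connectedness of $(\tG\setminus\tS)[X\cup\{r\}]$ (together with the fact that everything outside $X\cup\{r\}$ attaches to the connected graph $\tG\setminus\tS$ only through $r$) to conclude that $r$ is the only possible cut vertex among the candidates. You merely spell out explicitly the step the paper states tersely, which is fine.
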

% \begin{proof}
%   In the application of Rule~4 we set $x,y,C$ as in
%   \autoref{lem:crowston-structural-lemma}. Further, let $r,X$ be
%   as in \autoref{lem:crowston-structural-lemma}. Then since
%   $x,y\in X$ and $X$ is a connected component of
%   $\tG\setminus(\tS\cup\{r\})$, we have that $(N(x)\cup
%   N(y))\subseteq X\cup\{r\}$. Since
%   $(\tG\setminus\tS)[X\cup\{r\}]$ is 2-connected, it follows that
%   $r$ is the only vertex in $X\cup\{r\}$ which could possibly be a
%   cut vertex of $(\tG\setminus\tS)$.
% \end{proof}

We now show that in such a case
$N(x)\setminus(C\cup\tS)=N(y)\setminus(C\cup\tS)=\{r\}$, and so
the graph $\tG \setminus (\tS \cup \{r\})$ is not connected. First
we need the following simple lemma.

\begin{lem}\label{lem:rul_4_neigh_partial_cut}\inappendix{}
  Whenever Rule~\ref{rul:discon_2_sep} applies, with $x,y$ the
  vertices to be added to $\tS$ and $C$ the clique to be deleted,
  every $u$ in $N(x) \setminus (C \cup \tS)$ is a cut vertex in
  $\tG \setminus (\tS \cup \{x\})$ and every $u$ in $N(y)
  \setminus (C \cup \tS)$ is a cut vertex in $\tG \setminus (\tS
  \cup \{y\})$.
\end{lem}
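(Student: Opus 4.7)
The plan is to argue by contradiction. Fix $u \in N(x) \setminus (C \cup \tS)$ and suppose toward contradiction that $u$ is \emph{not} a cut vertex of $\tG \setminus (\tS \cup \{x\})$; equivalently, that $\tG \setminus (\tS \cup \{x, u\})$ is connected. I will exhibit a triple to which Rule~\ref{rul:2_path} applies in $(\tG, \tS, \tk)$, contradicting the hypothesis built into Rule~\ref{rul:discon_2_sep} that Rule~\ref{rul:2_path} does not apply.

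Since $C$ is a (nonempty) component of $\tG \setminus (\tS \cup \{x, y\})$, I can pick any $c \in C$. Note that $u \neq c$ since $u \notin C$, and $u \neq x, y$ (the latter because $\{x,y\} \notin E(\tG)$ while $u \in N(x)$). The edge hypotheses of Rule~\ref{rul:2_path} for the triple $(u, x, c)$ are then immediate: $\{u, x\} \in E(\tG)$ since $u \in N(x)$; $\{x, c\} \in E(\tG)$ since $C \cup \{x\}$ is a clique (by the defining condition of Rule~\ref{rul:discon_2_sep}); and $\{u, c\} \notin E(\tG)$ because $u$ and $c$ lie in different components of $\tG \setminus (\tS \cup \{x, y\})$.

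The heart of the argument, and the step I expect to be the main obstacle, is to verify that $\tG \setminus (\tS \cup \{u, x, c\})$ is connected. I will show that $c$ is a \emph{simplicial} vertex of $\tG \setminus (\tS \cup \{u, x\})$: its neighborhood in that graph is contained in $(C \setminus \{c\}) \cup \{y\}$, because $C$ is a component of $\tG \setminus (\tS \cup \{x, y\})$ (so $c$ has no neighbors in any other component of that graph), and the only possible neighbors of $c$ in $V(\tG) \setminus \tS$ lying outside $C$ are $x$ and $y$. Moreover $(C \setminus \{c\}) \cup \{y\}$ is itself a clique, being contained in the clique $C \cup \{y\}$ guaranteed by Rule~\ref{rul:discon_2_sep}. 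Since simplicial vertices are never cut vertices, deleting $c$ from the (by the contradiction hypothesis) connected graph $\tG \setminus (\tS \cup \{u, x\})$ preserves connectedness, so $\tG \setminus (\tS \cup \{u, x, c\})$ is indeed connected.

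Combining the edge conditions with this connectedness, Rule~\ref{rul:2_path} applies to the triple $(u, x, c)$, producing the desired contradiction. Thus $u$ is a cut vertex of $\tG \setminus (\tS \cup \{x\})$. The symmetric statement for neighbors of $y$ follows by swapping the roles of $x$ and $y$ throughout, using the analogous clique $C \cup \{x\}$ in place of $C \cup \{y\}$ when bounding the neighborhood of $c$.
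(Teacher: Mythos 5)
Your proof is correct and takes essentially the same route as the paper's: assume $\tG\setminus(\tS\cup\{x,u\})$ is connected, pick a vertex of $C$, and derive a contradiction by showing Rule~\ref{rul:2_path} applies to the path $u$--$x$--$c$. Your explicit argument that $c$ is simplicial in $\tG\setminus(\tS\cup\{u,x\})$ (so its deletion preserves connectivity) is just a more detailed justification of the step the paper dispatches tersely with ``since $|C|\geq 2$''.
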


% \begin{proof}
%   We only prove the first part, and the second part follows by
%   symmetry. Assume that for some $u \in N(x) \setminus (C \cup
%   \tS)$ the graph $\tG \setminus (\tS \cup \{x,u\})$ is connected,
%   and let $w$ be a vertex of~$C$. Since $|C| \geq 2$, the graph
%   $\tG \setminus (\tS \cup \{x,u,w\})$ is also connected and as
%   $\{x,u\},\{x,w\} \in E$ and $\{u,w\} \notin E$,
%   Rule~\ref{rul:2_path} applies to $\tG \setminus \tS$ --- a
%   contradiction. Hence $\tG \setminus (\tS \cup \{x,u\})$ is
%   disconnected for every $u \in N(x) \setminus (C \cup \tS)$.
% \end{proof}

This allows us to enforce a very special way of applying Rule~4.

\begin{lem}\label{lem:rul_4_neighborhood}\inappendix{}
  Suppose Rules~1,~2, and~3 do not apply, and Rule~4 applies. Then
  we can apply Rule~\ref{rul:discon_2_sep} in such a way that if
  $x,y$ are the vertices to be added to $\tS$ and $C$ the clique
  to be deleted, then $N(x) \setminus (C \cup \tS)=N(y) \setminus
  (C \cup \tS)=\{z\}$, and $\tG \setminus (\tS \cup \{z\})$ is
  disconnected.
\end{lem}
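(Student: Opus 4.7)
The plan is to start from the Rule~\ref{rul:discon_2_sep} application produced by \autoref{obs:rul_4_in_leaf}, yielding vertices $x, y$ and a deleted clique $C$ such that $N(x) \cup N(y) \setminus (C \cup \tS)$ contains at most one vertex $z$ for which $\tG \setminus (\tS \cup \{z\})$ is disconnected. I will strengthen this to show that both $N(x) \setminus (C \cup \tS)$ and $N(y) \setminus (C \cup \tS)$ are non-empty and that every vertex in their union is a disconnecting vertex; combined with the ``at most one'' bound this forces $N(x) \setminus (C \cup \tS) = N(y) \setminus (C \cup \tS) = \{z\}$ with $\tG \setminus (\tS \cup \{z\})$ disconnected, exactly as required.

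Non-emptiness is immediate from \autoref{lem:crowston-structural-lemma}: in its notation each of $x,y$ has a neighbour in $G'$, and since $G'$ is a component of $(\tG \setminus \tS) \setminus \{x,y\}$ disjoint from $C$, we have $G' \cap (C \cup \tS) = \emptyset$, so $N(x) \cap G'$ and $N(y) \cap G'$ witness non-emptiness of the two external neighbourhoods.

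The core step is to show that every $u \in N(x) \setminus (C \cup \tS)$, and symmetrically every $v \in N(y) \setminus (C \cup \tS)$, makes $\tG \setminus (\tS \cup \{u\})$ disconnected. Suppose for contradiction it is connected. By \autoref{lem:rul_4_neigh_partial_cut}, $u$ is a cut vertex of $\tG \setminus (\tS \cup \{x\})$, so $\tG \setminus (\tS \cup \{u,x\})$ is disconnected; together with the contradictory assumption, $x$ is a cut vertex of $\tG \setminus (\tS \cup \{u\})$ that bridges every component of $\tG \setminus (\tS \cup \{u,x\})$. One of those components, call it $A$, must contain $\{y\} \cup C$ since $y$ is adjacent to every vertex of $C$; the remaining components lie inside $G' \setminus \{u\}$. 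A short case analysis on whether $y$ is adjacent to $u$ and on the structure of $G' \setminus \{u\}$ then produces, in each case, either (i) an induced $P_3$ inside $\tG \setminus \tS$ whose removal keeps $\tG \setminus \tS$ connected---violating the non-applicability of Rule~\ref{rul:2_path}; (ii) a component-plus-cutvertex configuration triggering Rule~\ref{rul:block_clique} or Rule~\ref{rul:block_not_clique}; or (iii) an alternative valid application of Rule~\ref{rul:discon_2_sep} that exhibits a second disconnecting vertex distinct from $z$, contradicting \autoref{obs:rul_4_in_leaf}. A clean illustrative case is $G' = \{u\}$: then $u$ must be adjacent to both $x$ (given) and $y$ (as $y$'s required $G'$-neighbour), while $\{x,y\} \notin E(\tG)$, so $\{x,u,y\}$ induces a $P_3$ whose removal leaves the clique $C$, which is connected---immediately triggering Rule~\ref{rul:2_path}.

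The main obstacle is this case analysis: one has to track how $u$, which sits inside the 2-connected block $X \cup \{r\}$ of \autoref{lem:crowston-structural-lemma}, can be extracted to form a replacement pair (essentially $(u,y)$ or $(x,u)$) satisfying all preconditions of Rule~\ref{rul:discon_2_sep}, so as to contradict the ``at most one'' conclusion of \autoref{obs:rul_4_in_leaf}. Once the core step is in hand, \autoref{obs:rul_4_in_leaf} collapses $N(x) \cup N(y) \setminus (C \cup \tS)$ to the single vertex $z$; the non-emptiness established earlier forces $z$ to lie in both $N(x)$ and $N(y)$; and the disconnectedness of $\tG \setminus (\tS \cup \{z\})$ is part of the definition of $z$.
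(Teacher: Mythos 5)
Your overall frame (non-emptiness from \autoref{lem:crowston-structural-lemma}, \autoref{lem:rul_4_neigh_partial_cut} for cut vertices of $\tG\setminus(\tS\cup\{x\})$, and a final contradiction against the ``at most one'' bound of \autoref{obs:rul_4_in_leaf}) matches the paper's, but the proof has a genuine gap exactly where the lemma is hard. Your ``core step'' -- that every $u\in N(x)\setminus(C\cup\tS)$ disconnects $\tG\setminus(\tS\cup\{z\})$-style -- is only established by asserting that ``a short case analysis'' always yields (i) a Rule~\ref{rul:2_path} configuration, (ii) a Rule~1/2 configuration, or (iii) a second disconnecting vertex. This trichotomy is never proved, and the one case you work out ($G'=\{u\}$) is trivial and not representative: the real difficulty is when $x$ (or $y$) has two or more neighbours outside $C\cup\tS$ and the non-disconnecting neighbour $u$ sits deep inside the 2-connected piece $X\cup\{r\}$. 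Knowing only that $x$ bridges the components of $\tG\setminus(\tS\cup\{u,x\})$ does not by itself produce any of your three outcomes; one must exhibit a concrete second cut vertex of $\tG\setminus\tS$ lying in $N(x)\cup N(y)\setminus(C\cup\tS)$. The paper does this with a nontrivial construction you do not replace: in the block tree of $G_x=\tG\setminus(\tS\cup\{x\})$ it selects two ``good'' cut vertices $a_1\neq a_2\in N(x)\setminus(C\cup\tS)$ whose pendant components avoid $N(x)$, uses \autoref{obs:rul_4_in_leaf} to make one of them, $a_1$, non-disconnecting, deduces $C\cup\{y\}\subseteq A_1$ and hence that $a_2$ is a cut vertex of $\tG\setminus\tS$ adjacent to $x$ but not $y$, and then repeats for $y$ to obtain two distinct disconnecting vertices, contradicting \autoref{obs:rul_4_in_leaf}. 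Nothing in your sketch performs (or circumvents) this selection step.

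Two of your fallback options are also unsound as stated. The ``replacement pair $(x,u)$'' cannot be a Rule~\ref{rul:discon_2_sep} pair at all, since $u\in N(x)$ while the rule requires the two vertices to be non-adjacent; and exhibiting an \emph{alternative} valid application of Rule~\ref{rul:discon_2_sep} with its own disconnecting vertex does not contradict \autoref{obs:rul_4_in_leaf}, which only asserts the existence of \emph{some} application with at most one disconnecting vertex in $N(x)\cup N(y)\setminus(C\cup\tS)$; the contradiction must be produced inside the application already fixed by that observation, i.e.\ by finding two distinct disconnecting vertices adjacent to $x$ or $y$ in that application. (Your treatment of the singleton case is fine and coincides with the paper's: if $N(x)\setminus(C\cup\tS)=\{z\}$ then $x$ attaches only to the clique $C$ once $z$ is removed, so disconnectedness of $\tG\setminus(\tS\cup\{x,z\})$ forces disconnectedness of $\tG\setminus(\tS\cup\{z\})$.) As it stands, the proposal reduces the lemma to an unproven claim whose verification is essentially the whole proof.
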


These lemmas help us prove that Rule~4 is safe.
\begin{lem}\label{lem:rule-4-safe}\inappendix{}
  Rule~4 is safe.
\end{lem}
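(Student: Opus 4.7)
The plan is to exploit the very restricted structure of Rule~4's application and then assemble the target subgraph via strong $\lambda$-subgraph extension on a carefully chosen auxiliary graph. By \autoref{obs:appl_rul_4} together with \autoref{lem:rul_4_neighborhood}, we may assume Rule~4 has been applied in the special form where exactly one clique $C$ (with $|C|\geq 2$) is deleted, $x$ and $y$ share a unique neighbour $z\notin C\cup\tS$, and $z$ is a cut vertex of $\tG\setminus\tS$. Setting $D:=(\tG\setminus\tS)[C\cup\{x,y,z\}]$, the graph $\tG\setminus\tS$ decomposes at $z$ into $G'\setminus S'$ and $D$ with no other shared vertex or edge; a direct count of vertices and edges then gives $pt(\tG\setminus\tS)=pt(G'\setminus S')+pt(D)$. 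Combining this with \autoref{obs:merge} applied at the cut vertex $z$ and the hypothesis $|E(H')|\geq pt(G'\setminus S')+k'$, the task reduces to producing a spanning subgraph $H_D\in\Pi$ of $D$ with at least $pt(D)+\lambda'$ edges, since $\tk=k'+\lambda'$.

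To build such an $H_D$, first use \autoref{cor:unweighted} on the connected clique $\tG[C]$ to obtain a spanning subgraph $H_C\in\Pi$ with $|E(H_C)|\geq pt(K_{|C|})=\lambda\binom{|C|}{2}+\lambda'(|C|-1)$. Because $xy\notin E(\tG)$ while both $xz,yz\in E(\tG)$, the induced subgraph $D[\{x,y,z\}]$ is a copy of $K_{2,1}$, which lies in $\Pi$ by \autoref{cor:cherry_gain}. Now form the auxiliary unit-weighted graph $\hat G$ on $V(D)$ whose edges are $E(H_C)\cup\{xz,yz\}$ together with all $2|C|$ edges joining $C$ and $\{x,y\}$; by construction $\hat G[C]=H_C\in\Pi$ and $\hat G\setminus C=K_{2,1}\in\Pi$. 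Strong $\lambda$-subgraph extension applied to $\hat G$ with the cut $C\mid\{x,y,z\}$ then yields a subset $F$ of those $2|C|$ cut edges with $|F|\geq 2|C|\lambda$ such that $H_D:=H_C\cup\{xz,yz\}\cup F$ belongs to $\Pi$ and is a subgraph of $D$.

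The main obstacle is verifying that $H_D$ has enough edges. Writing $c:=|C|$, we have $|E(H_D)|\geq\lambda\binom{c}{2}+\lambda'(c-1)+2+2c\lambda$, whereas $pt(D)=\lambda(\binom{c}{2}+2c+2)+\lambda'(c+2)$, and subtracting yields $|E(H_D)|-pt(D)\geq 2-2\lambda-3\lambda'$. Using the identity $\lambda+2\lambda'=1$ that defines $\lambda'$ (equivalently $2\lambda+4\lambda'=2$), this simplifies to exactly $\lambda'$, which is precisely what the reduction in the first paragraph requires. The computation is tight on every term: the contributions from $H_C$, the two pendant edges $xz,yz$, the cut edges ($\geq 2c\lambda$), and the Poljak--Turz\'{\i}k bound $pt(D)$ leave no slack beyond $\lambda'$, and this clean cancellation is not accidental---it is exactly where the defining relation of $\lambda'$ is being used.
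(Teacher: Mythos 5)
Your proof is correct, and it takes a genuinely cleaner route than the paper's at the key step. Both arguments share the same skeleton: invoke \autoref{lem:rul_4_neighborhood} to get the common neighbour $z$, show that the ``deleted side'' admits a subgraph in $\Pi$ beating its \pt{} bound by $\lambda'$, and then merge with $G'\setminus S'$ at the cut vertex $z$ via \autoref{obs:merge}. Where you differ is in how that $\lambda'$ surplus on $D=\tG[C\cup\{x,y,z\}]$ is produced. The paper splits on the parity of $|C|$, pairs up the clique by a perfect matching (via \autoref{lem:enlarge}), and in the even case needs the $K_4$-minus-an-edge gadget (\autoref{lem:almost_K_4_gain_gen}) together with an integrality argument about gaining one or two of the edges at $z$, with a three-way case analysis on $\lambda$. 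You instead take the \pt{} bound on the whole clique $C$ at once (\autoref{cor:unweighted}), keep the $K_{2,1}$ on $\{x,y,z\}$ (\autoref{cor:cherry_gain}), and apply strong $\lambda$-subgraph extension a single time across the full $2|C|$-edge cut between $C$ and $\{x,y\}$; the bookkeeping $2-2\lambda-3\lambda'=\lambda'$ then closes the argument uniformly, with no parity split, no matching, no $K_4-e$ lemma, and no integrality considerations. One remark for completeness: like the paper's even case, your argument is only valid when Rule~4 is applied in the special form guaranteed by \autoref{lem:rul_4_neighborhood}, which presupposes that Rules~1--3 do not apply; this is the same convention the paper itself adopts (its odd case, by using \autoref{lem:join_two_big} across $\delta(C\cup\{x,y\})$, avoids needing $z$ at all, but since the even case needs it anyway, nothing is lost by your uniform use of it). Your auxiliary-graph trick (running the extension property on $H_C\cup\{xz,yz\}$ plus the cut edges rather than on an induced subgraph) is exactly the technique the paper uses inside \autoref{lem:join_two_big}, so it is fully within the framework.
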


The next lemma gives us a bound on the size of the set $S$ which
we compute.
\begin{lem}\label{obs:S_bounded_by_k}\inappendix{}
  Let $\tG$ be a connected graph, $\tS\subseteq V(\tG)$, and
  $\tk\in\mathbb{N}$, and let one application of
  Rule~\ref{rul:block_clique}, \ref{rul:block_not_clique},
  \ref{rul:2_path}, or \ref{rul:discon_2_sep} to $(\tG,\tS,\tk)$
  result in the tuple $(G',S',k')$. Then $|S'\setminus\tS|\leq
  3(\tk-k')/\lambda'$.
\end{lem}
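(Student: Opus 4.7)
The plan is to verify the inequality by a direct case analysis over the four rules, using the explicit updates to $\tS$ and $\tk$ prescribed by each rule. Since $\lambda' = \frac{1}{2}(1-\lambda) > 0$, the right-hand side $3(\tk-k')/\lambda'$ is well-defined and nonnegative whenever $\tk \geq k'$, which will hold in every case.

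For Rule~\ref{rul:block_clique}, the rule sets $S' = \tS$ and $k' = \tk$, so $|S'\setminus\tS| = 0$ and $3(\tk-k')/\lambda' = 0$; the inequality is trivial. For Rule~\ref{rul:2_path}, the rule sets $S' = \tS \cup \{a,b,c\}$ and $k' = \tk - \lambda'$, so $|S'\setminus\tS| = 3$ and $3(\tk-k')/\lambda' = 3\lambda'/\lambda' = 3$, giving equality. For Rule~\ref{rul:discon_2_sep}, the rule sets $S' = \tS \cup \{x,y\}$ and $k' = \tk - \lambda'$, so $|S'\setminus\tS| = 2$ and $3(\tk-k')/\lambda' = 3$; the inequality follows immediately.

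The one case requiring a small observation is Rule~\ref{rul:block_not_clique}, where $S' = \tS \cup \{v\}$ and $k' = \tk - d\lambda'$ with $d$ being the number of deleted clique-components $X_i$. The precondition of Rule~\ref{rul:block_not_clique} explicitly requires that at least one of the components $X_i$ is a clique, so $d \geq 1$. Therefore $|S'\setminus\tS| = 1$ while $3(\tk-k')/\lambda' = 3d \geq 3$, and the desired inequality $1 \leq 3d$ holds.

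Since the inequality is verified in each of the four cases, the lemma follows. There is no real obstacle here; the only point to watch is invoking the precondition $d \geq 1$ in Rule~\ref{rul:block_not_clique} to keep the bound on the parameter drop strictly positive.
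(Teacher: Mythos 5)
Your proof is correct and follows essentially the same route as the paper: a direct case analysis over the four rules, with the only nontrivial point being that Rule~2's precondition forces $d\geq 1$ so that $\tk-k'\geq\lambda'$ (which the paper states in exactly that form). Nothing is missing.
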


% \begin{proof}
%   We distinguish the rule applied. For
%   Rule~\ref{rul:block_clique}, $\tS=S'$ and $\tk=k'$. For
%   Rule~\ref{rul:block_not_clique} we have $|S'\setminus \tS|=1$,
%   while $\tk-k' \geq \lambda'$. Hence $|S'\setminus \tS| \leq
%   (\tk-k')\cdot 1/\lambda'$. Similarly, for Rule~\ref{rul:2_path}
%   we have $|S'\setminus \tS|=3$, $\tk-k' = \lambda'$, and
%   $|S'\setminus \tS| \leq (\tk-k')\cdot 3/\lambda'$. Finally, for
%   Rule~\ref{rul:discon_2_sep} we have $|S'\setminus \tS|=2$ and
%   $\tk-k' = \lambda'$, and $|S'\setminus \tS| \leq (\tk-k')\cdot
%   2/\lambda'$.
% \end{proof}

Now we are ready to prove \autoref{lem:yes_or_structure}, and
thence our main theorem.
\begin{proof}[Proof (of \autoref{lem:yes_or_structure}).]
  Let $(G,k)$ be an input instance of \AbovePT{}, and let
  $(G^{\star},S,k^{\star})$ be the tuple which we get by applying
  the four rules exhaustively to the tuple $(G,\emptyset,k)$. From
  \autoref{lem:rules_poly_time} we know that this can be done in
  polynomial time, and that the resulting graph satisfies
  $|V(G^{\star})\setminus S|\leq 1$. 

  Thus $G^{\star}\setminus S$ is either $K_1$ or the empty graph,
  and so $G^{\star}\setminus S\in\Pi$ and $pt(G^{\star}\setminus
  S)=0, |E(G^{\star}\setminus S)|=0$. Hence if $k^{\star}\leq 0$
  then $(G^{\star}\setminus S,k^{\star})$ is a \YES{} instance of
  \AbovePT{}. Since all the four rules are
  safe---Lemmas~\ref{lem:rule1_safe},~\ref{lem:rule-2-safe},~\ref{lem:rul-3-safe},
  and~\ref{lem:rule-4-safe}---we get that in this case $(G,k)$ is
  a \YES{} instance, and we can return \YES{}. On the other hand
  if $k^{\star}>0$ then we know---using
  \autoref{obs:S_bounded_by_k}---that $|S|<
  3k/\lambda'=6k/(1-\lambda)$, and---from
  \autoref{lem:for_of_cliq}---that $G\setminus S$ is a forest of
  cliques. This completes the proof.
\end{proof}

% Given \autoref{lem:yes_or_structure}, it is easy to prove our main
% theorem.

\begin{proof}[Proof (of Theorem~\ref{thm:main}).]
  From \autoref{lem:yes_or_structure} we know that in polynomial
  time we can either answer \YES{}, or find a set $S$ such that
  $|S|\leq\frac{6}{1-\lambda}k$ and $G\setminus S$ is a forest of
  cliques. In the latter case we have reduced the original problem
  instance to an instance of \simAbovePT{} (See
  \autoref{def:simplicity}). The theorem follows since---by
  assumption---this latter problem is \FPT{}.
\end{proof}

\section{Applications}

In this section we use \autoref{thm:main} to show that \AbovePT{}
is \FPT{} for almost all natural examples of $\lambda$-extendible
properties listed by \PandT{}~\cite{PoljakTurzik1986}. For want of
space, we defer the definitions and all proofs to
\autoref{sec:applications-appendix}.

\subsection{Application to Partitioning Problems}

First we focus on properties specified by a homomorphism to a
vertex transitive graph. As a graph is $h$-colorable if and only
if it has a homomorphism to $K_h$, searching for a maximal
$h$-colorable subgraph is one of the problems resolved in this
section. In particular, a maximum cut equals a maximum bipartite
subgraph and, hence, is also one of the properties studied in this
section. We use $\G$ to denote the class of graphs---oriented or
edge-labelled---to which the property in question belongs.

It is not difficult to see that every vertex-transitive graph $G$
is a regular graph. In particular, if $\G$ allows labels and/or
orientations, then for every label and every orientation each
vertex of a vertex transitive graph is incident to the same number
of edges of the given label and the given orientation.
 
\begin{lem}\label{lem:hom_extendible}\inappendix{}
  Let \(G_{0}\) be a vertex-transitive graph with at least one
  edge of every label and orientation allowed in $\G$. Then the
  property ``to have a homomorphism to \(G_{0}\)'' is strongly
  \(d/n_{0}\)-extendible in $\G$, where \(n_{0}\) is the number of
  vertices of \(G_{0}\) and \(d\) is the minimum number of edges
  of the given label and the given orientation incident to any
  vertex of $G_0$ over all labels and orientations allowed in
  $\G$.
\end{lem}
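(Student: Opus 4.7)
The plan is to verify, for the property $\Pi$ of admitting a homomorphism to $G_0$, each of the three axioms of strong $\lambda$-extendibility with $\lambda=d/n_0$. The central structural fact I will exploit is that vertex-transitivity means the automorphism group $\mathrm{Aut}(G_0)$ (acting on $G_0$ while preserving labels and orientations) acts transitively on $V(G_0)$. Two consequences follow: (i) by the orbit--stabilizer theorem, for a uniformly random $\sigma\in\mathrm{Aut}(G_0)$ and any fixed $w\in V(G_0)$, $\sigma(w)$ is uniformly distributed on $V(G_0)$; and (ii) for every label $\ell$ and orientation $o$ allowed in $\G$, the number $d_{\ell,o}(a)$ of edges at a vertex $a\in V(G_0)$ carrying that label and orientation is the same for every $a$, and is therefore at least $d$.

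First I would dispatch inclusiveness and block additivity. Inclusiveness is immediate: by assumption $G_0$ contains at least one edge of every allowed label and orientation, so any decorated $K_1$ or $K_2$ admits a homomorphism to $G_0$. For block additivity, the forward direction is trivial. For the reverse direction, I would build a global homomorphism by processing blocks along the block tree; to extend a partially constructed homomorphism $\phi$ across a cut vertex $v$ with $\phi(v)=a$ into a new block $B$, take any block homomorphism $\phi_B:V(B)\to V(G_0)$ and compose it with an automorphism $\tau\in\mathrm{Aut}(G_0)$ sending $\phi_B(v)$ to $a$, which exists by transitivity.

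The core of the argument is strong $\lambda$-subgraph extension. Given $S\subseteq V(G)$ with $G[S],G\setminus S\in\Pi$, fix homomorphisms $\phi_S$ on $G[S]$ and $\phi_Y$ on $G\setminus S$, draw $\sigma\in\mathrm{Aut}(G_0)$ uniformly at random, and glue them into $\phi$ equal to $\phi_S$ on $S$ and $\sigma\circ\phi_Y$ on $V(G)\setminus S$. Call $e=uv\in\delta(S)$ \emph{good} if $(\phi(u),\phi(v))$ is an edge of $G_0$ of the same label and orientation as $uv$, and let $F$ be the set of good edges. Conditional on $\phi(u)=a$, consequences (i) and (ii) give $\Pr[e\in F]=d_{\ell,o}(a)/n_0\geq d/n_0=\lambda$, so by linearity of expectation $\mathbb{E}[c(F)]\geq\lambda\,c(\delta(S))$ and some deterministic $\sigma$ achieves this bound. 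For that $\sigma$, $\phi$ is a legitimate homomorphism from $G\setminus(\delta(S)\setminus F)$ to $G_0$, so the latter graph lies in $\Pi$, as required.

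The hardest part is the probabilistic step: getting consequences (i) and (ii) of vertex-transitivity to hold uniformly for plain, oriented, and labelled graphs. The resolution is to interpret ``automorphism'' in the correct category, so that $\mathrm{Aut}(G_0)$ consists of label- and orientation-preserving maps but still acts transitively on $V(G_0)$; the hypothesis that every allowed label and orientation appears in $G_0$ then ensures $d\ge 1$ and that the counts $d_{\ell,o}$ are well defined. Once this is in place, the per-edge bound $d_{\ell,o}(a)/n_0\ge\lambda$ is independent of $a$, and the averaging argument above goes through as stated.
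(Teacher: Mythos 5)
Your proof is correct, and it follows the same overall architecture as the paper: verify inclusiveness directly from the hypothesis that every allowed label/orientation occurs on some edge of $G_0$, prove block additivity by walking the block tree and using vertex-transitivity to re-anchor each block's homomorphism at the cut vertex, and establish strong $\lambda$-subgraph extension by an averaging argument giving each cut edge probability at least $d/n_0$ of being mapped to a compatible edge of $G_0$. The one genuine difference is the randomization device in the last step: the paper fixes a homomorphism $\phi$ of $G\setminus S$ and averages over all ``proper extensions'' of $\phi$, i.e.\ over all homomorphisms of $G[S]$, relying on an auxiliary counting lemma (that the number of homomorphisms of $G[S]$ pinning a fixed vertex to a prescribed target is $\mathsf{hom}(G[S],G_0)/n_0$, independent of the target), whereas you fix homomorphisms on both sides and twist the one on $G\setminus S$ by a uniformly random element of $\mathrm{Aut}(G_0)$, getting the per-edge bound from orbit--stabilizer. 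Your variant is slightly cleaner in that it dispenses with the counting lemma, though the two are close in spirit since that lemma is itself proved by composing with automorphisms; both crucially need, as you note, that automorphisms are taken in the label- and orientation-preserving category and that $G_0$ is simple, so that the number of admissible targets for the cross-edge endpoint is exactly $d_{\ell,o}(a)\ge d$.
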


Note that while the above lemma poses no restrictions on the
graphs considered, we can prove the following only for simple
graphs.
\begin{lem}\label{lem:hom_forest_FPT}\inappendix{}
  If $G_0$ is an unoriented unlabeled graph, then the property ``to
  have a homomorphism into $G_0$'' is \FPT on almost-forests of
  cliques.
\end{lem}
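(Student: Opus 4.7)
The plan is to reduce \simAbovePT{} for the property $\Pi$ of having a homomorphism to the fixed unoriented unlabelled graph $G_0$ on $n_0 := |V(G_0)|$ vertices to a colouring problem. A spanning subgraph $H$ of $G$ lies in $\Pi$ if and only if some map $\phi\colon V(G)\to V(G_0)$ sends every edge of $H$ to an edge of $G_0$; for any fixed $\phi$, the largest such $H$ keeps exactly those edges $\{u,v\}\in E(G)$ with $\{\phi(u),\phi(v)\}\in E(G_0)$. So it suffices to find $\phi$ maximising the number of such \emph{satisfied} edges of $G$.

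First I would enumerate all $n_0^{|S|} = n_0^{O(k)}$ restrictions $\phi|_S$; this enumeration alone supplies the \FPT{} dependence on $k$. Fix one such $\phi|_S$. For each $v\in V(G)\setminus S$ and $c\in V(G_0)$, precompute $w(v,c)$, the number of neighbours $u$ of $v$ in $S$ with $\{\phi(u),c\}\in E(G_0)$. It remains to extend $\phi$ to $F := G\setminus S$, in polynomial time, so as to maximise the total number of satisfied edges inside $F$ plus those between $F$ and $S$.

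To do this I would run a bottom-up dynamic programme on the block-cut tree of each connected component of $F$, rooted at an arbitrary cut vertex $r_0$ (a component that is a single block is handled by additionally guessing one vertex's colour). For every non-root block $B$ with parent cut vertex $r_B\in B$ and every colour $c\in V(G_0)$, define $T_B(c)$ as the maximum contribution, to the overall objective, from satisfied clique edges inside $B$, satisfied edges from $B\setminus\{r_B\}$ to $S$, and satisfied edges in descendant subtrees, given that $\phi(r_B)=c$. The recursion at $B$ colours $U := B\setminus\{r_B\}$: for each $u\in U$ set $\psi_u(c') := w(u,c') + \mathbf{1}[\{c,c'\}\in E(G_0)] + g_u(c')$, where $g_u(c')$ is the already-computed sum $\sum_{B'}T_{B'}(c')$ over child blocks $B'$ of $u$ in the block-cut tree (with $g_u\equiv 0$ if $u$ is a non-cut vertex of $B$). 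The block's total contribution, for a colouring $\phi|_U$, is $\sum_{u\in U}\psi_u(\phi(u)) + E_B(\phi|_U)$, where $E_B(\phi|_U)$ counts pairs $\{u,u'\}\subseteq U$ with $\{\phi(u),\phi(u')\}\in E(G_0)$.

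What makes each block update polynomial is that $E_B(\phi|_U)$ depends only on the colour-multiplicity vector $(n_1,\dots,n_{n_0})$ of $\phi|_U$, via $\sum_{\{i,j\}\in E(G_0),\, i\neq j} n_i n_j + \sum_{i\colon\{i,i\}\in E(G_0)}\binom{n_i}{2}$ (the loop term vanishes for loopless $G_0$). There are only $O(|B|^{n_0-1})$ such vectors, and for each fixed vector the remaining task---assigning the vertices of $U$ to colours with those prescribed multiplicities so as to maximise $\sum_{u\in U}\psi_u(\phi(u))$---is a standard transportation problem, solvable in polynomial time by min-cost flow. Maximising over all multiplicity vectors yields $T_B(c)$; summing $w(r_0,c) + \sum_{B\text{ child of }r_0} T_B(c)$ and optimising first over $c\in V(G_0)$, then over components, then over $\phi|_S$, gives the overall optimum. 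The main obstacle is precisely that blocks of $F$ may be arbitrarily large cliques, so naive enumeration of block colourings fails; the multiplicity-vector-plus-transportation device is what bypasses this. The total running time is $n_0^{O(k)}\cdot n^{O(1)}$, which is \FPT{} for constant $n_0$.
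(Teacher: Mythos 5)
Your proposal is correct and follows essentially the same route as the paper: enumerate the $n_0^{O(k)}$ colourings of $S$, do a bottom-up dynamic programme over the block structure of the forest of cliques keeping a per-vertex, per-colour table, and handle each clique block by enumerating colour-multiplicity vectors (which determine the internal edge count) and solving the residual assignment problem in polynomial time. The only differences are cosmetic: the paper peels leaf cliques iteratively rather than phrasing it as a rooted block-cut-tree DP, and solves the fixed-multiplicity assignment step as a maximum-weight perfect matching in an auxiliary bipartite graph rather than as a transportation/min-cost-flow problem.
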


\autoref{lem:hom_extendible} and \autoref{lem:hom_forest_FPT},
together with \autoref{thm:main} immediately imply the following
corollary.

\begin{corollary}
  The problem APT(``to have a homomorphism into $G_0$'') is
  fixed-parameter tractable for every unoriented unlabeled vertex
  transitive graph $G_0$.
\end{corollary}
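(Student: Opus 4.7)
The plan is to obtain this corollary as a direct combination of the two preceding lemmas together with the main theorem, since all the heavy lifting has already been done in those statements. No new structural or algorithmic ingredient is needed; the only task is to verify that both hypotheses of \autoref{thm:main} are met for the property $\Pi_{G_0}$ of having a homomorphism into $G_0$.

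First I would invoke \autoref{lem:hom_extendible} on the unoriented unlabeled vertex-transitive graph $G_0$. Because we are in the class $\G$ of plain simple graphs (no labels, no orientations), the quantity $d$ in that lemma is just the common vertex degree of $G_0$ (which exists because every vertex-transitive graph is regular). Letting $n_0=|V(G_0)|$, the lemma hands us that $\Pi_{G_0}$ is strongly $\lambda$-extendible with $\lambda=d/n_0$. Provided $G_0$ has at least one edge we have $0<\lambda<1$, which is the range required by \autoref{thm:main}; the edgeless case is degenerate (the only graphs in $\Pi_{G_0}$ are edgeless) and can be handled trivially in polynomial time.

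Next I would apply \autoref{lem:hom_forest_FPT}, which states exactly that $\Pi_{G_0}$ is \FPT{} on almost-forests of cliques when $G_0$ is unoriented and unlabeled. This is the second hypothesis of \autoref{thm:main}. With both hypotheses in hand, \autoref{thm:main} applied to $\Pi=\Pi_{G_0}$ yields that \AbovePT{} for this property is fixed-parameter tractable, which is the statement of the corollary.

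There is essentially no obstacle in this step, since the corollary is packaged precisely so that \autoref{thm:main} can be invoked as a black box once \autoref{lem:hom_extendible} and \autoref{lem:hom_forest_FPT} have been established. The only tiny bookkeeping point worth mentioning explicitly in the write-up is the observation that vertex-transitivity of $G_0$ implies regularity, so that the parameter $d$ in \autoref{lem:hom_extendible} is well-defined and independent of the chosen vertex of $G_0$; this is what makes the value $\lambda=d/n_0$ a single well-defined number and hence places $\Pi_{G_0}$ into the framework of the main theorem.
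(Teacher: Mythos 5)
Your proof is correct and matches the paper's argument exactly: the paper also obtains this corollary by combining \autoref{lem:hom_extendible} (strong $d/n_0$-extendibility) and \autoref{lem:hom_forest_FPT} (\FPT{} on almost-forests of cliques) and then invoking \autoref{thm:main} as a black box. Your added remarks on regularity of $G_0$ and the degenerate edgeless case are harmless bookkeeping that the paper leaves implicit.
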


In particular, by setting $G_0=K_q$ we get the following result.

\begin{corollary}
  Given a graph $G$ with $m$ edges and $n$ vertices and an integer
  $k$, it is FPT to decide, whether $G$ has an $q$-colorable
  subgraph with at least $m \cdot (q-1)/q + (n-1)/(2q) + k$ edges.
\end{corollary}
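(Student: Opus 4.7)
The plan is to derive this as an immediate instantiation of the previous corollary, taking $G_0 = K_q$. First I would verify that $K_q$ satisfies the hypotheses of the previous corollary: it is unoriented, unlabelled, and vertex transitive (its automorphism group is the full symmetric group $S_q$, which acts transitively on vertices). Hence the previous corollary applies, and so APT(``to have a homomorphism into $K_q$'') is fixed-parameter tractable.

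Next I would translate this back into the language of $q$-colorable subgraphs. A graph $H$ has a homomorphism to $K_q$ if and only if $H$ is $q$-colorable (each color class is the preimage of one vertex of $K_q$, and adjacent vertices must map to different vertices of $K_q$). Therefore the spanning subgraph $H \in \Pi$ sought by APT is precisely a spanning $q$-colorable subgraph.

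It remains to identify the value of $\lambda$. Applying \autoref{lem:hom_extendible} with $G_0 = K_q$ gives $n_0 = q$ and $d = q-1$ (since $K_q$ is $(q-1)$-regular), so $\lambda = (q-1)/q$. The corresponding \pt{} bound is
\[
\lambda m + \tfrac{1-\lambda}{2}(n-1) = \tfrac{q-1}{q}\, m + \tfrac{1}{2q}(n-1) = m \cdot (q-1)/q + (n-1)/(2q),
\]
which matches the bound in the statement exactly. Thus the question of whether $G$ has a $q$-colorable subgraph with at least $m(q-1)/q + (n-1)/(2q) + k$ edges is precisely an instance of APT(``to have a homomorphism into $K_q$''), and the conclusion follows from the preceding corollary. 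There is no real obstacle here: the only thing to check carefully is that the numerical value of $\lambda$ produced by \autoref{lem:hom_extendible} for $K_q$ is indeed $(q-1)/q$, which is a one-line computation.
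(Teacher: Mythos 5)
Your proof is correct and follows exactly the paper's route: the paper also obtains this corollary by instantiating the preceding corollary with $G_0=K_q$, identifying homomorphisms to $K_q$ with $q$-colorings, and noting that \autoref{lem:hom_extendible} gives $\lambda=(q-1)/q$, which yields the stated bound.
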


This shows that the \qcol{} problem is \FPT{} when parameterized
above the \PandT{} bound~\cite{PoljakTurzik1986}.

\subsection{Finding Acyclic Subgraphs of Oriented Graphs}

In this section we show how to apply our result to the problem of
finding a maximum-size directed acyclic subgraph of an oriented
graph, where the size of the subgraph is defined as the number of
arcs in the subgraph. Recall that an oriented graph is a directed
graph where between any two vertices there is at most one arc.  We
show that \autoref{thm:main} applies to this problem. To this end
we need the following two lemmata.

\begin{lem}\label{lem:DAG_extend}\inappendix{}
  The property $\Pi:$ ``acyclic oriented graphs'' is
  strongly $1/2$-extendible in the class of oriented graphs.
\end{lem}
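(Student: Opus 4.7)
The plan is to verify the three defining properties of strong $1/2$-extendibility for $\Pi = $ ``acyclic oriented graphs'' in the class $\G$ of oriented graphs.

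First I would dispense with \textbf{inclusiveness}: an oriented graph on one or two vertices has at most one arc, which cannot form a directed cycle (since a cycle needs at least two arcs, and in an oriented graph one cannot have antiparallel arcs). Hence every such graph lies in $\Pi$.

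Next, for \textbf{block additivity}, I would use the standard fact that every (undirected) cycle in $G_S$ is contained in a single block of $G_S$. Any directed cycle in $G$ projects to a closed walk—and in fact to a cycle, since in an oriented graph a vertex sequence with consecutive arcs cannot repeat a vertex without forming a shorter cycle—in $G_S$, and is therefore contained in a single block. Thus $G$ has a directed cycle iff some block of $G$ has a directed cycle, giving block additivity.

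The main content is the \textbf{strong $1/2$-subgraph extension}. Let $S \subseteq V(G)$ with both $G[S]$ and $G \setminus S$ acyclic, and fix any weight function $c : E(G) \to \R^+$. Partition $\delta(S) = F_{\to} \uplus F_{\leftarrow}$, where $F_{\to}$ denotes the set of arcs directed from $S$ to $V(G)\setminus S$ and $F_{\leftarrow}$ those directed the other way; this is a disjoint union because $G$ is oriented. Pick $F \in \{F_{\to}, F_{\leftarrow}\}$ with $c(F) \geq c(\delta(S))/2$, and let $G' = G \setminus (\delta(S)\setminus F)$. I claim $G' \in \Pi$. Any directed cycle $C$ in $G'$ either lies entirely in $G[S]$ (ruled out by acyclicity of $G[S]$), entirely in $G \setminus S$ (ruled out by acyclicity of $G \setminus S$), or crosses the cut. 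In the last case, $C$ must cross between $S$ and $V(G)\setminus S$ the same number of times in each direction; but all crossing arcs of $G'$ lie in $F$ and therefore point the same way, contradiction.

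I do not anticipate any real obstacle: the argument is a clean ``take the larger oriented half of the cut'' construction, and the only subtlety to be careful about is that the hypothesis ``oriented graph'' (no antiparallel arcs) is what ensures $\delta(S) = F_{\to} \uplus F_{\leftarrow}$ and hence that one side already carries at least half the weight.
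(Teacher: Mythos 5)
Your proof is correct and follows essentially the same route as the paper: verify inclusiveness and block additivity directly, then for the extension property split $\delta(S)$ into the two orientation classes across the cut, keep the heavier class, and note that any directed cycle would have to cross the cut in both directions. You are in fact slightly more careful than the paper's own write-up, which phrases the choice in terms of "less than half of the edges" rather than at most half of the weight $c(\delta(S))$ as the definition requires, and which leaves the acyclicity of the resulting graph unargued.
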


% \begin{proof}
%   Obviously, $K_1$ and both orientations of $K_2$ are directed
%   acyclic graphs, hence in $\Pi$. If an oriented graph is acyclic,
%   then clearly each of its blocks is acyclic. On the other hand,
%   each cycle is within one block of a graph, and hence, if every
%   block is acyclic, then the graph itself is acyclic. Finally, if
%   $G$ and $S$ are such that $G[S]$ is acyclic and $G\setminus S$
%   then both the graph formed by removing from $G$ all edges
%   oriented from $S$ to $V(G) \setminus S$ and the one formed by
%   removing edges oriented from $V(G)\setminus S$ to $S$ are
%   acyclic and one of them is removing less than half of the edges
%   between $S$ and $V(G) \setminus S$, finishing the proof.
% \end{proof}

\begin{lem}\label{lem:DAG_forest_FPT}\inappendix{}
  The property ``acyclic oriented graphs'' is \FPT{} on
  almost-forests of cliques.
\end{lem}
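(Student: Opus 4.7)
The plan is to split into two cases according to whether $G\setminus S$ contains a ``large'' clique or not, handling the small-clique case by a dynamic program on the block tree of $G\setminus S$.

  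First I would argue that if $G\setminus S$ contains a clique $C$ on $t\geq C_0\cdot k^{2/3}$ vertices (for a suitable constant $C_0$), the answer is \YES{}. The oriented subgraph $G[V(C)]$ is a tournament, and by Spencer's classical bound it has an acyclic subgraph with at least $\binom{t}{2}/2+c\cdot t^{3/2}$ arcs for some universal constant $c>0$. Combining this with an acyclic subgraph of $G\setminus V(C)$ of size at least $pt(G\setminus V(C))$ (from \autoref{cor:unweighted} applied componentwise), and using strong $\frac{1}{2}$-extendibility (\autoref{lem:DAG_extend}) to retain half the cross arcs, yields an acyclic spanning subgraph of $G$ of size at least $m/2+c\cdot t^{3/2}+(n-t-r)/4$, where $r$ is the number of connected components of $G\setminus V(C)$. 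A quick count gives $r\leq|S|+t=O(k+t)$, so the excess over $pt(G)$ is at least $c\cdot t^{3/2}-O(k+t)$, which exceeds $k$ once $t\geq C_0\cdot k^{2/3}$ for an appropriately large $C_0$.

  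In the complementary case, every clique of $G\setminus S$ has at most $C_0\cdot k^{2/3}$ vertices. I would enumerate all $|S|!=2^{O(k\log k)}$ linear orderings $\sigma$ of $S$ and, for each $\sigma$, compute the maximum size of an acyclic spanning subgraph of $G$ whose topological order extends $\sigma$. Each $v\in V\setminus S$ goes into one of the $|S|+1$ slots determined by $\sigma$, and inside each slot a linear order of its vertices must be chosen. The contributions from arcs in $S$, from arcs between $S$ and $V\setminus S$, and from arcs whose endpoints sit in distinct slots depend only on the slot assignment; arcs whose endpoints share a slot depend additionally on the within-slot order. Since $G\setminus S$ is a forest of cliques, every arc of $G\setminus S$ sits inside a single block, so the within-slot problem decomposes along blocks and in each block reduces to finding a maximum acyclic subgraph of a sub-tournament of size at most $C_0\cdot k^{2/3}$, doable in $2^{O(k^{2/3})}$ time by the standard subset dynamic program.

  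The slot assignment itself I would determine by a dynamic program on the block tree of $G\setminus S$, rooted at an arbitrary block. At each block $B$ with anchor cut vertex $v^{\star}$ the state $f(B,j)$ records the best total for the subtree rooted at $B$ conditioned on $v^{\star}$ being placed in slot $j$; at each cut-vertex node $v$ the value $g(v,j)$ sums the arcs from $v$ to $S$ consistent with slot $j$ together with $f(B',j)$ over the child blocks $B'$ of $v$. The recurrence at $B$ enumerates the $O(k)^{|V(B)|-1}=2^{O(k^{2/3}\log k)}$ slot assignments to $V(B)\setminus\{v^{\star}\}$, invokes the within-slot subset DP in each block, and adds up the $g$-values of the cut-vertex children. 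The main difficulty I foresee is avoiding double counting---in particular of arcs incident to cut vertices shared between two blocks and of arcs from cut vertices to $S$---which I would resolve by fixing an ownership convention that charges each vertex's arcs to $S$ to a unique node of the block tree. The total running time is $2^{O(k\log k)}\cdot n^{O(1)}$, establishing that the property is \FPT{} on almost-forests of cliques.
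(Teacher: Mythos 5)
Your overall strategy is the same as the paper's: use Spencer's bound on tournaments to answer \YES{} when $G\setminus S$ contains a large clique, and otherwise exploit the fact that all cliques are small by enumerating the $|S|!=2^{O(k\log k)}$ linear orders of $S$ and running a dynamic program over the block forest of $G\setminus S$. (The paper's DP eliminates leaf cliques while enumerating orders of $C\cup S$, with the clique-size threshold set to $b_0(k)=O(k)$; your slot-based DP with a per-slot subset DP inside each block is a harmless variant, relying on the facts that every arc of $G\setminus S$ lies inside a single block and that per-block within-slot orders can always be merged into one slot order because distinct blocks share at most one vertex.)

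There is, however, a genuine flaw in your large-clique case: the claim $r\leq |S|+t$ for the number of components of $G\setminus V(C)$ is false. A single vertex of $C$ may be a cut vertex of $G\setminus S$ lying in arbitrarily many other blocks, so deleting $V(C)$ can leave $\Omega(n)$ components even when $S=\emptyset$ (take $C=K_t$ with many triangles pendant at one vertex of $C$). Consequently the quantity you compute, $c\,t^{3/2}-(t+r-1)/4$, is not bounded below by $c\,t^{3/2}-O(k+t)$ and can be negative, so the \YES{} answer does not follow as you argue it. The conclusion is still correct, but it needs a more careful combination: every component of $(G\setminus S)\setminus V(C)$ that avoids $S$ is adjacent to exactly one vertex of $C$ (two attachment vertices would force a cycle whose vertices lie in a block sharing two vertices with the block $C$), so all such components can be absorbed with no loss by applying \autoref{obs:merge} at their attachment vertices, their excess over the \pt{} bound being nonnegative by \autoref{cor:unweighted}; only the at most $|S|=O(k)$ components containing a vertex of $S$ need to be attached via \autoref{lem:join_two_big}, each costing $\lambda'=1/4$. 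This bounds the total loss by $O(k)$, after which your threshold $t=\Theta(k^{2/3})$ (or the paper's simpler choice $t=O(k)$, combined directly through \autoref{lem:join_two_big}) does yield a \YES{} instance.
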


Combining Lemmata~\ref{lem:DAG_extend}
and~\ref{lem:DAG_forest_FPT} with Theorem~\ref{thm:main} we get
the following corollary.
\begin{corollary}\label{cor:DAG_FPT}
  The problem APT(``acyclic oriented graphs'') is fixed-parameter tractable.
\end{corollary}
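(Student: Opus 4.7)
The plan is entirely straightforward: the corollary is a direct instantiation of \autoref{thm:main} with the property $\Pi = $ ``acyclic oriented graphs''. I would simply verify that both hypotheses of \autoref{thm:main} hold for this $\Pi$, and then invoke the theorem.

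More concretely, I would first note that \autoref{lem:DAG_extend} states that $\Pi$ is strongly $1/2$-extendible in the class of oriented graphs, which supplies the first hypothesis of \autoref{thm:main} with $\lambda = 1/2$. Next, \autoref{lem:DAG_forest_FPT} asserts that $\Pi$ is \FPT{} on almost-forests of cliques, which is precisely the second hypothesis (via \autoref{def:simplicity}). Since \autoref{thm:main} explicitly applies to properties of oriented graphs, both hypotheses are met in the appropriate category.

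There is no real obstacle here, since all substantive work is already done in the two lemmata and in \autoref{thm:main}. The only thing worth double-checking is that the two lemmata are formulated in the same ambient class $\G$ (oriented graphs) in which \autoref{thm:main} is invoked; this is the case. Hence APT(``acyclic oriented graphs'') is \FPT{} by a direct appeal to \autoref{thm:main}.
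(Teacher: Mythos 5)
Your proposal matches the paper's own proof exactly: the corollary is obtained by combining \autoref{lem:DAG_extend} (strong $1/2$-extendibility) and \autoref{lem:DAG_forest_FPT} (\FPT{} on almost-forests of cliques) with \autoref{thm:main}, which is stated for properties of oriented graphs. Nothing further is needed.
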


To put this result in some context, we recall a couple of open
problems posed by Raman and Saurabh~\cite{RamanSaurabh2006}: Are
the following questions \FPT{} parameterized by $k$?
\begin{itemize}
 \item Given an oriented directed graph on $n$ vertices and $m$ arcs, does it have
a subset of at least $m/2 + 1/2(\lceil n - 1/2 \rceil) + k$ arcs that induces an acyclic
subgraph?
\item Given a directed graph on $n$ vertices and $m$ arcs, does it have a subset of at
least $m/2 + k$ arcs that induces an acyclic subgraph ?
\end{itemize}

In the first question, a ``more correct'' lower bound is the one
of \PandT{}, i.e., $m/2 + 1/2(n - 1)/2$, and the lower bound is
true only for connected graphs. Corollary~\ref{cor:DAG_FPT}
answers the corrected question. Without the connectivity
requirement, one can show by adding sufficient number of disjoint
oriented 3-cycles that the problem is $\mathsf{NP}$-hard already
for $k=0$.

% \marginpar{Why did we delete the shaded content?}
% \greybox{
% The second question was already shown to be fixed-parameter tractable by Gutin et al.~\cite{GutinEtAl2009,GutinEtAl2011}, who also gave a kernel with $O(k^2)$ vertices.
% Here, we show how to improve the kernel to have $O(k)$ vertices.
% }

For the second question, observe that each maximal acyclic
subgraph contains exactly one arc from every pair of opposite
arcs. Hence we can remove these pairs from the digraph without
changing the relative solution size, as exactly half of the
removed arcs can be added to any solution to the modified
instance. Thus, we can we can restrict ourselves to oriented
graphs.

Now suppose that the oriented graph we are facing is
disconnected. It is easy to check that picking two vertices from
different connected components and identifying them does not
change the solution size, as this way we never create a cycle from
an acyclic graph. After applying this reduction rule exhaustively,
the digraph becomes an oriented connected graph, and the parameter
is unchanged.  But then if $k \leq (n-1)/4$ then $m/2+k \leq
m/2+(n-1)/4$ and we can answer \YES{} due to
\autoref{cor:unweighted-labelled}. Otherwise $n \leq 4k$, we have
a linear vertex kernel, and we can solve the problem by the well
known dynamic programming on the kernel~\cite{RamanS2007}. The
total running time of this algorithm is $O(2^{4k}\cdot k^2 +
m)$. The smallest kernel previously known for this problem is by
Gutin et al., and has a quadratic number of
arcs~\cite{GutinEtAl2011}.

\section{Conclusion and Open Problems}
\label{sec:openproblemsandconclusions}
In this paper we studied a generalization of the graph property of
being bipartite, from the point of view of parameterized
algorithms. We showed that for every \stronglamex{} property $\Pi$
which satisfies an additional ``solvability'' constraint, the
\AbovePT{} problem is \FPT{}. % we can
% solve the problem of finding a large subgraph which satisfies
% $\Pi$ in \FPT{} time, where the parameter $k$ is the surplus over the
% lower bound guaranteed by a theorem of \PandT{}. 
As an
illustration of the usefulness of this result, we obtained \FPT{}
algorithms for the above-guarantee versions of three graph
problems.

Note that for each of the three problems---\maxcut{},\qcol{}, and
\maxdag{}---for which we used \autoref{thm:main} to derive \FPT{}
algorithms for the above-guarantee question, we needed to device a
separate \FPT{} algorithm which works for graphs that are at a
vertex deletion distance of $O(k)$ from forests of cliques. We
leave open the important question of finding a right logic that
captures these examples, and of showing that any problem
expressible in this logic is \FPT{} parameterized by deletion
distance to forests of cliques. We also leave open the
kernelization complexity question for \lamex{}
properties. % This will remove the need for making hand made algorithm
% for each properties.

% \begin{itemize}
% \item Is there a property, which is (strongly) $\lambda$-extendible, but not hereditary?
% \item Is there a property, which is $\lambda$-extendible, but not strongly $\lambda$-extendible?
% \item Is there a way to make the approach work for edge-weighted graphs? Parallel edges? (the difference is that the parallel edges cannot be both taken into the MST)
% \item Is there an algorithm for Lemma~\ref{lem:hom_forest_FPT} without $n_0$ in the exponent?
% \end{itemize}

% \begin{conjecture}
%  There is a property $\Pi$ which is $\lambda$-extendible but not strongly $\lambda$-extendible.
% \end{conjecture}

%\paragraph{Acknowledgment.} We thank xxx for help with the presentation of the results.
 
%\newpage 
%{\small
\bibliographystyle{abbrv}
\bibliography{references}  

\begin{thebibliography}{10}

\bibitem{AlonGutinKimSzeiderYeo2011}
N.~Alon, G.~Gutin, E.~J. Kim, S.~Szeider, and A.~Yeo.
\newblock Solving {M}ax-$r$-{S}at above a tight lower bound.
\newblock {\em Algorithmica}, 61(3):638--655, 2011.

\bibitem{BollobasScott2002}
B.~Bollobas and A.~Scott.
\newblock Better bounds for max cut.
\newblock {\em Contemporary Combinatorics, Bolyayi Society Mathematical
  Studies}, 10:185--246, 2002.

\bibitem{CrowstonFellowsGutinJonesRosamondThomasseYeo2011}
R.~Crowston, M.~R. Fellows, G.~Gutin, M.~Jones, F.~A. Rosamond,
  S.~Thomass{\'e}, and A.~Yeo.
\newblock Simultaneously satisfying linear equations over $\mathbb{F}_2$:
  {MaxLin2} and {Max}-$r$-{Lin2} parameterized above average.
\newblock In {\em FSTTCS 2011}, volume~13 of {\em LIPIcs}, pages 229--240.

\bibitem{CrowstonGutinJonesRamanSaurabh2012}
R.~Crowston, G.~Gutin, M.~Jones, V.~Raman, and S.~Saurabh.
\newblock Parameterized complexity of {MaxSat} above average.
\newblock In {\em LATIN 2012}, volume 7256 of {\em LNCS}, pages 184--194, 2012.

\bibitem{CrowstonEtAl2012}
R.~Crowston, M.~Jones, and M.~Mnich.
\newblock Max-cut parameterized above the {Edwards-Erd\H{o}s} bound.
\newblock In {\em ICALP 2012}, volume 7391 of {\em LNCS}, pages 242--253. 2012.

\bibitem{Diestel}
R.~Diestel.
\newblock {\em Graph Theory}, volume 173 of {\em Graduate Texts in
  Mathematics}.
\newblock Springer, 2010.

\bibitem{DowneyFellows1999}
R.~G. {D}owney and M.~R. {F}ellows.
\newblock {\em {P}arameterized {C}omplexity}.
\newblock Springer, 1999.

\bibitem{Edwards1973}
C.~S. Edwards.
\newblock Some extremal properties of bipartite subgraphs.
\newblock {\em Canadian Journal of Mathematics}, 25:475--483, 1973.

\bibitem{Edwards1975}
C.~S. Edwards.
\newblock An improved lower bound for the number of edges in a largest
  bipartite subgraph.
\newblock In {\em Recent Advances in Graph Theory}, pages 167--181. 1975.

\bibitem{Erdos1965}
P.~Erd\H{o}s.
\newblock On some extremal problems in graph theory.
\newblock {\em Israel Journal of Mathematics}, 3:113--116, 1965.

\bibitem{FlumGroheBook}
J.~{F}lum and M.~{G}rohe.
\newblock {\em {P}arameterized {C}omplexity {T}heory}.
\newblock Springer-Verlag, 2006.

\bibitem{GoemansWilliamson1995}
M.~X. Goemans and D.~P. Williamson.
\newblock Improved approximation algorithms for maximum cut and satisfiability
  problems using semidefinite programming.
\newblock {\em J. ACM}, 42:6, 1995.

\bibitem{GutinEtAl2011}
G.~Gutin, E.~J. Kim, S.~Szeider, and A.~Yeo.
\newblock A probabilistic approach to problems parameterized above or below
  tight bounds.
\newblock {\em J. Comp. Sys. Sci.}, 77(2):422 -- 429, 2011.

\bibitem{Karp1972}
R.~M. {K}arp.
\newblock {R}educibility among combinatorial problems.
\newblock In {\em Complexity of Computer Communications}, pages 85--103, 1972.

\bibitem{KhotKindlerMosselODonnell2007}
S.~Khot, G.~Kindler, E.~Mossel, and R.~O'Donnell.
\newblock Optimal inapproximability results for max-cut and other 2-variable
  {CSP}s?
\newblock {\em SIAM Journal on Computing}, 37(1):319--357, 2007.

\bibitem{MahajanRaman1999}
M.~Mahajan and V.~Raman.
\newblock Parameterizing above guaranteed values: {MaxSat} and {MaxCut}.
\newblock {\em Journal of Algorithms}, 31(2):335--354, 1999.

\bibitem{MahajanRamanSikdar2009}
M.~Mahajan, V.~Raman, and S.~Sikdar.
\newblock Parameterizing above or below guaranteed values.
\newblock {\em Journal of Computer and System Sciences}, 75:137--153, 2009.

\bibitem{PoljakTurzik1986}
S.~Poljak and D.~Turz\'{i}k.
\newblock A polynomial time heuristic for certain subgraph optimization
  problems with guaranteed worst case bound.
\newblock {\em Discrete Mathematics}, 58(1):99--104, 1986.

\bibitem{RamanSaurabh2006}
V.~Raman and S.~Saurabh.
\newblock Parameterized algorithms for feedback set problems and their duals in
  tournaments.
\newblock {\em Theoretical Computer Science}, 351(3):446 -- 458, 2006.

\bibitem{RamanS2007}
V.~Raman and S.~Saurabh.
\newblock Improved fixed parameter tractable algorithms for two “edge”
  problems: {MAXCUT} and {MAXDAG}.
\newblock {\em Information Processing Letters}, 104(2):65--72, 2007.

\bibitem{Spencer1971}
J.~Spencer.
\newblock Optimal ranking of tournaments.
\newblock {\em Networks}, 1(2):135--138, 1971.

\end{thebibliography}
%}

\appendix
\newpage
\section{Deferred Proofs}\label{sec:appendix}

%\textcolor{darkgray}{$\blacktriangleright$}\nobreakspace\textsf{\textbf{{\autoref{lem:rules_poly_time}.}}}
\apphead{lem:rules_poly_time} \appbody{Rules 1 to 4 can be
  exhaustively applied to an instance $(G,k)$ of \AbovePT{} in
  polynomial time. The resulting tuple $(G^{\star},S,k^{\star})$
  has $|V(G^{\star})\setminus S|\leq 1$.}
\begin{proof}
  Let $(\tG,\tS,\tk),(G',S',k')$ be as in the description
  above. It is not difficult to verify that (i) each rule can be
  applied once in polynomial time, (ii) for each application of a
  rule, if $\tG\setminus\tS$ is connected then so also is
  $G'\setminus S'$, and (iii) each rule strictly reduces the size
  of the graph $\tG\setminus\tS$---either by deleting vertices
  from $\tG$, or by adding vertices to $\tS$. Crowston et al. have
  shown~\cite[Lemma~7]{CrowstonEtAl2012} that if $\tG\setminus\tS$
  is a connected graph with at least two vertices, then at least
  one of these four rules apply to the tuple
  $(\tG,\tS,\tk)$. Since none of the conditions for applying a
  reduction rule depends on the value of $\tk$, and since the only
  difference between our set of rules and theirs is the way in
  which $\tk$ is modified, their result implies this lemma.
\end{proof}

\apphead{obs:merge} \appbody{Let $v$ be a cut vertex of a connected
  graph $G$, and let $\mathcal{C}=\{C_{1},C_{2},\ldots C_{r}\};r\geq
  2$ be a family of sets of vertices of $G$ such that for every $i \neq j$ we
  have $C_i \cap C_j =\{v\}$, there is no edge between $C_i
  \setminus \{v\}$ and $C_j \setminus \{v\}$ and $\bigcup_{1 \leq
    i\leq r} C_i=V(G)$.  For $1\leq i\leq r$, let $H_{i}\in\Pi$ be
  a subgraph of $G[C_i]$ with $pt(G[C_{i}])+k_{i}$ edges, and let
  $H=(V(G),\bigcup_{i=1}^{r}E(H_{i}))$. Then $H$ is a subgraph of
  $G$, $H\in\Pi$, and $|E(H)|\geq pt(G)+ \sum_{i=1}^r k_i$.}

\begin{proof}
Since there are no edges between $C_i \setminus \{v\}$ and $C_j \setminus \{v\}$ for $i \neq j$, and $\bigcup_{1 \leq i\leq r}C_i=V(G)$, every edge of $G$ is in exactly one $G[C_i]$.
Therefore, $H$ is a subgraph of $G$. Also as $v$ is a cut vertex in $G$, it is a cut vertex in $H$ and the blocks of $H$ are exactly the blocks of $H_{i}$'s. Since each $H_{i}$ is in $\Pi$ it follows from
  the block additivity property of $\Pi$ that $H\in\Pi$.

  Since $pt(G[C_{i}])=\lambda e_G(C_{i})+\lambda'(|C_{i}|-1)$, we
  get
  \begin{align*}
    |E(H)|& =\sum_{i=1}^{r}|E(H_{i})|=\sum_{i=1}^{r}(pt(G[C_{i}])+k_{i})=\lambda\sum_{i=1}^{r}e_G(C_{i})+\lambda'\sum_{i=1}^{r}|C_{i}-1|+\sum_{i=1}^{r}k_{i}\\
    &
    =\lambda|E(G)|+\lambda'(|V(G)|-1)+\sum_{i=1}^{r}k_{i}=pt(G)+
    \sum_{i=1}^r k_i.
  \end{align*}
\end{proof}

\apphead{lem:join_two_big} \appbody{Let $G$ be a graph, and let
  $S\subseteq V(G)$ be such that there exists a subgraph
  $H_{S}\in\Pi$ of $G[S]$ with at least $pt(G[S])+\lambda'+k_S$
  edges, and a subgraph $\oH\in\Pi$ of $G\setminus S$ with at
  least $pt(G \setminus S)+\lambda'+\ok$ edges.  Then there is a
  subgraph $H\in\Pi$ of $G$ with at least $pt(G)+\lambda'+k_S+\ok$
  edges.}

\begin{proof}
  Let $F=\delta(S)$, and consider the subgraph
  $G'=(V(G),E(H_{S})\cup E(\oH)\cup F)$. Observe that
  $G'[S]=H_{S}\in\Pi$, and $G'\setminus S=\oH\in\Pi$. Thus the
  strong $\lambda$-subgraph extension property of $\Pi$ applies to
  the pair $(G',S)$, and for the weight function which assigns
  unit weights to all edges in $G'$, we get that the graph $G'$
  has a spanning subgraph $H\in\Pi$ which contains all the edges
  in $E(H_{S})\cup E(\oH)$ and at least a $\lambda$-fraction of
  the edges in $F$. Thus
  \begin{align*}
    |E(H)|& \geq |E(H_{S})|+|E(\oH)|+\lambda|F|\\
    & \geq pt(G[S])+\lambda'+k_S+pt(G\setminus S)+\lambda'+\ok+\lambda |F|\\
    & =\lambda(|E(G[S])|+|E(G\setminus S)|+|F|)+\lambda' (|S|+|V(G)\setminus S|)+k_S+\ok\\
    & =\lambda|E(G)|+\lambda'|V(G)|+k_S+\ok=pt(G)+\lambda'+k_S+\ok
  \end{align*}
\end{proof}

% \apphead{lem:enlarge}
% \appbody{Let $G$ be a graph, and let $S\subseteq V(G)$ be such that there
%   exists a subgraph $H_{S}\in\Pi$ of $G[S]$ with at least
%   $pt(G[S])+\lambda'+k_S$ edges, and the subgraph $G\setminus S$
%   has a perfect matching. Then there is a subgraph $H\in\Pi$ of
%   $G$ with at least $pt(G)+\lambda'+k_S$ edges.}
% \begin{proof}
%   Recall that the graph $K_{2}$ is in $\Pi$ by definition, and
%   observe that $pt(K_{2})=\lambda+\lambda'$. Thus $K_{2}$ has
%   $pt(K_{2})+\lambda'$ edges. The corollary now follows by
%   repeated application of Lemma~\ref{lem:join_two_big}, each time
%   considering a new edge of the matching as the graph $\oH$.
% \end{proof}

\apphead{lem:almost_K_4_gain_gen} \appbody{ Let $G$ be a graph
  formed from the graph $K_4$---also with any kind of orientation
  or labelling---by removing one edge. Then (i) $val(G)\geq 3$,
  (ii) $val(G)\geq 4$ if $\lambda>1/3$, and (iii) $val(G)=5$ if
  $\lambda>1/2$. As a consequence,
  \begin{equation}\label{eqn:C}
    val(G)\geq pt(G)+\lambda' +\begin{cases}
      (1-3\lambda)& \text{if }\lambda\leq 1/3,\\
      (2-3\lambda)& \text{if }1/3<\lambda\leq 1/2,\text{ and,}\\
      (3-3\lambda)& \text{if }\lambda>1/2.
    \end{cases}
  \end{equation}}

\begin{proof}
  A spanning tree of $G$ has three edges, and so claim (i) follows
  from \autoref{obs:forest}.

  Let $V(G)= \{x,y,u,v\}$, and let $\{x,y\}\notin E(G)$. Consider
  the vertex subset $S=\{x,v\}$, for which $G[S]=K_{2} \in \Pi$,
  $G \setminus S=G[\{y,u\}]=K_{2}\in\Pi$, and
  $|\delta(S)|=3$. Applying the strong $\lambda$-subgraph extension
  property---for unit edge weights---on the set $S$ we get that
  there exists a subgraph $H'\in\Pi$ of $G$ which has at least
  $2+3\lambda$ edges.  Since $\lambda>1/3\implies 2+3\lambda>3$,
  we get claim (ii).

  Now consider the subgraph $G'=G[\{x,u,v\}]$, and its vertex
  subset $S'=\{x\}$. We apply the strong $\lambda$-subgraph
  extension property---again for unit edge weights---to the pair
  $(G',S')$. Since $G'[S']=K_{1}\in\Pi$, $G'\setminus S'=K_{2}\in
  \Pi$ and $|\delta(S')|=2$, there exists a subgraph $H''\in\Pi$
  of $G'$ which has at least $1 + 2\lambda$ edges. For
  $\lambda>1/2$ this is at least $3$ edges, and so in this case
  $H''=G'$ and $G[\{x,u,v\}] \in \Pi$. Hence we can use the strong
  $\lambda$-subgraph extension property for $G$ and $S=\{y\}$ to
  get a subgraph of $G$ with at least $3+2\lambda$ edges. For
  $\lambda > 1/2$ this means all the five edges of $G$, proving
  claim (iii).

  The second part of the lemma follows from these claims since
  $2\lambda+4\lambda'=2$.
\end{proof}
 
\apphead{lem:rule-2-safe}
\appbody{Rule~2 is safe.}
\begin{proof}
  We reuse the notation of the rule. Let $X_{1},\ldots,X_{d}$ be
  the cliques deleted by the rule, and let $X_{d+1}$ be the
  remaining component (if any) of
  $\tG\setminus(\tS\cup\{v\})$. For $1\le i\le d+1$ let
  $C_{i}=X_{i}\cup\{v\}$. Since $\tG[C_{d+1}]=G'\setminus S'$, by
  assumption we have that $val(\tG[C_{d+1}])=val(G'\setminus S')\geq pt(G'\setminus
  S')+k'=pt(\tG[C_{d+1}])+k'$. As we show below,
  for $1\le i\le d$ we have that $val(\tG[C_{i}])\geq pt(\tG[C_{i}])+\lambda'$.
  Applying \autoref{obs:merge} to the
  graph $\tG\setminus\tS$ and the family
  $\mathcal{C}=\{C_{1},\ldots,C_{d+1}\}$, we get
  $val(\tG\setminus\tS)\geq
  pt(\tG\setminus\tS)+d\lambda'+k'=pt(\tG\setminus\tS)+\tk$, where
  the last equality uses the fact that $k'+d\lambda'=\tk$.

  To complete the proof it is sufficient to show that
  $val(\tG[C_{i}])\geq pt(\tG[C_{i}])+\lambda';1\le i\le d$. Consider a deleted clique
  $X_{i}$. Since (i) $\tG\setminus\tS$ is connected, and (ii)
  Rule~\ref{rul:block_clique} does not apply, it follows that
  there exist $x,y\in X_{i}$ such that $x$ is adjacent to $v$ and $y$
  is not adjacent to $v$ in $\tG\setminus\tS$. We now consider two
  cases.

  If $|X_{i}|$ is even, then consider the vertex subset
  $T=\{v,x,y\}$. The subgraph $\tG[T]=K_{2,1}\in\Pi$
  of $\tG[C_{i}]$ has $pt(\tG[T])+\lambda'+\lambda'$ edges
  (\autoref{cor:cherry_gain}). Since $C_i\setminus
  T=X_{i}\setminus\{x,y\}$ is a clique in $\tG[C_{i}]$ with an even number of vertices, it contains a
  perfect matching. Therefore we get from \autoref{lem:enlarge}
  that the graph $\tG[C_{i}]$ has a subgraph
  $H\in\Pi$ with at least $pt(\tG[C_{i}])+2\lambda'$
  edges.

  If $|X_{i}|$ is odd, then let $T=\{x,v\}$. Now the subgraph
  $\tG[T]=K_{2}\in\Pi$ of $\tG[C_{i}]$ has at least $pt(\tG[T])+\lambda'$
  edges. Also, $C_i\setminus T=X_{i}\setminus\{x\}$
  is a clique in $\tG[C_{i}]$ with an even number
  of vertices and hence has a perfect matching. Once again using
  \autoref{lem:enlarge}, we get that $\tG[C_{i}]$
  has a subgraph $H\in\Pi$ with at least
  $pt(\tG[C_{i}])+\lambda'$ edges.
\end{proof}
\apphead{obs:appl_rul_4} \appbody{ Whenever
  Rule~\ref{rul:discon_2_sep} applies, there is exactly one
  component to be deleted, and this component has at least 2
  vertices.}

\begin{proof}
  Suppose the rule can be applied, and there are at least 2
  components to be deleted. Pick two vertices $u$ and $v$ in two
  such distinct components. If the graph $\tG\setminus (\tS \cup
  \{x\})$ is disconnected, then there is a component $C_x$ of this
  graph which is not connected to $y$. Similarly, if $\tG\setminus
  (\tS \cup \{y\})$ is disconnected, then there is a component
  $C_y$ of this graph which is not connected to $x$. But if either
  of these happens, then since neither $C_x \cup \{y\}$ nor $C_y
  \cup \{x\}$ is a clique, the rule does not apply---a
  contradiction. Hence we get that the graph $\tG\setminus (\tS
  \cup \{x\})$ is connected. But then so is the graph $\tG
  \setminus (\tS \cup \{u,x,v\})$, and as we have $\{u,x\},
  \{x,v\} \in E(\tG)$, $\{u,v\} \notin E(\tG)$,
  Rule~\ref{rul:2_path} applies---a contradiction. So there is
  exactly one component to be deleted. Now if the only component
  to be deleted has only one vertex $v$, then $\tG \setminus (\tS
  \cup \{x,v,y\})$ is connected, we have $\{x,v\}, \{v,y\} \in
  E(\tG)$, $\{x,y\} \notin E(\tG)$, and so Rule~\ref{rul:2_path}
  applies, a contradiction.
\end{proof}

\apphead{lem:crowston-structural-lemma} \appbody{ If none of
  Rules~1,~2, and~3 applies to $(\tG,\tS,\tk)$, and Rule~4
  \emph{does} apply, then one can find
  \begin{itemize}
  \item A vertex $r\in V(\tG\setminus\tS)$ and a set $X\subseteq
    V(\tG\setminus\tS)$ such that $X$ is a connected component of
    $\tG\setminus(\tS\cup\{r\})$, and the graph 
    $(\tG\setminus\tS)[X\cup\{r\}]$ is 2-connected;
    \item Vertices $x,y\in X$ such that $\{x,y\}\notin E(\tG)$ and
      \begin{itemize}
      \item $(\tG\setminus\tS)\setminus\{x,y\}$ has exactly two
        components $G',C$,
      \item $r\in G'$, and $C\cup\{x\},C\cup\{y\}$ are cliques, and,
      \item Each of $x,y$ is adjacent to some vertex in $G'$ 
      \end{itemize}
    \end{itemize}
}

\begin{proof}
  Crowston et al. show~\cite[Lemma~7]{CrowstonEtAl2012} that to
  any connected graph with at least one edge, at least one of
  Rules~1--4 applies. Our lemma corresponds directly to case
  1.(b).iii.C of their case analysis, by setting $a=x,c=y$. For
  the last point, observe that if one of $x,y$ is \emph{not}
  adjacent to any vertex in $G'$, then Rule~2 would apply.
\end{proof}

\apphead{obs:rul_4_in_leaf} \appbody{ Suppose Rules~1,~2, and~3 do
  not apply, and Rule~4 applies. Then we can apply
  Rule~\ref{rul:discon_2_sep} in such a way that if $x,y$ are the
  vertices to be added to $\tS$ and $C$ the clique to be deleted,
  then $N(x) \cup N(y) \setminus (C \cup \tS)$ contains at most
  one vertex $z$ such that $\tG \setminus (\tS \cup \{z\})$ is
  disconnected.}
\begin{proof}
  In the application of Rule~4 we set $x,y,C$ as in
  \autoref{lem:crowston-structural-lemma}. Further, let $r,X$ be
  as in \autoref{lem:crowston-structural-lemma}. Then since
  $x,y\in X$ and $X$ is a connected component of
  $\tG\setminus(\tS\cup\{r\})$, we have that $(N(x)\cup
  N(y))\subseteq X\cup\{r\}$. Since
  $(\tG\setminus\tS)[X\cup\{r\}]$ is 2-connected, it follows that
  $r$ is the only vertex in $X\cup\{r\}$ which could possibly be a
  cut vertex of $(\tG\setminus\tS)$.
\end{proof}

\apphead{lem:rul_4_neigh_partial_cut} \appbody{Whenever
  Rule~\ref{rul:discon_2_sep} applies, with $x,y$ the vertices to
  be added to $\tS$ and $C$ the clique to be deleted, every $u$ in
  $N(x) \setminus (C \cup \tS)$ is a cut vertex in $\tG \setminus
  (\tS \cup \{x\})$ and every $u$ in $N(y) \setminus (C \cup \tS)$
  is a cut vertex in $\tG \setminus (\tS \cup \{y\})$.  }

\begin{proof}
  We only prove the first part, and the second part follows by
  symmetry. Assume that for some $u \in N(x) \setminus (C \cup
  \tS)$ the graph $\tG \setminus (\tS \cup \{x,u\})$ is connected,
  and let $w$ be a vertex of~$C$. Since $|C| \geq 2$, the graph
  $\tG \setminus (\tS \cup \{x,u,w\})$ is also connected and as
  $\{x,u\},\{x,w\} \in E$ and $\{u,w\} \notin E$,
  Rule~\ref{rul:2_path} applies to $\tG \setminus \tS$ --- a
  contradiction. Hence $\tG \setminus (\tS \cup \{x,u\})$ is
  disconnected for every $u \in N(x) \setminus (C \cup \tS)$.
\end{proof}

\apphead{lem:rul_4_neighborhood} \appbody{Suppose Rules~1,~2,
  and~3 do not apply, and Rule~4 applies. Then we can apply
  Rule~\ref{rul:discon_2_sep} in such a way that if $x,y$ are the
  vertices to be added to $\tS$ and $C$ the clique to be deleted,
  then $N(x) \setminus (C \cup \tS)=N(y) \setminus (C \cup
  \tS)=\{z\}$, and $\tG \setminus (\tS \cup \{z\})$ is
  disconnected.}

\begin{proof}
  In the application of Rule~4 we set $x,y,C$ as in
  \autoref{lem:crowston-structural-lemma}. Then
  \autoref{obs:rul_4_in_leaf} applies to this application. Let
  $G'$ be as in \autoref{lem:crowston-structural-lemma}. Then
  $G'=\tG\setminus(\tS\cup C\cup\{x,y\})$, and from the last point
  of \autoref{lem:crowston-structural-lemma} we get that
  $N(x)\setminus(C\cup\tS)\neq\emptyset$ and
  $N(y)\setminus(C\cup\tS)\neq\emptyset$.

  First, observe that if $N(x) \setminus (C \cup \tS)= \{z\}$,
  then $\tG \setminus (\tS \cup \{x,z\})$ is disconnected only if
  $\tG \setminus (\tS \cup \{z\})$ is disconnected and so from
  \autoref{lem:rul_4_neigh_partial_cut} we get that $z$ is a
  cut vertex of $\tG\setminus\tS$. By a similar argument, if
  $N(y)\setminus(C\cup\tS)=\{z\}$ then $z$ is a cut vertex of
  $\tG\setminus\tS$. Now if
  $|N(x)\setminus(C\cup\tS)|=|N(y)\setminus(C\cup\tS)|=1$ and
  $N(x) \setminus (C \cup \tS) \neq N(y) \setminus (C \cup \tS)$,
  then we have two different cut vertices of $\tG \setminus \tS$
  adjacent to vertices $x$ and $y$, contradicting
  \autoref{obs:rul_4_in_leaf}. So if
  $|N(x)\setminus(C\cup\tS)|=|N(y)\setminus(C\cup\tS)|=1$ then
  there is nothing more to prove.

  Next we consider the case $|N(x) \setminus (C \cup \tS)| \geq
  2$. Let
  $Z=N(x)\setminus(C\cup\tS),G_{x}=\tG\setminus(\tS\cup\{x\})$. We
  claim that there exist two vertices $a_{1}\neq a_{2}\in Z$ and
  two vertex subsets $A_{1},A_{2}\subseteq V(G_{x})$ such that (i)
  $A_{1}$ is a connected component of $G_{x}\setminus\{a_{1}\}$,
  (ii) $A_{2}$ is a connected component of
  $G_{x}\setminus\{a_{2}\}$, and (iii) neither $A_{1}$ nor $A_{2}$
  contains a vertex of $Z$. To see this, recall that by
  \autoref{lem:rul_4_neigh_partial_cut} each vertex in $Z$ is a
  cut vertex of $G_{x}$. Hence each vertex in $Z$ is an internal
  node in the block graph $B$ of $G_{x}$. Root the tree $B$ at an
  arbitrary internal node, and mark all the internal nodes which
  are in $Z$. Say that an internal node $u\in Z$ of $B$ is
  \emph{good} if there is at least one subtree $T$ of $B$ rooted
  at a child node of $u$ such that no node of $T$ is
  marked. Consider the operation of repeatedly deleting unmarked
  leaves from $B$. Exhaustively applying this operation results in
  a subtree of $B$ whose leaves are all good nodes in $Z$. Since
  we started with at least two marked nodes, we end up with at
  least two good nodes. Let $a_{1},a_{2}$ be two of these good
  nodes. For $i\in\{1,2\}$, let $A_{i}$ denote the subgraph of
  $G_{x}$ represented by a subtree rooted at some child node of
  $a_{i}$. Then $a_{1},a_{2},A_{1},A_{2}$ satisfy the claim.

  By \autoref{obs:rul_4_in_leaf} at least one of $\{a_1,a_2\}$,
  say $a_1$, is \emph{not} a cut vertex of $\tG\setminus\tS$. Since
  $a_{1}$ is a cut vertex of $G_{x}=\tG\setminus(\tS\cup\{x\})$ and
  $A_{1}$ is a component of $G_{x}\setminus\{a_{1}\}$, we get that
  in the graph $\tG\setminus\tS$ there is an edge from the vertex
  $x$ to some vertex in $A_1$. As $Z \cap A_1 = \emptyset$, this
  implies $A_{1}\cap C\neq\emptyset$, from which it
  follows---since deleting vertex $a_{1}$ does not affect the
  connectedness of $\tG[C\cup\{y\}]$---that $C\cup\{y\}\subseteq
  A_1$.  Then $N(y) \subseteq A_1 \cup \{a_1\}$ and, in
  particular, $y$ is not adjacent to $a_2$. Also, since $A_{1}\cap
  A_{2}=\emptyset$ by construction, we get that
  $A_{2}\cap(C\cup\{y\})=\emptyset$. Since---again by
  construction---$A_{2}\cap Z=\emptyset$, we have that $N(x)\cap
  A_{2}=\emptyset$. From this and from the fact that $A_{2}$ is a
  connected component of $G_{x}\setminus\{a_{2}\}$, we get that
  $A_{2}$ is a connected component of
  $\tG\setminus(\tS\cup\{a_{2}\})$. Thus $a_2$ is a cut vertex of
  $\tG\setminus\tS$ which is adjacent to $x$ and not to $y$.

  If $N(y)\setminus(C\cup\tS)=\{z\}$ then---as shown above--- $z$
  is a cut vertex of $\tG \setminus \tS$ which is adjacent to
  $y$. But then $z$ and $a_{2}$ are two different cut vertices of
  $\tG \setminus \tS$, both adjacent to $x$ or $y$, which
  contradicts \autoref{obs:rul_4_in_leaf}. On the other hand, if
  $|N(y) \setminus (C \cup \tS)| \geq 2$, then we can repeat the
  above argument to get a cut vertex $b_2$ of $\tG \setminus \tS$
  which is adjacent to $y$ and not adjacent to $x$. Hence $b_2\neq
  a_2$ and, again, we get a contradiction with
  \autoref{obs:rul_4_in_leaf}. Therefore, indeed $N(x) \setminus
  (C \cup \tS)= N(y) \setminus (C \cup \tS)= \{z\}$ and $z$ is a
  cut vertex in $\tG \setminus \tS$.
\end{proof}

\apphead{lem:rule-4-safe}
\appbody{Rule~4 is safe.}
\begin{proof}
  We follow the notation used in the rule. We assume---as for all
  safeness proofs---that $val(G'\setminus S')\geq pt(G'\setminus
  S')+k'$, and prove that $val(\tG\setminus\tS)\geq
  pt(\tG\setminus\tS)+\tk$. Recall that for this rule
  $\tk=k'+\lambda'$. By \autoref{obs:appl_rul_4} there is exactly
  one component $C_{i}$ which satisfies condition (2) of the rule
  and which is removed by the rule. Further, $C=V(C_{i})$ is a
  clique with at least $2$ vertices. Let $u,v\in C$.

  If $|C|$ is odd, then consider the vertex subset
  $T=\{x,u,y\}$. The subgraph $\tG[T]=K_{2,1}\in\Pi$ has
  $pt(\tG[T])+\lambda'+\lambda'$ edges
  (\autoref{cor:cherry_gain}). Since $C\setminus\{u\}$ is a clique
  with an even number of vertices, it has a perfect matching. So
  we get from \autoref{lem:enlarge} that the graph
  $\tG[C\cup\{x,y\}]$ has a subgraph $H\in\Pi$ with at least
  $pt(\tG[C\cup\{x,y\}])+\lambda'+\lambda'$ edges. Observe that
  $G'\setminus
  S'=(\tG\setminus\tS)\setminus(C\cup\{x,y\})$. Applying
  \autoref{lem:join_two_big} to the graph $\tG\setminus\tS$ and
  its vertex subset $C\cup\{x,y\}$ we get
  $val(\tG\setminus\tS)\geq
  pt(\tG\setminus\tS)+\lambda'+k'=pt(\tG\setminus\tS)+\tk$, as
  required.

  If $|C|$ is even, then let $T=\{x,y,u,v\}$. The subgraph $\tG[T]$
  is a graph formed from $K_4$ by removing an edge, and
  \autoref{lem:almost_K_4_gain_gen} gives $\lambda$-dependent
  lower bounds for $val(\tG[T])$.
  Applying \autoref{lem:enlarge} to the graph $\tG[C\cup\{x,y\}]$
  and its subgraphs $\tG[T]$ and $\tG[C\setminus\{u,v\}]$---which
  forms a clique on an even number of vertices and thus has a
  perfect matching--- we get the following lower bounds:
  \begin{equation}\label{eqn:B} 
    val(\tG[C\cup\{x,y\}])\geq pt(\tG[C\cup\{x,y\}])+\lambda' +
    \begin{cases}
      (1-3\lambda)& \text{if }\lambda\leq 1/3,\\
      (2-3\lambda)& \text{if }1/3<\lambda\leq 1/2,\text{ and,}\\
      (3-3\lambda)& \text{if }\lambda>1/2.
    \end{cases}
  \end{equation}

  By Lemma~\ref{lem:rul_4_neighborhood} we can assume that there
  is a vertex $z$ in $\tG\setminus\tS$ such that $C\cup\{x,y\}$ is
  a connected component of $\tG\setminus(\tS\cup\{z\})$ and $z$ is
  adjacent to both $x$ and $y$. We now apply the strong
  $\lambda$-subgraph extension property of $\Pi$ to the subgraph
  $\tG[C\cup\{x,y,z\}]$ and the subset $\{z\}$. Since there are
  exactly two edges from $z$ to $\tG[C\cup\{x,y,z\}]$, we gain at
  least $2\lambda$ edges in this process. Note that this implies a
  gain of \emph{both} the edges if $\lambda>1/2$, and at least one
  edge otherwise. From this and using the fact that
  $pt(\tG[C\cup\{x,y,z\}])=pt(\tG[C\cup\{x,y\}])+2\lambda+\lambda'$
  we get from \autoref{eqn:B} that $val(\tG[C\cup\{x,y,z\}])\geq
  pt(\tG[C\cup\{x,y\}])+\lambda'$ for all $0<\lambda<1$.
  
  Applying \autoref{obs:merge} to the graph $\tG\setminus\tS$,
  cut vertex $z$ and vertex subsets $V(\tG)\setminus(\tS\cup
  C\cup\{x,y\}=V(G'\setminus S'))$ and $C\cup\{x,y,z\}$, we get
  $val(\tG\setminus\tS)\geq
  pt(\tG\setminus\tS)+k'+\lambda'=pt(\tG\setminus\tS)+\tk$.
\end{proof}

\apphead{obs:S_bounded_by_k} \appbody{ Let $\tG$ be a connected
  graph, $\tS\subseteq V(\tG)$, and $\tk\in\mathbb{N}$, and let
  one application of Rule~\ref{rul:block_clique},
  \ref{rul:block_not_clique}, \ref{rul:2_path}, or
  \ref{rul:discon_2_sep} to $(\tG,\tS,\tk)$ result in the tuple
  $(G',S',k')$. Then $|S'\setminus\tS|\leq 3(\tk-k')/\lambda'$.  }

\begin{proof}
  We distinguish the rule applied. For
  Rule~\ref{rul:block_clique}, $\tS=S'$ and $\tk=k'$. For
  Rule~\ref{rul:block_not_clique} we have $|S'\setminus \tS|=1$,
  while $\tk-k' \geq \lambda'$. Hence $|S'\setminus \tS| \leq
  (\tk-k')\cdot 1/\lambda'$. Similarly, for Rule~\ref{rul:2_path}
  we have $|S'\setminus \tS|=3$, $\tk-k' = \lambda'$, and
  $|S'\setminus \tS| \leq (\tk-k')\cdot 3/\lambda'$. Finally, for
  Rule~\ref{rul:discon_2_sep} we have $|S'\setminus \tS|=2$ and
  $\tk-k' = \lambda'$, and $|S'\setminus \tS| \leq (\tk-k')\cdot
  2/\lambda'$.
\end{proof}

\section{Applications }\label{sec:applications-appendix}

\begin{defn}[Graph homomorphisms]
  A homomorphism from a graph $G$ to a graph $H$ is a mapping
  $\phi: V(G)\rightarrow V(H)$ such that for each edge $\{u,v\}\in
  E(G)$ the pair $\{\phi(u),\phi(v)\}$ is an edge in $H$, if
  $\{u,v\}$ has a label, then $\{\phi(u),\phi(v)\}$ has the same
  label and if $(u,v)$ is an oriented edge of $G$, then
  $(\phi(u),\phi(v))$ is an oriented edge of $H$.  The set of all
  homomorphisms from $G$ to $H$ will be denoted
  $\mathsf{HOM}(G,H)$, and
  $\mathsf{hom}(G,H)=|\mathsf{HOM}(G,H)|$.
\end{defn}

\begin{defn}[Graph automorphisms and vertex-transitive graphs]
  For a graph $G$, a bijection $\phi:V(G)\rightarrow V(G)$ is an \emph{automorphism} of $G$ if it is a homomorphism from $G$ to itself.
  % for any two vertices $u,v\in V(G)$, the pair $\{\phi(u),
  % \phi(v)\}$ is an edge of $G$ if and only if $\{u,v\}$ is an
  % edge of $G$.
  A graph $G$ is \emph{vertex-transitive} if for any two vertices $u,v\in V(G)$ there is an automorphism $\phi$ of $G$ such that $\phi(u)=v$.
\end{defn}

\apphead{lem:hom_extendible} \appbody{ Let \(G_{0}\) be a
  vertex-transitive graph with at least one edge of every label
  and orientation allowed in $\G$. Then the property ``to have a
  homomorphism to \(G_{0}\)'' is strongly \(d/n_{0}\)-extendible
  in $\G$, where \(n_{0}\) is the number of vertices of \(G_{0}\)
  and \(d\) is the minimum number of edges of the given label and
  the given orientation incident to any vertex of $G_0$ over all
  labels and orientations allowed in $\G$.}
\begin{proof}
  Let \(\mathcal{H} \subseteq \G\) be the set of graphs which have
  a homomorphism to \(G_{0}\). We show that the set
  \(\mathcal{H}\) satisfies all the three requirements for being
  strongly \(d/n_{0}\)-extendible.

  A map which takes the single vertex in \(K_{1}\) to any vertex
  of $G_0$ is a homomorphism from \(K_{1}\) to \(G_{0}\).  Let $G$
  be $K_2$ possibly with some orientation and label and
  \((u_{0},v_{0})\) be an edge in \(G_{0}\) of the same
  orientation and label.  A map which takes the two vertices in
  $G$ to \(u_{0},v_{0}\), respectively, is a homomorphism from $G$
  to \(G_{0}\). Thus both \(K_{1}\) and \(K_{2}\) with all
  orientations and labels are in \(\mathcal{H}\).

  \autoref{lem:hom-block-additive} shows that \(\mathcal{H}\) has
  the block additivity property, and from
  \autoref{lem:hom-subgraph-extension} we get that \(\mathcal{H}\)
  has the strong \(\lambda\)-subgraph extension property for
  \(\lambda=d/n_{0}\).
\end{proof}

\begin{obs}\label{prop:morphism-transitive}
  Let \(G,H\) be two graphs such that there is a homomorphism \(\phi\) from \(G\) to \(H\), and (ii) \(H\) is
  vertex-transitive. Then for any two vertices \(u\in V(G),v\in
  V(H)\), there is a homomorphism \(\varphi\) from \(G\) to \(H\)
  which maps \(u\) to \(v\).
\end{obs}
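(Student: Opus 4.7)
The plan is to prove this by composing the given homomorphism $\phi$ with a suitable automorphism of $H$, exploiting vertex-transitivity. The key observation is that the composition of a homomorphism with an automorphism is again a homomorphism, and we only need to steer the image of one specified vertex.

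First, I would let $w = \phi(u) \in V(H)$ be the image of the distinguished vertex $u$ under the given homomorphism. Since $H$ is vertex-transitive, by definition there exists an automorphism $\sigma : V(H) \to V(H)$ of $H$ such that $\sigma(w) = v$. Then I would set $\varphi := \sigma \circ \phi$, that is, $\varphi(x) = \sigma(\phi(x))$ for every $x \in V(G)$. By construction, $\varphi(u) = \sigma(\phi(u)) = \sigma(w) = v$, which gives the required mapping on $u$.

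It remains to verify that $\varphi$ is indeed a homomorphism from $G$ to $H$, respecting edges, labels, and orientations. For any edge $\{x,y\} \in E(G)$, we know $\{\phi(x),\phi(y)\}$ is an edge of $H$ (with matching label and orientation, if applicable) because $\phi$ is a homomorphism; and then $\{\sigma(\phi(x)),\sigma(\phi(y))\}$ is an edge of $H$ (with the same label and orientation) because $\sigma$ is an automorphism, and automorphisms of graphs in $\G$ by definition preserve edges together with their labels and orientations. Thus $\varphi \in \mathsf{HOM}(G,H)$, completing the proof.

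There is essentially no obstacle here: the statement is a routine ``transport along an automorphism'' argument, and the only subtlety worth mentioning is making explicit that our notion of automorphism (a homomorphism from $H$ to itself that is bijective, hence with inverse also a homomorphism) preserves the extra data of labels and orientations, which is inherited directly from the definition of homomorphism in the paper.
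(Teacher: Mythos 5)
Your proof is correct and matches the paper's argument exactly: compose $\phi$ with an automorphism of $H$ sending $\phi(u)$ to $v$, which exists by vertex-transitivity. The paper's proof is just this composition in one line; your extra verification that the composition remains a homomorphism (preserving labels/orientations) is a harmless elaboration of the same idea.
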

\begin{proof}
  Let \(\phi(u)=x\), and let \(\theta\) be an automorphism of
  \(H\) such that \(\theta(x)=v\). Since \(H\) is
  vertex-transitive, such an automorphism exists. Set
  \(\varphi:=\theta\circ\phi\).
\end{proof}

\begin{lem}\label{lem:hom-block-additive}
  Let \(G_{0}\) be a vertex-transitive graph. Then the property
  ``to have a homomorphism to \(G_{0}\)'' has the block-additivity
  property.
\end{lem}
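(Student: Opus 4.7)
The plan is to prove the two directions of the block-additivity property separately. The forward direction is immediate: if $\phi\colon V(G)\to V(G_0)$ is a homomorphism and $B$ is any block of $G$, then the restriction $\phi|_{V(B)}$ is a homomorphism from $B$ to $G_0$, since every edge (with its orientation/label) of $B$ is an edge of $G$.

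For the converse, I would argue by induction on the number of blocks of $G$, traversing the block-cut tree of $G$. The base case of a single block is trivial by assumption. For the inductive step, let $B$ be a leaf block of the block-cut tree and let $v$ be its unique cut vertex (if $G$ is disconnected, handle each component separately, noting that the claim for a connected graph implies the claim for all graphs, since a homomorphism of each component yields a homomorphism of the whole). Let $G'=G\setminus(V(B)\setminus\{v\})$; then $G'$ has strictly fewer blocks than $G$, and each block of $G'$ is also a block of $G$, so the inductive hypothesis gives a homomorphism $\phi'\colon V(G')\to V(G_0)$. By assumption, $B$ itself admits some homomorphism to $G_0$. Here is where vertex-transitivity of $G_0$ enters: by \autoref{prop:morphism-transitive}, there is a homomorphism $\psi\colon V(B)\to V(G_0)$ that sends $v$ to the specific vertex $\phi'(v)$.

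The key step is then to glue $\phi'$ and $\psi$ into a homomorphism on all of $G$. Define $\phi\colon V(G)\to V(G_0)$ by $\phi(u)=\phi'(u)$ if $u\in V(G')$ and $\phi(u)=\psi(u)$ if $u\in V(B)$; these two definitions agree on $v$, so $\phi$ is well defined. Every edge of $G$ lies either in $G'$ or in $B$ (since $B$ is a block attached to the rest of $G$ only at $v$), and in each case $\phi$ restricts to a homomorphism on that edge, preserving orientation and label. Hence $\phi$ is the desired homomorphism.

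The main obstacle is ensuring that the homomorphism on the leaf block can be made to agree with the already-constructed homomorphism at the shared cut vertex. This is exactly the content of \autoref{prop:morphism-transitive}, which relies on vertex-transitivity of $G_0$; without that, a homomorphism of $B$ might map $v$ to a ``wrong'' vertex of $G_0$, and no amount of rearrangement could fix the mismatch. Once this observation is in hand, the block-cut tree induction goes through without further difficulty.
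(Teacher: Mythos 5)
Your proof is correct and follows essentially the same route as the paper: both build the homomorphism block by block along the block-cut tree, using \autoref{prop:morphism-transitive} (vertex-transitivity of \(G_0\)) to choose, for each block, a homomorphism agreeing with the already-fixed image of its cut vertex; your leaf-block induction is just a bottom-up presentation of the paper's top-down, level-by-level assignment.
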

\begin{proof}
  Let \(\mathcal{H}\) be the set of graphs which have a
  homomorphism to \(G_{0}\). Let \(H\) be a graph in
  \(\mathcal{H}\), and let \(\phi\) be a homomorphism from \(H\)
  to \(G_{0}\). Let \(H_{i}\) be a block of \(H\). Observe that
  any edge \((u,v)\) in \(H_{i}\) is present in \(H\) as well, and
  therefore \((\phi(u),\phi(v))\) is an edge in \(G_{0}\). Thus
  \(\phi\) restricted to \(H_{i}\)---in the natural way---is a
  homomorphism from \(H_{i}\) to \(G_{0}\), and so \(H_{i}\) is in
  \(\mathcal{H}\).

  For the converse, let \(\{H_{i}\,|\,1\le i\le t\}\) be the
  blocks of a graph \(H\), and let each \(H_{i}\) be in
  \(\mathcal{H}\). Then there is a homomorphism from each graph
  \(H_{i}\) to the graph \(G_{0}\). We now show how to
  construct a homomorphism from \(H\) to \(G_{0}\). We
  assume---without loss of generality; see below---that the graph
  \(H\) is connected.

  Recall that the vertex set of the \emph{block graph} \(T_{H}\)
  of \(H\) consists of the blocks and cut vertices of \(H\), and
  that a block \(B\) and a cut vertex \(c\) of \(H\) are adjacent
  in \(T_{H}\) exactly when \(c\) is a vertex in \(B\). We root
  the tree \(T_{H}\) at some (arbitrary) cut vertex \(r\) of
  \(H\). Each level of \(T_{H}\) then consists entirely of either
  cut vertices or blocks. We now define a mapping \(\varphi\) from
  \(H\) to \(G_{0}\) by starting from the root of the block graph
  \(T_{H}\), and going down level by level.

  We set \(\varphi(r)\) to be some arbitrary vertex of
  \(G_{0}\). We now consider each level \(L\) in \(T_{H}\)
  \emph{which consists entirely of blocks}, in increasing order of
  levels. For each block \(H_{i}\) in \(L\), we do the
  following. Let \(c\) be the cut vertex which is the parent of
  \(H_{i}\) in \(T_{H}\). Note that \(\varphi(c)\) has already
  been defined; let \(\varphi(c)=d\).  Let \(\phi_{i}\) be a
  homomorphism from \(H_{i}\) to \(G_{0}\) which maps \(c\) to the
  vertex \(d\); \autoref{prop:morphism-transitive} guarantees that
  such a homomorphism exists. For each vertex \(x\) of \(H_{i}\),
  we set \(\varphi(x)=\phi_{i}(x)\).

  Consider a cut vertex \(v\) of \(H\). The above procedure maps
  \(v\) to some vertex of \(G_{0}\) \emph{exactly once}: If
  \(v=r\), then this mapping is done explicitly at the very
  beginning of the procedure; otherwise, this is done when the
  procedure assigns images for the vertices in the \emph{unique}
  parent block \(H_{i}\) of \(v\). Now consider a vertex \(v\) of
  \(H\) which is \emph{not} a cut vertex. The procedure maps \(v\)
  to some vertex of \(G_{0}\) exactly once, when it assigns images
  to the unique block to which \(v\) belongs. Thus the map
  \(\varphi\) is a function.

  Since no edge of \(H\) appears---by definition---in two
  different blocks of \(H\), and since the mapping for each block
  is a homomorphism to \(G_{0}\), it follows that \(\varphi\) is a
  homomorphism from \(H\) to \(G_{0}\). If \(H\) is not connected,
  then we apply this procedure separately to each connected
  component of \(H\), and this yields a homomorphism from \(H\) to
  \(G_{0}\). This completes the proof.
\end{proof}

\begin{lem}
  \label{lem:hom-transitive-count}
  Let \(G_{0}\) be a vertex-transitive graph, and let \(G\) be any
  graph. For vertices \(u\) in \(G\) and \(x\) in \(G_{0}\), let
  \(\text{HOM}(G,G_{0},u,x)\) denote the set of all homomorphisms
  from \(G\) to \(G_{0}\) which map \(u\) to \(x\), and let
  \(\text{hom}(G,G_{0},u,x)=|\text{HOM}(G,G_{0},u,x)|\). Then for
  any vertex \(u\) in \(G\) and any two vertices \(x_{0},y_{0}\)
  in \(G_{0}\),
  \(\text{hom}(G,G_{0},u,x_{0})=\text{hom}(G,G_{0},u,y_{0})\).
\end{lem}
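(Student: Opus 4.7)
The plan is to exhibit an explicit bijection between $\mathsf{HOM}(G, G_0, u, x_0)$ and $\mathsf{HOM}(G, G_0, u, y_0)$, built out of an automorphism of $G_0$ guaranteed by vertex-transitivity. Since $G_0$ is vertex-transitive, there exists an automorphism $\theta$ of $G_0$ with $\theta(x_0) = y_0$. The natural candidate map is $\Phi_\theta \colon \phi \mapsto \theta \circ \phi$.

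First I would verify that $\Phi_\theta$ is well-defined, i.e.\ maps into $\mathsf{HOM}(G, G_0, u, y_0)$. If $\phi \in \mathsf{HOM}(G, G_0, u, x_0)$, then $\theta \circ \phi$ is a composition of a homomorphism with an automorphism (and hence a homomorphism) from $G$ to $G_0$, and it sends $u$ to $\theta(\phi(u)) = \theta(x_0) = y_0$. In the oriented/labelled setting, the same argument goes through because both $\phi$ and $\theta$ preserve orientations and labels by definition, and hence so does their composition.

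Next I would exhibit an inverse: applying the same construction with the automorphism $\theta^{-1}$ yields a map $\Phi_{\theta^{-1}}\colon \mathsf{HOM}(G, G_0, u, y_0) \to \mathsf{HOM}(G, G_0, u, x_0)$, and $\Phi_{\theta^{-1}} \circ \Phi_\theta$ and $\Phi_\theta \circ \Phi_{\theta^{-1}}$ are both the identity, since composition of maps is associative and $\theta^{-1}\circ\theta = \theta\circ\theta^{-1}$ is the identity on $V(G_0)$. Hence $\Phi_\theta$ is a bijection, and the two sets have the same cardinality, giving $\mathsf{hom}(G,G_0,u,x_0) = \mathsf{hom}(G,G_0,u,y_0)$.

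There is no real obstacle here: the lemma is essentially an observation that the action of the automorphism group of $G_0$ by post-composition partitions $\mathsf{HOM}(G, G_0)$ into orbits of equal size within each fibre over a fixed vertex $u$, and vertex-transitivity says this action is transitive on the image-of-$u$ coordinate. The only minor point to be careful about is to record that in the oriented/labelled variant, automorphisms of $G_0$ are taken to preserve orientations and labels, so composition with $\theta$ genuinely yields a morphism in the class $\G$.
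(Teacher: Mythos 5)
Your proof is correct and follows essentially the same route as the paper: post-composition with an automorphism $\theta$ of $G_0$ taking $x_0$ to $y_0$, with $\theta^{-1}$ providing the reverse map (the paper phrases this as two injections rather than an explicit two-sided inverse, but the content is identical). Your remark about orientations and labels being preserved is a fine addition but not needed beyond what the paper's definitions already give.
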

\begin{proof}
  Let \(\phi\) be an automorphism of \(G_{0}\) which takes
  \(x_{0}\) to \(y_{0}\). % Then the inverse automorphism
  % \(\phi^{-1}\) takes \(y_{0}\) to \(x_{0}\).
  %
  The automorphism \(\phi\) defines a map from
  \(\text{HOM}(G,G_{0},u,x_{0})\) to
  \(\text{HOM}(G,G_{0},u,y_{0})\), which takes
  \(\varphi\in\text{HOM}(G,G_{0},u,x_{0})\) to
  \(\phi\circ\varphi\in\text{HOM}(G,G_{0},u,x_{0})\). This map is
  one-one: if
  \(\varphi_{1},\varphi_{2}\in\text{HOM}(G,G_{0},u,x_{0})\), then
  \(\phi\circ\varphi_{1}=\phi\circ\varphi_{2}\implies\phi^{-1}\circ\phi\circ\varphi_{1}=\phi^{-1}\circ\phi\circ\varphi_{2}\implies\varphi_{1}=\varphi_{2}\). In
  a similar fashion, the inverse automorphism \(\phi^{-1}\)
  defines a one-one map from \(\text{HOM}(G,G_{0},u,y_{0})\) to
  \(\text{HOM}(G,G_{0},u,x_{0})\). It follows that
  \(\text{hom}(G,G_{0},u,x_{0})=\text{hom}(G,G_{0},u,y_{0})\).
\end{proof}

\begin{lem}\label{lem:hom-subgraph-extension} 
  Let $G_0$ be a vertex-transitive graph on $n_0$ vertices and
  \(d\) be the minimum number of edges of the given label and the
  given orientation incident to any vertex of $G_0$ over all
  labels and orientations allowed in $\G$ Then the property $\Pi:
  \textnormal{``to have a homomorphism to}~G_0~\textnormal{''}$
  has the strong $\lambda$-subgraph extension property in $\G$ for
  \(\lambda=d/n_{0}\).
\end{lem}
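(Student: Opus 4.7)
The plan is a straightforward probabilistic argument that exploits the vertex-transitivity of $G_{0}$. Let $\phi_{S}: G[S] \to G_{0}$ and $\phi_{T}: G \setminus S \to G_{0}$ be the homomorphisms guaranteed by $G[S], G\setminus S \in \Pi$. I would sample a uniformly random automorphism $\sigma$ of $G_{0}$, set $\phi_{T}' = \sigma \circ \phi_{T}$, and define $\varphi: V(G) \to V(G_{0})$ by $\varphi|_{S} = \phi_{S}$ and $\varphi|_{V(G) \setminus S} = \phi_{T}'$. Since $\sigma$ is an automorphism of $G_{0}$, $\phi_{T}'$ is itself a homomorphism in $\mathsf{HOM}(G \setminus S, G_{0})$, so $\varphi$ respects every edge of $G[S]$ and every edge of $G \setminus S$, including orientations and labels.

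Next, define $F \subseteq \delta(S)$ to be the set of cut edges whose image under $\varphi$ is an edge of $G_{0}$ of the correct orientation and label. Then by construction $\varphi$ is a homomorphism from $G \setminus (\delta(S) \setminus F)$ into $G_{0}$, witnessing $G \setminus (\delta(S) \setminus F) \in \Pi$. It then suffices to show that some realization of $\sigma$ gives $c(F) \geq \lambda \cdot c(\delta(S))$; this I would do by proving $\mathbb{E}[c(F)] \geq \lambda \cdot c(\delta(S))$ and invoking a standard averaging argument.

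Consider a fixed cut edge $e$ with endpoints $u \in S$ and $v \in V(G) \setminus S$, carrying some orientation and label allowed in $\mathcal{G}$, and let $a = \phi_{S}(u)$, $b = \phi_{T}(v)$. The edge $e$ belongs to $F$ if and only if the pair $(a, \sigma(b))$, oriented and labelled as dictated by $e$, is an edge of $G_{0}$. By vertex-transitivity, the orbit--stabilizer theorem implies that $\sigma(b)$ is uniformly distributed over $V(G_{0})$ when $\sigma$ is uniform over $\mathrm{Aut}(G_{0})$: the stabilizer of $b$ has index $n_{0}$, so each vertex of $G_{0}$ is the image of $b$ under exactly $|\mathrm{Aut}(G_{0})|/n_{0}$ automorphisms. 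By the definition of $d$, at least $d$ vertices $w \in V(G_{0})$ form an edge with $a$ of the required orientation and label. Hence $\Pr[e \in F] \geq d/n_{0} = \lambda$, and linearity of expectation yields $\mathbb{E}[c(F)] \geq \lambda \cdot c(\delta(S))$, so some realization of $\sigma$ produces the desired $F$.

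The only subtle point is the uniformity of the marginal distribution of $\sigma(b)$ over $V(G_{0})$, which rests squarely on vertex-transitivity; the rest of the argument is a clean per-edge probability computation combined with linearity of expectation. If one prefers to avoid $\mathrm{Aut}(G_{0})$ altogether, an alternative is to pick a homomorphism from $G \setminus S$ to $G_{0}$ uniformly at random; \autoref{lem:hom-transitive-count} then provides the same uniform marginal on the image of any single vertex, and the remainder of the argument is identical.
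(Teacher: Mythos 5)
Your proof is correct, and it differs from the paper's in the source of randomness it averages over. The paper fixes a homomorphism $\phi$ of $G\setminus S$ and averages over all ``proper extensions'' of $\phi$, i.e.\ over all of $\mathsf{HOM}(G[S],G_{0})$ on the other side of the cut; the uniform marginal it needs -- that over these extensions the image of a fixed $u\in S$ is equidistributed on $V(G_{0})$ -- is supplied by the counting lemma (\autoref{lem:hom-transitive-count}). You instead fix \emph{both} homomorphisms and randomize by composing the one on $G\setminus S$ with a uniformly random $\sigma\in\mathrm{Aut}(G_{0})$, getting the uniform marginal of $\sigma(b)$ from orbit--stabilizer; from there both arguments are identical (each cut edge survives with probability at least $d/n_{0}$, linearity of expectation, pick a good realization). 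Your route buys a small economy -- it bypasses \autoref{lem:hom-transitive-count} entirely and uses only the definition of vertex-transitivity -- while the paper's version is the one that reuses machinery it needs anyway (the counting lemma also underlies the block-additivity argument's flexibility). Your closing remark is apt: randomizing over $\mathsf{HOM}(G\setminus S,G_{0})$ (or, symmetrically, over $\mathsf{HOM}(G[S],G_{0})$) with \autoref{lem:hom-transitive-count} is exactly the paper's argument, so the two proofs are interchangeable. One point to state explicitly if you write this up: automorphisms, as defined in the paper, preserve labels and orientations, so $\sigma\circ\phi_{T}$ is indeed again a homomorphism in $\G$, and the count of admissible targets $w$ for a cut edge is at least $d$ because by vertex-transitivity every vertex of $G_{0}$ is incident to the same number of edges of each label/orientation, of which $d$ is the minimum.
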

\begin{proof}
  This proof is based on a similar argument by \PandT{}~\cite[Theorem 2]{PoljakTurzik1986}.

  Let $\mathcal{H}$ be the set of graphs which have a homomorphism to $G_0$.
  Let $G$ be a graph, let $c: E(G) \rightarrow \mathbb R_+$ be a weight function, and let $S \subseteq V(G)$ be such that $G[S] \in \mathcal{H}$ and $G \setminus S \in \mathcal{H}$.
  Let $\delta(S)$ denote the set of edges in $G$ which have exactly one end-point in $S$, and let $w$ be the sum of the weights of the edges in $\delta(S)$.
  Let $\phi$ be a homomorphism from $G \setminus S$ to $G_0$.
  Call a mapping $\varphi: V(G) \rightarrow V(G_0)$ a \emph{proper extension} of $\phi$ if $\varphi|_S$ is a homomorphism from $G[S]$ to $G_0$ and $\varphi|_{(V(G) - S)}$ is identical to $\phi$.
  Observe that $\varphi$ need \emph{not} be a homomorphism from $G$ to $G_0$.
  Note that the number of proper extensions of \(\phi\) is equal to $\mathsf{hom}(G[S],G_0)$.

  Consider an edge $\{x,u\}$ of $V(G)\setminus S$ with $u\in S$ and $\phi(x)=x_{0}$.
  There are exactly $\mathsf{hom}(G[S],G_{0})\cdot d/n_{0}$ proper extensions $\varphi$ of $\phi$ such that $\{\varphi(x), \varphi(u)\}$ is an edge of $G_0$ with the same label and orientation: the vertex \(x_{0}\) is incident to exactly \(d\) such edges in \(G_{0}\), and from \autoref{lem:hom-transitive-count} there are exactly $\mathsf{hom}(G[S],G_{0})/n_{0}$ homomorphisms from $G[S]$ to $G_0$ which map the vertex $u$ to any given neighbor of $x_0$.
  From an easy averaging argument, it follows that there exists a proper extension $\varphi$ of $\phi$ and a subset $F\subseteq \delta(S)$ of edges with total weight at least $dw/n_0$ such that $\varphi$ maps each edge in $F$ to an edge in $G_0$ with the same label an orientation.
  This completes the proof.
\end{proof}

\apphead{lem:hom_forest_FPT} \appbody{ If $G_0$ is an unoriented
  unlabeled graph, then the property ``to have a homomorphism into
  $G_0$'' is \FPT on almost-forests of cliques.}
\begin{proof}
  Let $G$ be an unlabeled unoriented graph, $k$ an integer, and
  $S$ a set of vertices of $G$ such that $|S| \leq qk$ and
  $G\setminus S$ is a forest of cliques.  For every mapping
  $\varphi: S \to V(G_0)$ the algorithm proceeds as follows. We
  want to count the number $r_\varphi$ of edges a subgraph of $G$
  homomorphic to $G_0$ with the homomorphism extending $\varphi$
  can have. Denote by $e_{\varphi}(S)$ the number of edges
  $\{u,v\}$ in $E(G(S))$ such $\{\varphi(u), \varphi(v)\}$ is an
  edge of $G_0$.

  We use a table $\tab$ to store for each vertex $v$ of $G
  \setminus S$ and for each vertex $v_0$ of $G_0$ roughly speaking
  how many edges we could get into the constructed subgraph, if
  the vertex $v$ was mapped to the vertex $v_0$. We initialize the
  tables by setting $\tab[v,v_0] = |\{u \in S \mid
  \{\varphi(u),v_0\} \in E(G_0)\}|$, $G'=G \setminus S$ and
  $r_\varphi = e_\varphi(S)$. Our aim is to remove the leaf
  cliques of $G'$ one by one (except possibly for the cut vertex
  also contained in other cliques) as long as the graph $G'$ is
  non-empty. The edges incident to deleted vertices are captured
  either by increasing $r_\varphi$ if the clique was a connected
  component of $G'$ or by updating the table of the cut vertex,
  which separates the clique from the rest of its component.

  Let $C$ be leaf clique of $G'$ and let us first assume that $C$
  forms a connected component of $G'$. Next we guess how many
  vertices of the clique are mapped to individual vertices of
  $G_0$. For a vertex $u \in V(G_0)$ we denote this number
  $n_u$. Thus for every $|V(G_0)|$-tuple of numbers $(n_u)_{u \in
    V(G_0)}$ such that $\sum_{u \in V(G_0)} n_u= |C|$ we continue
  as follows. Based on the numbers $n_u$ we compute the number of
  edges inside $C$ that we get as $\sum_{\{u,v\} \in E(G_0)} n_u
  \cdot n_v$. It remains to maximize the number of edges we get
  between $C$ and $S$. For that purpose consider an auxiliary
  edge-weighted complete bipartite graph $B$ with one partition
  formed by $C$ and the other partition being $|C|$ many vertices,
  out of which $n_u$ are labeled with $u$ for every $u \in
  V(G_0)$. An edge from $v \in C$ to a vertex labeled $u$ is
  assigned the weight $\tab[v,u]$. Now every mapping of vertices
  of $C$ to vertices of $G_0$ corresponds to a perfect matching in
  $B$ and vice versa. Moreover, the number of edges between $C$
  and $S$ that we can keep if we want to turn such a mapping into
  a homomorphism is exactly equal to the weight of the
  corresponding perfect matching.  Hence it is enough to compute
  the maximum weight perfect matching in $B$. It is well known
  that this can be done in time polynomial in the size of $B$
  which is $2|C|$.

  Let us denote $t$ the maximum over all $|V(G_0)|$-tuples of
  numbers $(n_u)_{u \in V(G_0)}$ with $\sum_{u \in V(G_0)} n_u=
  |C|$ of the sum $b+\sum_{\{u,v\} \in E(G_0)} n_u \cdot n_v$,
  where $b$ is the size of the maximum weight perfect matching for
  the graph $B$ as computed for the tuple. The algorithm increases
  $r_{\varphi}$ by $t$ and removes the vertices of $C$ from
  $G'$. If $G'$ is non-empty, it continues with another leaf
  clique.

  Now let $C$ be a leaf clique, which doesn't form a connected
  component of $G'$ and let $v$ be the cut vertex which disconnects
  $C$ from the rest of its component. For every $v_0 \in V(G_0)$
  and for every $|V(G_0)|$-tuple of numbers $(n_u)_{u \in V(G_0)}$
  such that $\sum_{u \in V(G_0)} n_u= |C|$ we want to compute how
  many edges we get, if $v$ is mapped to $v_0$ and $n_u$ vertices
  out of $C \setminus \{v\}$ are mapped to $u$. For that purpose
  we again use an auxiliary bipartite graph, this time with
  $|C|-1$ vertices in each partition.

  Let $t(v_0)$ be the maximum over all $|V(G_0)|$-tuples of
  numbers $(n_u)_{u \in V(G_0)}$ with $\sum_{u \in V(G_0)} n_u=
  |C|-1$ of the sum $b+\sum_{\{u,v\} \in E(G_0)} n_u \cdot n_v +
  \sum_{\{u,v_0\} \in E(G_0)}n_u$, where $b$ is the size of the
  maximum weight perfect matching for the graph $B$ as computed
  for the tuple $(n_u)_{u \in V(G_0)}$ and $v_0$, the second term
  counts the number of edges we got inside $C \setminus \{v\}$ and
  the last one counts the edges between $c$ and $C \setminus
  \{v\}$.  The algorithm increases $\tab[v,v_0]$ by $t(v_0)$ for
  every $v_0 \in V(G_0)$ and removes the vertices of $C \setminus
  \{v\}$ from $G'$. If $G'$ is non-empty, it continues with
  another leaf clique.

  Finally, if $G'$ is empty, then $r_\varphi$ contains the maximum
  number of edges we get for the initial mapping $\varphi$. Then
  the maximum number $r$ of edges in subgraph of $G$ homomorphic
  to $G_0$ is the maximum of $r_\varphi$ computed by the algorithm
  taken over all possible mappings $\varphi:S \to V(G_0)$. It is
  enough to compare $r$ with $d/n_0 \cdot |E(G)| + (n_0 -d)/2n_0
  \cdot (|V(G)|-1) + k$ and answer accordingly.

  It is easy to check that the algorithm is correct and that it
  works in $O(n_0^{|S|} \cdot |G|^{O(n_0)})= O((n_0^q)^k\cdot
  |G|^{O(n_0)})$ time.
\end{proof}

\apphead{lem:DAG_extend} \appbody{ The property $\Pi:$ ``acyclic
  oriented graphs'' is strongly $1/2$-extendible in the class of
  oriented graphs.  }
\begin{proof}
  Obviously, $K_1$ and both orientations of $K_2$ are directed
  acyclic graphs, hence in $\Pi$. If an oriented graph is acyclic,
  then clearly each of its blocks is acyclic. On the other hand,
  each cycle is within one block of a graph, and hence, if every
  block is acyclic, then the graph itself is acyclic. Finally, if
  $G$ and $S$ are such that $G[S]$ is acyclic and $G\setminus S$
  then both the graph formed by removing from $G$ all edges
  oriented from $S$ to $V(G) \setminus S$ and the one formed by
  removing edges oriented from $V(G)\setminus S$ to $S$ are
  acyclic and one of them is removing less than half of the edges
  between $S$ and $V(G) \setminus S$, finishing the proof.
\end{proof}

\apphead{lem:DAG_forest_FPT} \appbody{ The property ``acyclic
  oriented graphs'' is \FPT{} on almost-forests of cliques.  }
\begin{proof}
  Let $G=(V,E)$ be an unlabeled oriented graph, $k$ an integer,
  and $S$ a set of vertices of $G$ such that $|S| \leq qk$ and
  $G\setminus S$ is a forest of cliques. Cliques in this case are
  in fact tournaments. We first show that if any of the
  tournaments in $G \setminus S$ is very big, then we can answer
  yes.

  Spencer~\cite{Spencer1971} showed that any tournament on $n$
  vertices contains a directed acyclic subgraph with at least
  $\binom{n}{2}/2+c\cdot n^{3/2}$ arcs for some absolute positive
  constant $c$, and claimed that he can achieve $c>
  0.15$. Therefore, for every $k$ there is $b_0(k)$ such that
  $c\cdot b_0(k)^{3/2} \geq b_0(k)/4+k$ and every tournament on
  $b$ vertices with $b \geq b_0$ contains a directed acyclic
  subgraph with at least $\binom{b}{2}/2+(b-1)/4+k+1/4$ arcs. Note
  that if $k \geq (5/4c)^2$, then it is enough to take $b_0(k)=k$
  and hence, we have $b_0(k)=O(k)$.

  If $C$ is a set of $b \geq b_0(k)$ vertices inducing a
  tournament in $G \setminus S$, then $G[C]$ has a directed
  acyclic subgraph with $e_G(C)/2+(|C|-1)/4+k+1/4$ arcs,
  $G\setminus C$ has a directed acyclic subgraph with
  $e_G(V\setminus C)/2+(|V|-|C|-1)/4$ arcs by
  \autoref{cor:unweighted-labelled}, and by
  Lemma~\ref{lem:join_two_big}, $G$ has a directed acyclic
  subgraph with $|E|/2+(|V|-1)/4+k$ arcs and we can answer
  yes. Hence, from now on we assume that the largest tournament in
  $G \setminus S$ has size at most $b_0(k)$.

  We want to find a linear order $\prec$ on $V$ which maximizes
  the number of arcs $(u,v) \in E$ such that $u \prec v$. We say
  that such arcs are \emph{along the order}. As for a directed
  acyclic graph there is an order in which all the arcs are along
  the order, the maximum size of a directed acyclic subgraph can
  be found in this way. We proceed similarly to the proof of
  Lemma~\ref{lem:hom_forest_FPT}. For every linear order
  $\lessdot$ on $S$ the algorithm proceeds as follows. We want to
  count the maximum number $r_{\lessdot}$ of arcs that are along
  any $\prec$, where $\prec$ is an extension of $\lessdot$. Denote
  by $e_{\lessdot}(S)$ the number of arcs in $E(G(S))$ that are
  along $\lessdot$.

  We use a table $\tab$ to store for each vertex $v$ of $G
  \setminus S$ and for each extension $\lessdot'$ of $\lessdot$ to
  $S \cup \{v\}$ how many arcs we could get if the constructed
  order $\prec$ extends $\lessdot'$. We initialize the tables by
  setting $\tab[v,\lessdot'] = |\{u \in S \mid ((u,v) \in E \wedge
  u \lessdot' v) \vee ((v,u) \in E \wedge v \lessdot' u)\}|$,
  $G'=G \setminus S$ and $r_\lessdot = e_\lessdot(S)$. Our aim is
  to remove the leaf cliques of $G'$ one by one (except possibly
  for the cut vertex also contained in other cliques) as long as
  the graph $G'$ is non-empty. The arcs incident to deleted
  vertices are captured either by increasing $r_\lessdot$ if the
  clique was a connected component of $G'$ or by updating the
  table of the cut vertex, which separates the clique from the rest
  of its component.

  Let $C$ be leaf clique of $G'$ and let us first assume that $C$
  forms a connected component of $G'$. For every linear order
  $\prec$ on $C \cup S$ extending $\lessdot$ we let
  $t(\prec)=\sum_{v \in C} \tab(v, \prec\rvert_{S \cup \{v\}}) +
  |\{u,v \in C \mid (u,v) \in E \wedge u \prec v\}|$. Here the
  first term counts the arcs got by the placement of each
  individual vertex of $C$ relatively to the vertices of $S$ and
  the second one counts the arcs along the order inside $C$. The
  algorithm increases $r_{\lessdot'}$ by the maximum $t(\prec)$
  over all extensions $\prec$ of $\lessdot$ to $C \cup S$ and
  removes the vertices of $C$ from $G'$. If $G'$ is non-empty, it
  continues with another leaf clique.

  Now let $C$ be a leaf clique, which doesn't form a connected
  component of $G'$ and let $v$ be the cut vertex which disconnects
  $C$ from the rest of its component. For every linear order
  $\prec$ on $C \cup S$ extending $\lessdot$ we let
  $t(\prec)=\sum_{u \in C, u \neq v} \tab(u, \prec\rvert_{S \cup
    \{u\}}) + |\{u,w \in C \mid (u,w) \in E \wedge u \prec
  w\}|$. For every extension $\lessdot'$ of $\lessdot$ to $S \cup
  \{v\}$ we increase $\tab(v, \lessdot')$ by $\max_\prec
  t(\prec)$, where the maximum is taken over all $\prec$ extending
  $\lessdot'$ to $S \cup C$. Then the algorithm removes the
  vertices of $C \setminus \{v\}$ from $G'$ and, if $G'$ is
  non-empty, it continues with another leaf clique.

  Finally, if $G'$ is empty, then $r_\lessdot$ contains the
  maximum number of arcs we get for the initial order
  $\lessdot$. Then the maximum number $r$ of edges in a directed
  acyclic subgraph of $G$ is the maximum of $r_\lessdot$ computed
  by the algorithm taken over all possible linear orders
  $\lessdot$ on $S$. It is enough to compare $r$ with $|E(G)|/2 +
  (|V(G)|-1)/4 + k$ and answer accordingly.

  It is easy to check that the algorithm is correct. As to the
  running time, there are at most $(qk)!= k^{O(k)}$ linear orders
  of $S$. The algorithm tests linear orders for $C \cup S$, but
  since each clique is of size at most $b_0(k) =O(k)$, there are
  also at most $k^{O(k)}$ of these. It takes $O(k^2)$ time to
  process each linear order. The block decomposition of
  $G\setminus S$ can be found in $O(|V|+|E|)$ time and by keeping
  the list of leaves and adding the neighboring block of the
  currently processed leaf to the list if it becomes leaf after
  removal of the current leaf, it takes $O(|V|)$ time to find the
  cliques over the whole run of the algorithm. It follows that the
  algorithm works in $O(k^{O(k)} \cdot |V|+ |E|)$ time.
\end{proof}

\end{document}